\newif\ifdraft\draftfalse
\newif\iffull\fulltrue %
\def\orcidID#1{\kern .08em\href{https://orcid.org/#1}{\includegraphics[keepaspectratio,width=0.9em]{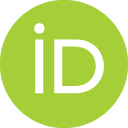}}}
\newline\textbf{BEGIN: AUX-PROOF}\dotfill\newline}
\newline\textbf{END: AUX-PROOF}\dotfill\newline}
\spnewtheorem{notation}[theorem]{Notation}{\bfseries}{\upshape}
\spnewtheorem{mydefinition}[theorem]{Definition}{\bfseries}{\itshape}
\newcommand\myindent{\setlength\parindent{14.5pt}}
\renewcommand{\cref}[1]{\Cref{#1}}
\crefname{theorem}{Thm.}{Theorems}
\crefname{mydefinition}{Def.}{Defs}
\crefname{proposition}{Prop.}{Props}
\crefname{lemma}{Lem.}{Lemmas}
\crefname{proof}{Proof.}{Proofs}
\crefname{appendix}{Appendix}{Appendixes}
\crefname{figure}{Fig.}{Figs}
\Crefname{equation}{}{}
\newcommand{\dr}{\mathbf{r}}
\newcommand{\dl}{\mathbf{l}}
\newcommand{\nfl}[2][]{#2^{#1}_{\dl}}
\newcommand{\nfr}[2][]{#2^{#1}_{\dr}}
\newcommand{\bfn}[2][]{\overline{#2}^{#1}}
\newcommand{\nset}[1]{[#1]}
\newcommand{\Nat}{{\mathbb N }}
\newcommand{\Real}{{\mathbb R }}
\newcommand{\Realp}{{\mathbb R_{\geq 0} }}
\newcommand{\defeq}{:=}
\newcommand{\ocf}{\mathrm{CompMDP}}
\newcommand{\pto}{\rightharpoonup}
\newcommand{\kle}{\simeq}
\newcommand{\kli}{\lesssim}
\newcommand{\klg}{\lesssim}
\newcommand{\fary}[1]{{#1}^{\ast}}
\newcommand{\iary}[1]{{#1}^{\omega}}
\newcommand{\cary}[1]{\fary{#1}\cup\iary{#1}}
\newcommand{\dunit}{d}
\newcommand{\dcounit}{e}
\newcommand{\id}{\mathrm{id}}
\newcommand{\ID}{\mathcal{I}}
\newcommand{\SYM}{\mathcal{S}}
\newcommand{\tr}{\mathrm{tr}}
\newcommand{\trace}[4]{\tr^{#4}_{#1;#2,#3}}
\newcommand{\trmdp}[3]{\trace{#1}{#2}{#3}{}}%
\newcommand{\Int}{\mathrm{Int}}
\newcommand{\pMnd}{\mathcal{P}}
\newcommand{\swap}[3][,]{\sigma_{#2 #1 #3}}
\newcommand{\Sets}{\mathbf{Set}}
\newcommand{\mcMnd}{T^{\mathrm{PR}}}
\newcommand{\kleisli}[2]{\mathit{K}\!\ell(#2)}
\newcommand{\rest}[2]{#1 {\upharpoonright}_{#2}}
\newcommand{\mdp}[1]{\mathcal{#1}}
\newcommand{\mc}[1]{\mathcal{#1}}
\newcommand{\indmc}[2]{\mathrm{MC}(#1, #2)}
\newcommand{\mypath}[2]{\pi^{(#1, #2)}}
\newcommand{\Pathstra}[4]{\mathrm{Path}^{#1}(#3, #4)}
\newcommand{\Prob}{\mathrm{Pr}}
\newcommand{\ProPath}[2]{\Prob^{#1, #2}}
\newcommand{\ProPathc}[1]{\Prob^{#1}}
\newcommand{\Reacha}[3]{\mathrm{RPr}^{#3}(#1, #2)}
\newcommand{\Reachstr}[4]{\mathrm{RPr}^{#1, #2}(#3, #4)}
\newcommand{\Rew}{\mathrm{Rw}}
\newcommand{\PReward}[1]{\Rew^{#1}}
\newcommand{\ETRa}[3]{\mathrm{ERw}^{#3}(#1, #2)}
\newcommand{\ETRstr}[4]{\mathrm{ERw}^{#1,#2}(#3, #4)}
\newcommand{\totalAction}{E}
\newcommand{\place}{\underline{\phantom{n}}\,} %
\newcommand{\seqcomp}{\mathbin{;}}
\newcommand{\semFunctor}{\mathcal{S}}
\newcommand{\semFunctorr}{\semFunctor_{\dr}}
\newcommand{\semFunctorrmc}{\semFunctor^{\mathrm{MC}}_{\dr}}
\newcommand{\oMDP}{\mathbf{oMDP}}
\newcommand{\roMDP}{\mathbf{roMDP}}
\newcommand{\roMC}{\mathbf{roMC}}
\newcommand{\semCat}{\mathbb{S}}
\newcommand{\semCatr}{\semCat_{\dr}}
\newcommand{\semCatrmc}{\semCat^{\mathrm{MC}}_{\dr}}
\DeclareMathOperator{\exits}{exits}
\DeclareMathOperator{\precedent}{prec}
\newcommand{\Estar}[2]{\ensuremath{E^{\star,{#1}}_{#2}}}
\newcommand{\kron}[2]{\ensuremath{\delta_{{#1},{#2}}}}
\newcommand{\myparagraph}[1]{\vspace*{.3em}\noindent\textbf{#1}\;}
\newcommand\asd[1]{{\footnotesize \color[RGB]{105,10,130}[#1 -asd]}}
\newcommand\kzk[1]{{\footnotesize \color{blue}[#1 -kzk]}}
\newcommand\clovis[1]{{\footnotesize \color[RGB]{0,200,30}[#1 -clovis]}}
\newcommand\ichiro[1]{{\footnotesize \color[RGB]{105,10,10}[#1 -ichiro]}}
\newcommand\del[2]{\modi{#1}{#2}{}{}}
\newcommand{\todo}[2][]{}
\newcommand\asd[1]{}
\newcommand\kzk[1]{}
\newcommand\clovis[1]{}
\newcommand\ichiro[1]{}
\newcommand\del[2]{}
\begin{document}
\title{
Compositional Probabilistic Model Checking with String Diagrams of MDPs
\thanks{The authors are supported by ERATO HASUO Metamathematics for Systems Design Project (No.\ JPMJER1603), JST. K.W.\ is supported by the JST grant No.\ JPMJFS2136.
}
}
\author{
Kazuki Watanabe\inst{1,2}\orcidID{0000-0002-4167-3370}\and
Clovis Eberhart\inst{1,3}\orcidID{0000-0003-3009-6747} \and\\
Kazuyuki Asada\inst{4}\orcidID{0000-0001-8782-2119} \and
Ichiro Hasuo\inst{1,2}\orcidID{0000-0002-8300-4650}
}
\authorrunning{
K. Watanabe et al.
}
\institute{
National Institute of Informatics, Tokyo, Japan\\
\email{\{kazukiwatanabe,eberhart,hasuo\}@nii.ac.jp} \and
The Graduate University for Advanced Studies (SOKENDAI), Hayama, Japan \and
Japanese-French Laboratory of Informatics, IRL 3527, CNRS, Tokyo, Japan \and 
Tohoku University, Sendai, Japan
\\
\email{kazuyuki.asada.b6@tohoku.ac.jp}\\
}
\maketitle              %
\begin{abstract}
We present a compositional model checking algorithm for Markov decision processes, in which they are composed in the categorical graphical language of \emph{string diagrams}. The algorithm  computes optimal expected rewards. Our theoretical development of the algorithm is supported by category theory, 
while what we call decomposition equalities for expected rewards act as a key enabler.  Experimental evaluation demonstrates its performance advantages.
 \keywords{ model checking \and compositionality \and Markov decision process \and category theory \and monoidal category \and string diagram}
\end{abstract}

\section{Introduction}
\label{sec:intro}

\emph{Probabilistic model checking} is a topic that attracts both theoretical and practical interest. On the practical side, probabilistic system models can naturally accommodate uncertainties inherent in many real-world systems; moreover, probabilistic model checking can give quantitative answers, enabling more fine-grained assessment than qualitative verification. 
Model checking of Markov decision processes (MDPs)---the target problem of this paper---has additional practical values since it not only verifies a specification but also synthesizes an optimal control strategy. 
On the theoretical side, it is notable that probabilistic model checking has a number of efficient algorithms, despite the challenge that the problem involves continuous quantities (namely probabilities).  See e.g.~\cite{BaierKatoen08}.

However, even those efficient algorithms can struggle when a model is enormous. Models can easily become enormous---the so-called \emph{state-space explosion problem}---due to the growing complexity of modern verification targets. Models that exceed the memory size of a machine for verification are common.

Among possible countermeasures to state-space explosion, one with both mathematical blessings and a proven track record is \emph{compositionality}. It takes as input a model with a compositional structure---where smaller \emph{component} models are combined, sometimes with many layers---and processes the model in a divide-and-conquer manner. In particular, when there is repetition among components, compositional methods can exploit the repetition and reuse intermediate results, leading to a clear performance advantage.

Focusing our attention to MDP model checking, 
 there have been many compositional  methods proposed for various settings. One example is~\cite{KwiatkowskaNPQ13}: 
it studies probabilistic automata (they are only slightly different from MDPs) and in particular their \emph{parallel composition}; the proposed method is a compositional framework, in an assume-guarantee style, based on multi-objective probabilistic model checking. Here, \emph{contracts} among parallel components are not always automatically obtained.  Another example is~\cite{JungesS22}, where the so-called \emph{hierarchical model checking} method for MDPs is introduced. It deals with \emph{sequential composition} rather than parallel composition; assuming what can be called \emph{parametric homogeneity} of components---they must be of the same shape while parameter values may vary---they present a model-checking algorithm that computes a guaranteed interval for the optimal expected reward. 

In this work, inspired by these works and technically building on another recent work of ours~\cite{Watanabe21}, we present another compositional MDP model checking algorithm. We compose MDPs in \emph{string diagrams}---a graphical language of category theory~\cite[Chap.~XI]{MacLane2} that has found applications in computer science~\cite{heunen2019categories,PiedeleuKCS15,BonchiHPSZ19}---that are more sequential than parallel. Our algorithm computes the optimal expected reward, unlike~\cite{JungesS22}. 

One key ingredient of the algorithm is the identification of compositionality as the \emph{preservation of algebraic structures}; more specifically, we identify a compositional solution as a ``homomorphism'' of suitable \emph{monoidal categories}. This identification guided us in our development, explicating requirements of a desired compositional semantic domain (\cref{sec:compositionalMDPs}). 

Another key ingredient is a couple of \emph{decomposition equalities} for reachability probabilities, extended to expected rewards (\cref{sec:compositionalAnalyssiOfOpenMarkovChains}). Those for reachability probabilities are well-known---one of them is \emph{Girard's execution formula}~\cite{girard1989geometry} in linear logic---but our extension to expected rewards seems new. 

The last two key ingredients are combined in \cref{sec:FatSemanticCategories} to formulate a compositional solution. Here we benefit from general categorical constructions, namely the \emph{$\Int$ construction}~\cite{joyal1996} and \emph{change of base}~\cite{Eilenberg65,cruttwell2008normed}.

We implemented the algorithm (it is called $\ocf$) and present its experimental evaluation. Using the benchmarks inspired by real-world problems, we show that 1) $\ocf$ can solve huge models in realistic time (e.g.\ $10^8$ positions, in 6--130 seconds); 2) compositionality does boost performance (in some ablation experiments); and 3) the choice of the degree of compositionality is important. The last is enabled in $\ocf$ by the operator we call \emph{freeze}.

\begin{figure}[tb]
    \centering
     \begin{subfigure}[b]{0.22\columnwidth}
        \centering
        \begin{tikzpicture}[
              innode/.style={draw, rectangle, minimum size=1cm},
              innodemini/.style={draw, rectangle, minimum size=0.5cm},
              interface/.style={inner sep=0},
              innodepos/.style={draw, circle, minimum size=0.2cm},
              ]
              \fill[orange](0.1cm, -0.5cm)--(0.1cm, 1.1cm)--(1.85cm, 1.1cm)--(1.85cm, -0.5cm)--cycle;
              \node[interface] (rdo1) at (0cm, 0cm) {};
              \node[innodepos, fill=white] (pos1) at (1cm, 0cm) {$4$} edge[loop above] node {$0.2$} (pos1);
              \node[interface] (rcdo1) at (2cm, 0cm) {};
              \draw[->] (rdo1) to (pos1);
              \draw[->] (pos1) to node[above] {$0.8$} (rcdo1); 
        \end{tikzpicture}
        \caption{A \emph{task} $\mdp{A}_{i}^{\mathrm{task}}$.}
    \end{subfigure}    
    \hfill
    \begin{subfigure}[b]{0.31\columnwidth}
        \centering
        \begin{tikzpicture}[
              innode/.style={draw, rectangle, minimum size=0.5cm},
              innodemini/.style={draw, rectangle, minimum size=0.5cm},
              interface/.style={inner sep=0},
              innodepos/.style={draw, circle, minimum size=0.2cm},
              ]
              \fill[cyan](0.1cm, -0.5cm)--(0.1cm, 0.5cm)--(3.85cm, 0.5cm)--(3.85cm, -0.5cm)--cycle;
              \node[interface] (rdo1) at (0cm, 0cm) {};
              \node[innode, fill=orange] (pos1) at (1cm, 0cm) {\scalebox{0.8}{$\mdp{A}^{\mathrm{task}}_1$}};
              \node[innode, fill=orange] (pos2) at (3cm, 0cm) {\scalebox{0.8}{$\mdp{A}^{\mathrm{task}}_{m_1}$}};
              \node[interface] (mid1) at (1.8cm, 0cm) {};
              \node[interface] (mid1) at (2cm, 0cm) {$\cdots$};
              \node[interface] (mid2) at (2.2cm, 0cm) {};
              \node[interface] (rcdo1) at (4cm, 0cm) {};
              \draw[->] (rdo1) to (pos1);
              \draw[->] (pos1) to (mid1); 
              \draw[->] (mid2) to (pos2);
              \draw[->] (pos2) to (rcdo1);
        \end{tikzpicture}
        \caption{A \emph{room} $\mdp{A}_{i}^{\mathrm{room}}$ combines tasks.}
    \end{subfigure}
    \hfill
    \begin{subfigure}[b]{0.42\columnwidth}
        \centering
        \begin{tikzpicture}[
              innode/.style={draw, rectangle, minimum size=0.4cm},
              innodemini/.style={draw, rectangle, minimum size=0.5cm},
              interface/.style={inner sep=0},
              innodepos/.style={draw, circle, minimum size=0.2cm},
              ]
              \fill[pink](-0.1cm, -1.8cm)--(-0.1cm, 2cm)--(3.85cm, 2cm)--(3.85cm, -1.8cm)--cycle;
              \node[interface] (rdo1) at (-0.3cm, -1.1cm) {};
              \node[interface] (ldo1) at (-0.3cm, -0.9cm) {};
              \node[innodepos, fill=white] (pos1) at (0.5cm, -1cm) {\scalebox{0.5}{$0$}};
              \node[innode, fill=cyan] (room1) at (1.5cm, -1cm) {\scalebox{0.6}{$\mdp{A}^{\mathrm{room}}_{1}$}};
              \node[innodepos, fill=white] (pos2) at (2.5cm, -1cm) {\scalebox{0.5}{$0$}};
              \node[innodepos, fill=white] (pos3) at (1cm, 0cm) {\scalebox{0.5}{$0$}};
              \node[interface,rotate=70] (mid1) at (1.15cm, 0.5cm) {$\cdots$};
              \node[interface,rotate=70] (mid2) at (1.3cm, 1cm) {$\cdots$};
              \node[innode, fill=cyan] (room2) at (2cm, 0cm) {\scalebox{0.6}{$\mdp{A}^{\mathrm{room}}_{2}$}};
              \node[innodepos, fill=white] (pos4) at (3cm, 0cm) {\scalebox{0.5}{$0$}};
              \node[interface,rotate=70] (mid3) at (3.15cm, 0.5cm) {$\cdots$};
              \node[innodepos, fill=white] (pos5) at (1.5cm, 1.5cm) {\scalebox{0.5}{$0$}};
              \node[innode, fill=cyan] (room3) at (2.5cm, 1.5cm) {\scalebox{0.6}{$\mdp{A}^{\mathrm{room}}_{m_2}$}};
              \node[innodepos, fill=white] (pos6) at (3.5cm, 1.5cm) {\scalebox{0.5}{$0$}};
              \node[interface] (rcdo1) at (4.0cm, 1.4cm) {};
              \node[interface] (lcdo1) at (4.0cm, 1.6cm) {};
              \node[interface] (imid1)  at (0.8cm, 0.5cm) {};
              \node[interface] (imid2)  at (1.4cm, 0.3cm) {};
              \node[interface] (imid3)  at (1.1cm, 1.1cm) {};
              \node[interface] (imid4)  at (1.5cm, 1cm) {};
              \node[interface] (imid5)  at (3.4cm, 0.2cm) {};
              \node[interface] (imid6)  at (3.6cm, 1cm) {};
              \draw[->] (rdo1) to (pos1);
              \draw[<-] (ldo1) to node[above] {\scalebox{0.5}{$\alpha, .9$}} (pos1);
              \draw[->] (pos1) to node[below] {\scalebox{0.5}{$\beta,.8$}} (room1);
              \draw[->] (room1) to (pos2);
              \draw[->] (pos1) to [bend right=70] node[above] {\scalebox{0.5}{$\alpha,.1,\, \beta, .2,\, \gamma,.3$}}(pos2);
              \draw[->] (pos1) to [bend right=30] node[left] {\scalebox{0.5}{$\gamma, .7$}} (pos3);
              \draw[<-] (pos1) to [bend left=30] node[left] {\scalebox{0.5}{$\alpha, .9$}} (pos3);
              \draw[->] (pos3) to [bend right=70] node[above] {\scalebox{0.5}{$\alpha,.1,\, \beta, .2,\, \gamma,.3$}}(pos4);
              \draw[->] (pos3) to node[below] {\scalebox{0.5}{$\beta,.8$}} (room2);
              \draw[->] (room2) to (pos4);
              \draw[->] (pos2) to [bend right=30] (pos4);
              \draw[->] (pos5) to [bend right=70] node[above] {\scalebox{0.5}{$\alpha,.1,\, \beta, .2,\, \gamma,1$}}(pos6);
              \draw[->] (pos5) to node[below] {\scalebox{0.5}{$\beta,.8$}} (room3);
              \draw[->] (room3) to (pos6);
              \draw[->] (pos6) to (rcdo1);
              \draw[->] (lcdo1) to [bend right=30] (pos5);
              \draw[->] (imid1) to (pos3);
              \draw[->] (pos3) to (imid2);
              \draw[->] (pos5) to (imid3);
              \draw[->] (imid4) to (pos5);
              \draw[->] (pos4) to (imid5);
              \draw[->] (imid6) to (pos6);
        \end{tikzpicture}
        \caption{A \emph{floor} $\mdp{A}_{i}^{\mathrm{floor}}$  combines rooms.}
    \end{subfigure}
    \hfill
    \begin{subfigure}[b]{0.42\columnwidth}
        \centering
        \begin{tikzpicture}[
              innode/.style={draw, rectangle, minimum size=0.5cm},
              innodemini/.style={draw, rectangle, minimum size=0.5cm},
              interface/.style={inner sep=0},
              innodepos/.style={draw, circle, minimum size=0.2cm},
              ]
              \fill[lime](0.1cm, -0.5cm)--(0.1cm, 0.8cm)--(3.85cm, 0.8cm)--(3.85cm, -0.5cm)--cycle;
              \node[interface] (rdo1) at (0cm, -0.125cm) {};
              \node[innode, fill=pink] (floor1) at (1cm, 0cm) {\scalebox{0.8}{$\mdp{A}^{\mathrm{floor}}_1$}};
              \node[innode, fill=pink] (floor2) at (3cm, 0cm) {\scalebox{0.8}{$\mdp{A}^{\mathrm{floor}}_{m_3}$}};
              \node[interface] (mid1) at (2cm, 0cm) {$\cdots$};
              \node[interface] (mid2) at (2.2cm, 0cm) {};
              \node[interface] (rcdo1) at (4cm, -0.125cm) {};
              \draw[->] (rdo1) to ($(floor1.west)!0.5!(floor1.south west)$);
              \draw[->] ($(floor1.east)!0.5!(floor1.south east)$) to (1.8cm, -0.125cm);
              \draw[->] (2.2cm, -0.125cm) to ($(floor2.west)!0.5!(floor2.south west)$); 
              \draw[<-] ($(floor1.east)!0.5!(floor1.north east)$) to (1.8cm, 0.125cm);
              \draw[<-] (2.2cm, 0.125cm) to ($(floor2.west)!0.5!(floor2.north west)$); 
              \draw[->] ($(floor2.east)!0.5!(floor2.south east)$) to (rcdo1);
              \draw[-] (0.55cm, 0.725cm) arc [radius=0.3, start angle = 90, end angle=270];
              \draw[<-] (3.45cm, 0.125cm) arc [radius=0.3, start angle = 270, end angle=450];
              \draw[-] (0.55cm, 0.725cm) to (3.45cm, 0.725cm);
        \end{tikzpicture}
        \caption{A \emph{building} $\mdp{A}_{i}^{\mathrm{bldg}}$  combines floors.}
    \end{subfigure}
    \hfill
     \begin{subfigure}[b]{0.42\columnwidth}
        \centering
        \begin{tikzpicture}[
              innode/.style={draw, rectangle, minimum size=0.5cm},
              innodemini/.style={draw, rectangle, minimum size=0.5cm},
              interface/.style={inner sep=0},
              innodepos/.style={draw, circle, minimum size=0.2cm},
              ]
              \node[innode, fill=lime] (bldg1) at (1cm, 0cm) {\scalebox{0.8}{$\mdp{A}^{\mathrm{bldg}}_1$}};
              \node[interface] (mid1) at (2cm, 0cm) {$\cdots$};
              \node[innode, fill=lime] (bldg2) at (3cm, 0cm) {\scalebox{0.8}{$\mdp{A}^{\mathrm{bldg}}_{m_4}$}};
              \draw[->] (0cm, 0cm) to (bldg1);
              \draw[->] (bldg1) to (1.8cm, 0cm);
              \draw[->] (2.2cm, 0cm) to (bldg2);
              \draw[->] (bldg2) to (4cm, 0cm);
        \end{tikzpicture}
        \caption{A \emph{neighborhood} $\mdp{A}^{\mathrm{nbd}}$  combines buildings.}
    \end{subfigure}

\hfill
    \caption{String diagrams of MDPs, an example (the Patrol benchmark in \cref{sec:impl}).}
    \label{fig:string_diagrams_of_mdps}

 \vspace{.5em}
\centering
\begin{math}
   \begin{aligned}
   \scalebox{0.7}{
   \begin{tikzpicture}[
              innode/.style={draw, rectangle, minimum size=1cm},
              innodemini/.style={draw, rectangle, minimum size=0.5cm},
              interface/.style={inner sep=0},
              innodeeve/.style={draw, circle, minimum size=0.2cm},
              innodeadam/.style={draw, rectangle, minimum size=0.2cm},
              ]
              \node[interface] (rdo1) at (0cm, 0.25cm) {};
              \node[interface] (rdo2) at (0cm, 0cm) {};
              \node[interface] (rdo3) at (0cm, -0.25cm) {};
               \node[interface] (rcdo1) at (2cm, 0.25cm) {};
              \node[interface] (rcdo2) at (2cm, 0cm) {};
              \node[interface] (rcdo3) at (2cm, -0.25cm) {};
              \node[innode] (game1) at (1cm, 0cm) {\scalebox{2}{$\mdp{A}$}};
              \draw[->] (rdo1) to ($(game1.north west)!0.5!(game1.west)$);
              \draw[->] (rdo2) to (game1);
              \draw[<-] (rdo3) to ($(game1.south west)!0.5!(game1.west)$);
             \draw[->] ($(game1.north east)!0.5!(game1.east)$) to (rcdo1);
             \draw[<-] (game1) to (rcdo2);
             \draw[<-] ($(game1.south east)!0.5!(game1.east)$) to (rcdo3);
             \node[interface] (seqcomp) at (2.5cm, 0cm) {\scalebox{2}{$\seqcomp$}};
             \node[interface] (rdo1b) at (3cm, 0.25cm) {};
             \node[interface] (rdo2b) at (3cm, 0cm) {};
             \node[interface] (rdo3b) at (3cm, -0.25cm) {};
             \node[interface] (rcdo1b) at (5cm, 0.25cm) {};
             \node[interface] (rcdo2b) at (5cm, -0.25cm) {};
             \node[innode] (game2) at (4cm, 0cm) {\scalebox{2}{$\mdp{B}$}};
              \draw[->] (rdo1b) to ($(game2.north west)!0.5!(game2.west)$);
              \draw[<-] (rdo2b) to (game2);
              \draw[<-] (rdo3b) to ($(game2.south west)!0.5!(game2.west)$);
             \draw[->] ($(game2.north east)!0.5!(game2.east)$) to (rcdo1b);
             \draw[<-] ($(game2.south east)!0.5!(game2.east)$) to (rcdo2b);

             \node[interface] (equal) at (6.5cm, 0cm) {\scalebox{3}{$=$}};

             \node[interface] (rdo1ab) at (8cm, 0.25cm) {};
             \node[interface] (rdo2ab) at (8cm, 0cm) {};
             \node[interface] (rdo3ab) at (8cm, -0.25cm) {};
             \node[innode] (game1ab) at (9cm, 0cm) {\scalebox{2}{$\mdp{A}$}};
             \node[innode] (game2ab) at (11cm, 0cm) {\scalebox{2}{$\mdp{B}$}};
             \node[interface] (rcdo1ab) at (12cm, 0.25cm) {};
             \node[interface] (rcdo2ab) at (12cm, -0.25cm) {};
             \draw[->] (rdo1ab) to ($(game1ab.north west)!0.5!(game1ab.west)$);
             \draw[->] (rdo2ab) to (game1ab);
             \draw[<-] (rdo3ab) to ($(game1ab.south west)!0.5!
             (game1ab.west)$);
             \draw[->] ($(game1ab.north east)!0.5!
             (game1ab.east)$) to ($(game2ab.north west)!0.5!
             (game2ab.west)$);
             \draw[<-] (game1ab) to (game2ab);
             \draw[<-] ($(game1ab.south east)!0.5!
             (game1ab.east)$) to ($(game2ab.south west)!0.5!
             (game2ab.west)$);
             \draw[->] ($(game2ab.north east)!0.5!(game2ab.east)$) to (rcdo1ab);
             \draw[<-] ($(game2ab.south east)!0.5!(game2ab.east)$) to (rcdo2ab);
          \end{tikzpicture}
    }
    \\
    \scalebox{0.7}{
    \begin{tikzpicture}[
              innode/.style={draw, rectangle, minimum size=1cm},
              innodemini/.style={draw, rectangle, minimum size=0.5cm},
              interface/.style={inner sep=0},
              innodeeve/.style={draw, circle, minimum size=0.2cm},
              innodeadam/.style={draw, rectangle, minimum size=0.2cm},
              ]
              \node[interface] (rdo1) at (0cm, 0.25cm) {};
              \node[interface] (rdo2) at (0cm, 0cm) {};
              \node[interface] (rdo3) at (0cm, -0.25cm) {};
               \node[interface] (rcdo1) at (2cm, 0.25cm) {};
              \node[interface] (rcdo2) at (2cm, 0cm) {};
              \node[interface] (rcdo3) at (2cm, -0.25cm) {};
              \node[innode] (game1) at (1cm, 0cm) {\scalebox{2}{$\mdp{A}$}};
              \draw[->] (rdo1) to ($(game1.north west)!0.5!(game1.west)$);
              \draw[->] (rdo2) to (game1);
              \draw[<-] (rdo3) to ($(game1.south west)!0.5!(game1.west)$);
             \draw[->] ($(game1.north east)!0.5!(game1.east)$) to (rcdo1);
             \draw[<-] (game1) to (rcdo2);
             \draw[<-] ($(game1.south east)!0.5!(game1.east)$) to (rcdo3);
             \node[interface] (sum) at (2.5cm, 0cm) {\scalebox{2}{$\oplus$}};
             \node[interface] (rdo1b) at (3cm, 0.25cm) {};
             \node[interface] (rdo2b) at (3cm, 0cm) {};
             \node[interface] (rdo3b) at (3cm, -0.25cm) {};
             \node[interface] (rcdo1b) at (5cm, 0.25cm) {};
             \node[interface] (rcdo2b) at (5cm, -0.25cm) {};
             \node[innode] (game2) at (4cm, 0cm) {\scalebox{2}{$\mdp{B}$}};
              \draw[->] (rdo1b) to ($(game2.north west)!0.5!(game2.west)$);
              \draw[<-] (rdo2b) to (game2);
              \draw[<-] (rdo3b) to ($(game2.south west)!0.5!(game2.west)$);
             \draw[->] ($(game2.north east)!0.5!(game2.east)$) to (rcdo1b);
             \draw[<-] ($(game2.south east)!0.5!(game2.east)$) to (rcdo2b);

             \node[interface] (equal) at (6.5cm, 0cm) {\scalebox{3}{$=$}};

             \node[interface] (rdo1aba) at (8cm, 1.25cm) {};
             \node[interface] (rdo2aba) at (8cm, 1cm) {};
             \node[interface] (rdo3aba) at (8cm, 0.75cm) {};
             \node[interface] (rdo1abb) at (8cm, -0.75cm) {};
             \node[interface] (rdo2abb) at (8cm, -1cm) {};
             \node[interface] (rdo3abb) at (8cm, -1.25cm) {};
             \node[interface] (rcdo1aba) at (12cm, 1.25cm) {};
             \node[interface] (rcdo2aba) at (12cm, 1cm) {};
             \node[interface] (rcdo3aba) at (12cm, 0.75cm) {};
             \node[interface] (rcdo1abb) at (12cm, -0.75cm) {};
             \node[interface] (rcdo2abb) at (12cm, -1cm) {};
             \node[interface] (rcdo3abb) at (12cm, -1.25cm) {};
             \node[innode] (game1ab) at (10cm, 1cm) {\scalebox{2}{$\mdp{A}$}};
             \node[innode] (game2ab) at (10cm, -1cm) {\scalebox{2}{$\mdp{B}$}};
             \draw[->] (rdo1aba) to ($(game1ab.north west)!0.5!(game1ab.west)$);
             \draw[->] (rdo2aba) to (game1ab);
             \draw[->] (rdo3aba) -- (8.5cm, 0.75cm) -- (9cm, -0.75cm) -- ($(game2ab.north west)!0.5!(game2ab.west)$);
             \draw[<-] (rdo1abb) -- (8.5cm, -0.75cm) -- (9cm, -1cm) -- (game2ab);
             \draw[<-] (rdo2abb) -- (8.5cm, -1cm) -- (9cm, -1.25cm) -- ($(game2ab.south west)!0.5!(game2ab.west)$);
             \draw[<-] (rdo3abb) -- (8.5cm, -1.25cm) -- (9cm, 0.75cm) -- ($(game1ab.south west)!0.5!(game1ab.west)$);
             \draw[->] ($(game1ab.north east)!0.5!(game1ab.east)$) to (rcdo1aba);
             \draw[->] ($(game2ab.north east)!0.5!(game2ab.east)$) -- (11cm, -0.75cm) -- (11.5cm, 0.75cm) -- (rcdo3aba);
             \draw[<-] ($(game2ab.south east)!0.5!(game2ab.east)$) -- (11cm, -1.25cm) -- (11.5cm, -0.75cm) -- (rcdo1abb);
             \draw[<-] (game1ab) -- (11.1cm, 1cm) -- (11.6cm, -1cm) -- (rcdo2abb);
             \draw[<-] ($(game1ab.south east)!0.5!(game1ab.east)$) -- (11cm, 0.75cm) -- (11.5cm, -1.25cm) -- (rcdo3abb);
          \end{tikzpicture}
          }\\
          \scalebox{0.7}{
   \begin{tikzpicture}[
              innode/.style={draw, rectangle, minimum size=1cm},
              innodemini/.style={draw, rectangle, minimum size=0.5cm},
              interface/.style={inner sep=0},
              innodeeve/.style={draw, circle, minimum size=0.2cm},
              innodeadam/.style={draw, rectangle, minimum size=0.2cm},
              ]
              \fill[lightgray](1.1cm, -0.6cm)--(1.1cm, 1.2cm)--(3.85cm, 1.2cm)--(3.85cm, -0.6cm)--cycle;
              \node[interface] (rdo1) at (1cm, 0cm) {};
              \node[interface] (ldo1) at (1cm, -0.25cm) {};
              \node[innode,fill=white] (game1) at (2.5cm, 0cm) {\scalebox{2}{$\mdp{A}$}};
              \draw[->] (rdo1) to (game1);
              \draw[<-] (ldo1) to ($(game1.south west)!0.5!(game1.west)$);
              \draw[-] (2.0cm, 1.05cm) arc [radius=0.4, start angle = 90, end angle=270];
              \draw[<-] (3cm, 0.25cm) arc [radius=0.4, start angle = 270, end angle=450];
              \draw[-] (2.0cm, 1.05cm) to (3cm, 1.05cm);
              \draw[->] (3cm, 0cm) to (4cm, 0cm); 
             \node[interface] (equal) at (5.5cm, 0cm) {\scalebox{3}{$=$}};
             \node[interface] (int1) at (7cm, 0cm) {};
             \node[interface] (int2) at (8cm, 0cm) {};
             \node[interface] (int3) at (7cm, -0.25cm) {};
             \node[interface] (int4) at (8cm, -0.25cm) {};
             \draw[<-] (8cm, 1.05cm) arc [radius=0.4, start angle = 90, end angle=270];
             \draw[->] (int1) to (int2);
             \draw[<-] (int3) to (int4);
             \node[interface] (seqcomp) at (8.5cm, 0cm) {\scalebox{2}{$\seqcomp$}};
             \node[interface] (int5) at (9cm, 1.05cm) {};
             \node[interface] (int6) at (9cm, 0.25cm) {};
             \node[interface] (int7) at (9cm, 0cm) {};
             \node[interface] (int8) at (9cm, -0.25cm) {};
             \node[interface] (oplus) at (10cm, 0.75cm) {\scalebox{2}{$\oplus$}};
             \node[interface] (int9) at (11cm, 1.05cm) {};
             \node[interface] (int10) at (11cm, 0.25cm) {};
             \node[interface] (int11) at (11cm, 0cm) {};
             \node[innode] (game1ab) at (10cm, 0cm) {\scalebox{2}{$\mdp{A}$}};
             \draw[->] (int5) to (int9);
             \draw[<-] (int6) to ($(game1ab.north west)!0.5!(game1ab.west)$);
             \draw[->] (int7) to (game1ab);
             \draw[<-] (int8) to ($(game1ab.south west)!0.5!(game1ab.west)$);
             \draw[<-] ($(game1ab.north east)!0.5!(game1ab.east)$) to (int10);
             \draw[->] (game1ab) to (int11);
             \node[interface] (seqcomp) at (11.5cm, 0cm) {\scalebox{2}{$\seqcomp$}};
             \draw[<-] (12cm, 0.25cm) arc [radius=0.4, start angle = 270, end angle=450];
             \draw[->] (12cm, 0cm) to (13cm, 0cm);
          \end{tikzpicture}
    }
  \end{aligned}
\end{math}  
\caption{Sequential composition $\seqcomp$, sum $\oplus$, and loops of MDPs, illustrated.}
   \label{fig:seqCompOplusIllustrated}

\vspace{-2em}
\end{figure}

\myparagraph{Compositional Description of MDPs by String Diagrams}
The calculus we use for composing MDPs is that of \emph{string diagrams}.  \cref{fig:string_diagrams_of_mdps} shows an example used in experiments.
String diagrams offer two basic composition operations, \emph{sequential composition} $\seqcomp$ and \emph{sum} $\oplus$, illustrated in \cref{fig:seqCompOplusIllustrated}.
The rearrangement of wires in $\mdp{A}\oplus\mdp{B}$ is for bundling up wires of the same direction. It is not essential.

We note that \emph{loops} in MDPs can be described using these algebraic operations, as shown in \cref{fig:seqCompOplusIllustrated}.
We extend MDPs with open ends so that they allow such composition; they are  called \emph{open MDPs}. 

The formalism of string diagrams
originates from category theory, specifically from the theory of \emph{monoidal categories} (see e.g.~\cite[Chap.~XI]{MacLane2}). Capturing the mathematical essence of the algebraic structure of arrow composition $\circ$ and tensor product $\otimes$---they correspond to $;$ and $\oplus$ in this work, respectively---monoidal categories and string diagrams have found their application in a wide variety of scientific disciplines, such as  quantum field theory~\cite{khovanov2002functor}, quantum mechanics and computation~\cite{heunen2019categories}, linguistics~\cite{PiedeleuKCS15}, signal flow diagrams~\cite{BonchiHPSZ19}, and so on.

Our reason for using string diagrams to compose MDPs is twofold. Firstly, string diagrams 
offer a rich metatheory---developed over the years together with its various applications---that we can readily exploit. Specifically, the theory covers \emph{functors}, which are (structure-preserving) homomorphisms between monoidal categories. We introduce a \emph{solution functor} $\semFunctor\colon \oMDP\to\semCat$ from a category $\oMDP$ of open MDPs to a semantic category $\mathbb{S}$ that consists of solutions. We show that the  functor $\semFunctor$ preserves two composition operations, that is, 
\begin{equation}\label{eq:introCompSol}\small
 \semFunctor(\mdp{A}\seqcomp\mdp{B})=
 \semFunctor(\mdp{A})\seqcomp\semFunctor(\mdp{B}),
\quad
 \semFunctor(\mdp{A}\oplus\mdp{B})=
 \semFunctor(\mdp{A})\oplus\semFunctor(\mdp{B}),
\end{equation}
where $;$ and $\oplus$ on the right-hand sides are \emph{semantic composition} operations on $\semCat$.
The equalities~\cref{eq:introCompSol} are nothing but \emph{compositionality}: the solution of the whole (on the left) is computed from the solutions of its parts (on the right). 

The second reason for using string diagrams is that they offer an expressive language for composing MDPs---one that enables an efficient description of a number of realistic system models---as we demonstrate with  benchmarks in \cref{sec:impl}. 

\myparagraph{Granularity of Semantics: a Challenge towards Compositionality}
Now the main technical challenge is the design of a semantic domain  $\semCat$  (it is a category in our framework). We shall call it the challenge of \emph{granularity of semantics}; it is encountered generally when one aims at compositional solutions. 
\begin{itemize}
 \item The coarsest candidate for $\semCat$ is the original semantic domain; it consists of solutions and nothing else. This coarsest candidate is not enough most of the time: when components are composed, they may interact with each other via a richer interface than mere solutions. (Consider a team of two people. Its performance is usually not the sum of each member's, since there are other affecting factors such as work style, personal character, etc.)
 \item Therefore one would need to use a finer-grained semantic domain as $\semCat$, which, however, comes with a computational cost:  in~\cref{eq:introCompSol}, one will have to carry around bigger data as intermediate solutions $\semFunctor(\mdp{A})$ and $\semFunctor(\mathcal{B})$; their semantic composition will become more costly, too.
\end{itemize}
Therefore, in choosing $\semCat$, one should find the smallest enrichment\footnote{\emph{Enrichment} here is in the natural language sense; it has nothing to do with the technical notion of \emph{enriched category}.} of the original
semantic domain that addresses all relevant interactions between components and thus enables compositional solutions. This is a theoretical  challenge.

In this work, following our recent work~\cite{Watanabe21} that pursued a compositional solution of parity games, we use category theory as guidance in tackling the above challenge.  Our goal is to obtain a solution functor $\semFunctor\colon \oMDP\to\semCat$ that preserves suitable algebraic structures
 (see~\cref{eq:introCompSol}); the specific notion of algebra of our interest is that of \emph{compact closed categories (compCC)}. 
\begin{itemize}
 \item The category $\oMDP$ organizes open MDPs as a category.
It is a compCC, and its algebraic operations are defined  as in~\cref{fig:seqCompOplusIllustrated}.
 \item For the solution functor  $\semFunctor$  to be compositional, the semantic category $\semCat$ \emph{must itself be a compCC}, that is, $\semCat$ has to be enriched so that the compCC operations  ($\seqcomp$ and $\oplus$) are well-defined.
 \item Once such a semantic domain $\semCat$ is obtained, choosing $\semFunctor$ and showing that it preserves the algebraic operations are straightforward.
\end{itemize}
 Specifically, we find that $\semCat$ must be enriched with \emph{reachability probabilities}, in addition to the desired solutions (namely expected rewards), to be a compCC. This enrichment is based on the \emph{decomposition equalities} we observe in \cref{sec:compositionalAnalyssiOfOpenMarkovChains}. 

After all, our semantic category $\semCat$ is as follows: 
1) an object is a pair of natural numbers describing an interface (how many entrances and exits);
2) an arrow is a collection of ``semantics,'' collected over all possible (memoryless) schedulers $\tau$, which records the expected reward that the scheduler $\tau$ yields when it traverses from each entrance to each exit. The last ``semantics'' is enriched so that it records the reachability probability, too, for the sake of compositionality.
\ifdraft
\color{red}
Let's \emph{depict} some arrows of each semantic category in \cref{fig:catsFunctors}.
\color{black}
\fi

\myparagraph{Related Work}
Compositional model checking is studied e.g.\ in~\cite{DBLP:conf/csl/TsukadaO14,ClarkeLM89,Watanabe21}. Besides,
probabilistic model checking is an actively studied topic; see~\cite[Chap.~10]{BaierKatoen08} for a comprehensive account. 
We shall make a detailed comparison with the works~\cite{JungesS22,KwiatkowskaNPQ13} that study compositional probabilistic model checking.

The work~\cite{KwiatkowskaNPQ13} introduces an assume-guarantee reasoning framework for parallel composition $\parallel$, as we already discussed. Parallel composition is out of our current scope; in fact, we believe that compositionality with respect to $\parallel$ requires a much bigger enrichment of a semantic domain $\semCat$ than mere reachability probabilities as in our work. The work~\cite{KwiatkowskaNPQ13} is remarkable in that its solution to this granularity problem---namely by assume-guarantee reasoning---is practically sensible (domain experts often have ideas about what contract to impose) and comes with automata-theoretic automation. That said, such contracts are not always automatically synthesized in~\cite{KwiatkowskaNPQ13}, while our algorithm is fully automatic.

The work~\cite{JungesS22} is probably the closest to ours in the type of composition (sequential rather than parallel) and automation. However, the technical bases of the two works are quite different: theirs is the theory of \emph{parametric MDPs}~\cite{QuatmannD0JK16}, which is why their emphasis is on parametrized components and interval solutions; ours is monoidal categories and some decomposition equalities (\cref{sec:compositionalAnalyssiOfOpenMarkovChains}). 

We note that the work~\cite{JungesS22} and ours are not strictly comparable. On the one hand, we do not need a crucial assumption in~\cite{JungesS22}, namely that a locally optimal scheduler in each component is part of a globally optimal scheduler. The assumption limits the applicability of \cite{JungesS22}---it practically forces each component to have only one exit. The assumption does not hold in our benchmarks Patrol and Wholesale  (see \cref{sec:impl}). Our algorithm does not need the assumption since it collects the semantics of all relevant memoryless schedulers.

On the other hand, unlike~\cite{JungesS22}, our  algorithm is not parametric, so it cannot exploit the similarity of components if they only differ in parameter values. Note that the target problems are different, too (interval~\cite{JungesS22} vs.\ exact here).

\begin{auxproof}
 \myparagraph{Contributions}
 We present a compositional model checking algorithm for MDPs. MDPs are composed in string diagrams, and the algorithm exactly computes optimal expected rewards. Our theoretical development of the algorithm is supported by category theory, allowing automation of some constructions, while so-called decomposition equalities for expected rewards act as a key enabler. The experimeal evaluation of its implementaion demonstrates performance advantages of the algorithm.
\end{auxproof}

\myparagraph{Notations}
For natural numbers $m$ and $n$, we let $[m, n]\coloneqq\{m,m+1,\dotsc, n-1, n\}$; as a special case, we let $\nset{m}\coloneqq \{1,2,\dotsc, m\}$ (we let $\nset{0}=\emptyset$ by convention).
 The disjoint union of two sets $X,Y$ is denoted by $X+Y$. 

\section{String Diagrams of MDPs}
\label{sec:compositionalMDPs}

We introduce our calculus for composing MDPs, namely  \emph{string diagrams of MDPs}. Our formal definition is via their \emph{unidirectional} and \emph{Markov chain (MC)} restrictions. This apparent detour simplifies the theoretical development, allowing us to exploit the existing categorical infrastructure on (monoidal) categories.

\subsection{Outline }\label{subsec:stringDiagramOfMDPsOutline}

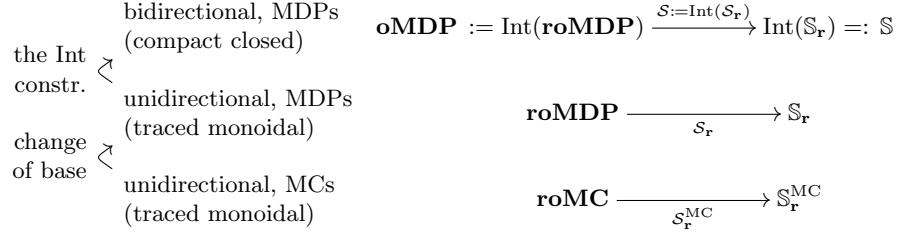
\begin{figure}[tbp]\centering
\begin{math}
  \hspace{-0.6cm}
  \vcenter{\xymatrix@C+1.6em@R=0.5em{
  **[r]{\begin{array}[c]{l}
	    \text{bidirectional, MDPs}\qquad\qquad\qquad\qquad\qquad\\
	\text{(compact closed)}\end{array}
   }
  &
  {\mathllap{\oMDP\, :=\;} 
  \Int(\roMDP)}
     \ar[r]^-{\semFunctor\coloneqq\Int(\semFunctorr)}
   &
  {\Int(\semCatr)\mathrlap{\;=:\,\semCat }}
  \\
  **[r]{\footnotesize\begin{array}[c]{l}\footnotesize
   \text{unidirectional, MDPs}\\
   \footnotesize
   \text{(traced monoidal)}
\end{array}
   }
   \ar@(lu,ld)[u]^{
    \begin{array}{r}
     \text{\small the Int}\\
     \text{\small constr.}
    \end{array}}
  &
  {\roMDP}
      \ar[r]_-{\semFunctorr }
  &
  {\semCatr}
  \\
  **[r]{\footnotesize\begin{array}[c]{l}\footnotesize
   \text{unidirectional, MCs}\\
   \footnotesize
   \text{(traced monoidal)}
\end{array}
   }
   \ar@(lu,ld)[u]^{
    \begin{array}{r}
     \text{\small change}\\
     \text{\small of base}
    \end{array}}
  &
  {\roMC}
      \ar[r]_-{\semFunctorrmc}
  &
  {\semCatrmc}
 }}
\end{math}
\caption{Categories of MDPs/MCs, semantic categories, and solution functors.}
\label{fig:catsFunctors}
\end{figure}
We first make an overview of our technical development.
Although we use some categorical terminologies, prior knowledge of them is not needed in this outline.

\ifdraft
 \begin{figure}[tbp]\centering
 \begin{math}
  \vcenter{\xymatrix@C+1.6em@R=0.5em{
  **[r]{\begin{array}[c]{l}
	    \text{bidirectional, MDPs}\qquad\qquad\qquad\qquad\qquad\\
	\text{(compact closed)}\end{array}
   }
  &
  {\mathllap{\mathbf{oMDP}\, :=\;} 
  \Int(\roMDP)}
     \ar[rr]^-{\semFunctor\coloneqq\Int(\semFunctorr)}
   &&
  {\Int(\semCatr)\mathrlap{\;=:\,\semCat }}
  \\
  **[r]{\footnotesize\begin{array}[c]{l}\footnotesize
   \text{unidirectional, MDPs}\\
   \footnotesize
   \text{(traced monoidal)}
 \end{array}
   }
   \ar@(lu,ld)[u]^{
    \begin{array}{r}
     \text{\small the Int}\\
     \text{\small constr.}
    \end{array}}
  &
  {\roMDP}
      \ar@{ >->}[r]
      \ar@/^2ex/[rr]^{\semFunctorr}
  &
  {\mathrm{CoB}(\roMC)}
      \ar[r]_-{\mathrm{CoB}(\semFunctorrmc)}
  &
  {\mathrm{CoB}(\semCatrmc)\mathrlap{\;=:\,\semCatr }}
  \\
  **[r]{\footnotesize\begin{array}[c]{l}\footnotesize
   \text{unidirectional, MCs}\\
   \footnotesize
   \text{(traced monoidal)}
 \end{array}
   }
   \ar@(lu,ld)[u]^{
    \begin{array}{r}
     \text{\small change}\\
     \text{\small of base}
    \end{array}}
  &&
  {\roMC}
      \ar[r]_-{\semFunctorrmc}
  &
  {\semCatrmc}
 }}
 \end{math}
 \caption{Categories of MDPs/MCs, semantic categories, and solution functors.}
 \label{fig:catsFunctors}
 \end{figure}
\fi

\cref{fig:catsFunctors} is an overview of relevant categories and functors.
The verification targets---\emph{open MDPs}---are  arrows in the compact closed category (compCC) $\oMDP$. The operations $\seqcomp, \oplus$ of compCCs compose MDPs, as shown in~\cref{fig:seqCompOplusIllustrated}. 
Our semantic category is denoted by $\semCat$, and our goal is to define a solution functor $\oMDP\to \semCat$ that is compositional. Mathematically, such a functor with  the desired compositionality (cf.\ \cref{eq:introCompSol}) is called a \emph{compact closed functor}. 

Since its direct definition is tedious, our strategy is to obtain it from a unidirectional \emph{rightward} framework $\semFunctorr\colon\roMDP\to \semCatr$, which canonically induces the desired bidirectional framework via the celebrated \emph{$\Int$ construction}~\cite{joyal1996}. In particular, the category $\oMDP$ is defined by $\oMDP=\Int(\roMDP)$; so are the semantic category and the solution functor ($\semCat=\Int(\semCatr), \semFunctor=\Int(\semFunctorr)$). 

Going this way, 
a complication that one would encounter in a direct definition of $\oMDP$ (namely potential loops of transitions) is nicely taken care of by the $\Int$ construction. Another benefit is that some natural equational axioms in $\oMDP$---such as the associativity of sequential composition $\seqcomp$---follow automatically from those in $\roMDP$, which are much easier to verify.

Mathematically, 
the unidirectional framework $\semFunctorr\colon\roMDP\to \semCatr$ consists of \emph{traced symmetric monoidal categories (TSMCs)} and 
\emph{traced symmetric monoidal functors}; these are ``algebras'' of unidirectional graphs. The $\Int$ construction turns TSMCs into compCCs, which are ``algebras'' of bidirectional graphs.

Yet another  restriction is given by  \emph{(rightward open) Markov chains (MCs)}. See the bottom row of \cref{fig:catsFunctors}.
This MDP-to-MC restriction greatly simplifies our semantic development, freeing us from the bookkeeping of  different schedulers. In fact, we can introduce (optimal memoryless) schedulers systematically by the categorical construction called \emph{change of base}~\cite{Eilenberg65,cruttwell2008normed}; this way we obtain the semantic category $\semCatr$ from $\semCatrmc$. %

\subsection{Open MDPs}
\begin{wrapfigure}[4]{r}{0pt}
  \begin{minipage}[b]{13em}
    \begin{equation}\label{eq:figOpenMDP}
      \begin{aligned}
        \begin{tikzpicture}[
              innode/.style={draw, rectangle, minimum size=1cm},
              innodemini/.style={draw, rectangle, minimum size=0.5cm},
              interface/.style={inner sep=0},
              innodepos/.style={draw, circle, minimum size=0.2cm},
              ]
              \node[interface] (rdo0) at (0cm, 0.25cm) {};
              \node[interface] (rdo1) at (0cm, 0cm) {};
              \node[interface] (ldo1) at (0cm, -0.25cm) {};
              \node[innode, fill=white] (pos1) at (1cm, 0cm) {$\mdp{A}$};
              \node[interface] (rcdo1) at (2cm, 0.38cm) {};
              \node[interface] (lcdo1) at (2cm, 0.125cm) {};
              \node[interface] (lcdo2) at (2cm, -0.38cm) {};
              \node[interface] (lcdo3) at (2cm, -0.125cm) {};
              \draw[->] (rdo0) to ($(pos1.north west)!0.5!(pos1.west)$);
              \draw[->] (rdo1) to (pos1);
              \draw[<-] (ldo1) to ($(pos1.south west)!0.5!(pos1.west)$);
              \draw[->] ($(pos1.north east)!0.25!(pos1.east)$) to (rcdo1); 
              \draw[<-] ($(pos1.north east)!0.75!(pos1.east)$) to (lcdo1); 
              \draw[<-] ($(pos1.south east)!0.75!(pos1.east)$) to (lcdo3);
              \draw[<-] ($(pos1.south east)!0.25!(pos1.east)$) to (lcdo2); 
              \node[interface] (kacco1) at (-0.2cm, 0.1cm) {\scalebox{1.3}{$\{$}};
              \node[interface] (kacco2) at (2.2cm, -0.125cm) {\scalebox{1.8}{$\}$}};
              \node[interface] (rarity1) at (-0.6cm, 0.1cm) {$\nfr{m}$};
              \node[interface] (larity1) at (-0.6cm, -0.25cm) {$\nfl{m}$};
              \node[interface] (rarity2) at (2.6cm, 0.38cm) {$\nfr{n}$};
              \node[interface] (larity2) at (2.6cm, -0.125cm) {$\nfl{n}$};
        \end{tikzpicture}
      \end{aligned}
    \end{equation}
    \end{minipage}
  \end{wrapfigure}
We first introduce \emph{open MDPs}; they have open ends via which they compose. They come with a notion of \emph{arity}---the numbers of open ends on their left and right, distinguishing leftward and rightward ones. For example, the one on the right is from $(2,1)$ to $(1,3)$.

\begin{definition}[open MDP (oMDP)]
\label{def:openMDP}
Let $A$ be a non-empty finite set, whose elements are called \emph{actions}. An \emph{open MDP} 
 $\mdp{A}$ (\emph{over} the action set $A$) is the tuple $(\bfn{m},\bfn{n},Q, A,\allowbreak E,P,R)$ of the following data. We say that it is \emph{from $\bfn{m}$ to $\bfn{n}$}. 
\begin{enumerate}
\item $\bfn{m} = (\nfr{m},\ \nfl{m})$ and $\bfn{n} = (\nfr{n},\ \nfl{n})$ are pairs of natural numbers; they are called the \emph{left-arity} and the \emph{right-arity}, respectively. Moreover (see~\cref{eq:figOpenMDP}),
 elements of $\nset{\nfr{m}+\nfl{n}}$ are called  \emph{entrances}, and those of $\nset{\nfr{n}+\nfl{m}}$ are called \emph{exits}. 

\item $Q$ is a finite set of
 \emph{positions}.
\item $E:\nset{\nfr{m}+\nfl{n}} \to Q + \nset{\nfr{n}+\nfl{m}}$ is an \emph{entry function}, which maps each entrance
to either a position (in $Q$) or an exit (in $\nset{\nfr{n}+\nfl{m}}$).
\item $P:Q\times A\times (Q + \nset{\nfr{n}+\nfl{m}}) \rightarrow \Realp$
        determines \emph{transition probabilities}, where we require $\sum_{s'\in Q+\nset{\nfr{n}+\nfl{m}}}P(s, a, s') \in \{0, 1\}$ for each $s\in Q$ and $a\in A$.
\item $R$ is a \emph{reward function} $R:Q\rightarrow \Realp$.
\item  We impose the following ``unique access to each exit'' condition. 
 Let $\exits: (\nset{\nfr{m}+\nfl{n}} + Q) \to \mathcal{P}(\nset{\nfr{n}+\nfl{m}})$ be the \emph{exit function} that collects all immediately reachable exits, that is, 1) for each $s\in Q$, $\exits(s) = \{ t \in \nset{\nfr{n}+\nfl{m}}\,|\,\exists a \in A. P(s,a,t) > 0 \}$, and 2) for each entrance $s \in \nset{\nfr{m}+\nfl{n}}$,  $\exits(s) = \{E(s)\}$ if $E(s)$ is an exit and $\exits(s) = \emptyset$ otherwise.
\begin{itemize}
 \item For all $s,s' \in \nset{\nfr{m}+\nfl{n}} + Q$, if $\exits(s) \cap \exits(s') \neq \emptyset$, then $s = s'$.
 \item We further require that each exit is reached from an identical position by at most one action. That is, 
for each exit $t \in \nset{\nfr{n}+\nfl{m}}$,  $s \in Q$, and $a,b \in A$, if both $P(s,a,t) > 0$ and $P(s,b,t) > 0$, then $a = b$.

\end{itemize}

\end{enumerate}

\end{definition}

Note that the unique access to each exit condition is for technical convenience; this can be easily enforced by adding an extra ``access'' position to an exit. 

We define the semantics of open MDPs, which is essentially the  standard semantics of MDPs given by expected cumulative rewards. In this paper, it suffices to consider memoryless schedulers (see~\cref{rem:suf_memoryless}).

\begin{definition}[path and scheduler]\label{def:path}
Let $\mdp{A} = (\bfn{m},\bfn{n},Q, A, E,P,R)$ be an open MDP.
A (finite) \emph{path} $\mypath{i}{j}$ in $\mdp{A}$ from an entrance $i\in \nset{\nfr{m}+\nfl{n}}$ 
 to an exit $j\in \nset{\nfr{n}+\nfl{m}}$ is a finite sequence $i, s_1,\dotsc, s_n, j$ such that $E(i) = s_1$ and for all $k \in \nset{n}$, $s_k \in Q$. For each $k\in \nset{n}$, $\mypath{i}{j}_k$ denotes $s_k$, and $\mypath{i}{j}_{n+1}$ denotes $j$. The set of all paths in $\mdp{A}$ from $i$ to $j$ is 
denoted by  $\Pathstra{\mdp{A}}{\tau}{i}{j}$.

A \emph{(memoryless) scheduler} $\tau$ of $\mdp{A}$ is a function $\tau:Q\rightarrow  A$.
\end{definition}

\begin{remark}
\label{rem:suf_memoryless}
It is well-known (as hinted in~\cite{Baier0KW17}) that we can restrict to memoryless schedulers for optimal expected rewards, \emph{assuming that} the MDP in question is almost surely terminating under any scheduler $(\dagger)$. We require the assumption $(\dagger)$ in our compositional framework, too, and it is true in all benchmarks in this paper.  The assumption $(\dagger)$ must be checked only for the top-level (composed) MDP; $(\dagger)$ for its components can then be deduced.

\end{remark}

\begin{definition}[probability and reward of a path]\label{def:probRewPath}
Let $\mdp{A} = (\bfn{m},\bfn{n},Q, A, E,\allowbreak P,R)$ be an open MDP,
$\tau:Q\rightarrow  A$ be a scheduler of $\mdp{A}$,
and  $\mypath{i}{j}$ be a path in $\mdp{A}$.
The \emph{probability} $\ProPath{\mdp{A}}{\tau}(\mypath{i}{j} )$ of 
 $\mypath{i}{j}$ under $\tau$ is
\begin{math}
    \ProPath{\mdp{A}}{\tau}(\mypath{i}{j} )\defeq \textstyle
    \prod_{k=1}^n P\bigl(\,\mypath{i}{j}_k,\tau(\mypath{i}{j}_k),\mypath{i}{j}_{k+1}\,\bigr).
\end{math}
 The \emph{reward} $\PReward{\mdp{A}}(\mypath{i}{j})$ along the path $\mypath{i}{j}$ is the sum of the position rewards, that is,  $ \PReward{\mdp{A}}(\mypath{i}{j}) \defeq \sum_{k\in \nset{n}} R(\mypath{i}{j}_k)$.

\end{definition}

\vspace{.3em}
Our target problem on open MDPs is to compute
 the \emph{expected cumulative reward}  collected in a passage from a specified entrance $i$ to a specified exit $j$.
This is defined below, together with reachability probability, in the usual manner.

 \begin{definition}[reachability probability and expected (cumulative) reward of open MDPs]\label{def:openMDPReachProbExpCumRew} 
 Let $\mdp{A}$ be an open MDP and $\tau$ be a scheduler, as in \cref{def:path}. 
 Let $i$ be an entrance and $j$ be an exit.

 The \emph{reachability probability} $\Reachstr{\mdp{A}}{\tau}{i}{j}$ from $i$ to $j$, in $\mdp{A}$ under $\tau$, is defined by $\Reachstr{\mdp{A}}{\tau}{i}{j} \defeq \sum_{\mypath{i}{j}\in  \Pathstra{\mdp{A}}{\tau}{i}{j}} \ProPath{\mdp{A}}{\tau}(\mypath{i}{j})$.

 The \emph{expected (cumulative) reward $\ETRstr{\mdp{A}}{\tau}{i}{j}$ 
 from $i$ to $j$}, in $\mdp{A}$ under $\tau$, is defined by $\ETRstr{\mdp{A}}{\tau}{i}{j}\defeq \sum_{\mypath{i}{j}\in  \Pathstra{\mdp{A}}{\tau}{i}{j}}\ProPath{\mdp{A}}{\tau}(\mypath{i}{j})\cdot \PReward{\mdp{A}}(\mypath{i}{j})$. Note that the infinite sum here always converges to a finite value; this is because there are only finitely many positions in $\mdp{A}$. See e.g.~\cite{BaierKatoen08}.
 \end{definition}

\begin{remark}\label{rem:notAlmostSureTerm}
In standard definitions such as \cref{def:openMDPReachProbExpCumRew}, it is common to either 1)  assume $\Reachstr{\mdp{A}}{\tau}{i}{j}=1$ for technical convenience~\cite{JungesS22}, or 2) allow $\Reachstr{\mdp{A}}{\tau}{i}{j}<1$, but in that case define $\ETRstr{\mdp{A}}{\tau}{i}{j}\defeq \infty$ \cite{BaierKatoen08}.
These definitions are not suited for our purpose (and  for compositional model checking in general), since we take into account multiple exits, to each of which the reachability probability is typically $<1$, 
and we need non-$\infty$ expected rewards over those exits 
for compositionality.
 Note that our definition of expected reward is not conditional (unlike~\cite[Rem.~10.74]{BaierKatoen08}): when the reachability probability from $i$ to $j$ is small, it makes the expected reward small as well.  Our notion of expected reward can be thought of as a ``weighted sum'' of rewards.

\end{remark}

\subsection{
Rightward Open MDPs and Traced Monoidal String Diagrams
}\label{subsec:roMDPsAndTSMCStringDiagrams}
Following the outline (\cref{subsec:stringDiagramOfMDPsOutline}),  in this section we focus on (unidirectional) \emph{rightward} open MDPs and introduce the ``algebra'' $\roMDP$ of them. The operations  $\seqcomp, \oplus, \tr$ of \emph{traced symmetric monoidal categories (TSMCs)} compose rightward open MDPs in string diagrams.

\begin{definition}[rightward open MDP (roMDP)]\label{def:rightwardOpenMDP}
  An open MDP $\mdp{A} = (\bfn{m}, \bfn{n}, Q, A, E, P, R)$ is \emph{rightward} if all its entrances are on the left and all its exits are on the right, that is, $\bfn{m} = (\nfr{m}, \nfl{0})$
    and $\bfn{n}= (\nfr{n}, \nfl{0})$
   for some \(\nfr{m}\) and \(\nfr{n}\).  We write $\mdp{A} = (\nfr{m}, \nfr{n}, Q, A, E, P, R)$, dropping $0$ from the arities.

We say that a rightward open MDP $\mdp{A}$ is \emph{from $m$ to $n$}, writing $\mdp{A}:m\rightarrow n$, if it is from $(m,0)$ to $(n,0)$ as an open MDP.
\end{definition}

\iffull
 We use a natural equivalence relation by \emph{roMDP isomorphism} (\cref{def:isomorphismRoMDPs}) so that roMDPs satisfy TSMC axioms
 given in \cref{subsec:eqAxiomOfTSMCs}. These axioms do not hold up-to equality of MDPs because, e.g.,  two sets $Q_{1}+ (Q_{2}+ Q_{3})$ and  $(Q_{1}+ Q_{2})+ Q_{3}$ are not equal but only isomorphic.
 \else
 We use an equivalence relation by \emph{roMDP isomorphism} so that roMDPs satisfy TSMC axioms
 given in~\cref{subsec:eqAxiomOfTSMCs}. See~\cite[Appendix A]{Watanabe23long} for details. 
 \fi

We move on to introduce algebraic operations for composing rightward open MDPs. Two of them, namely \emph{sequential composition} $\seqcomp$ and \emph{sum} $\oplus$, look like~\cref{fig:seqCompOplusIllustrated} except that all wires are rightward. The other major operation is the \emph{trace operator} $\tr$ that realizes (unidirectional) loops, as illustrated in \cref{fig:trace_of_rmdp}. 

\begin{figure}[t]
  \centering
\begin{math}
 \begin{array}{l}
 \bigl(\mdp{A}:l+m\rightarrow l + n\bigr)
 \\
 \longmapsto
 \bigl(\,\trmdp{l}{m}{n}(\mdp{A}):m\rightarrow n\,\bigr),
 \end{array}
\end{math}
\quad\text{ as in}
    \begin{subfigure}[c]{0.55\columnwidth}
    \centering
    \scalebox{0.85}{
    \begin{tikzpicture}
    \node[draw, rectangle, minimum size=1cm] (mdpA) at (1cm, 0cm) {\scalebox{1.5}{$\mdp{A}$}};
    \draw[->] (0cm, 0.25cm) to ($(mdpA.north west)!0.5!(mdpA.west)$); 
    \draw[->] (0cm, -0.25cm) to ($(mdpA.south west)!0.5!(mdpA.west)$); 
    \draw[->] ($(mdpA.south east)!0.5!(mdpA.east)$) to (2cm, -0.25cm);
    \draw[->] ($(mdpA.north east)!0.5!(mdpA.east)$) to (2cm, 0.25cm);
    \node[] (doml) at (0.2cm, 0.4cm) {$l$};
    \node[] (domm) at (0.2cm, -0.1cm) {$m$};
    \node[] (codoml) at (1.8cm, 0.4cm) {$l$};
    \node[] (codomn) at (1.8cm, -0.1cm) {$n$};
    \node[] (translate) at (3cm, 0cm) {\large $\longmapsto$}; 
    \node[draw, rectangle, minimum size=1cm] (mdpA2) at (5cm, 0cm) {\scalebox{1.5}{$\mdp{A}$}};
    \draw[<-] ($(mdpA2.north west)!0.5!(mdpA2.west)$) to [out=165,in=195] (4.5cm,0.7cm);
    \draw[-] (4.5cm, 0.7cm) to (5.5cm, 0.7cm);
    \draw ($(mdpA2.north east)!0.5!(mdpA2.east)$) to [out=15,in=345] (5.5cm,0.7cm);
    \draw[->] (4cm, -0.25cm) to  ($(mdpA2.south west)!0.5!(mdpA2.west)$);
    \draw[->] ($(mdpA2.south east)!0.5!(mdpA2.east)$) to (6cm, -0.25cm);
    \node[] (domm2) at (4.2cm, -0.1cm) {$m$};
    \node[] (codomn2) at (5.8cm, -0.1cm) {$n$};
    \end{tikzpicture}   
    }
\end{subfigure}
\caption{The trace operator.}
\label{fig:trace_of_rmdp}
\end{figure}

\begin{definition}[sequential composition $\seqcomp$ of roMDPs]
\label{def:seqCompRightwardOpenMDPs}
  Let $\mdp{A}\colon m\to k$ and $\mdp{B}\colon k\to n$ be rightward open MDPs
  with the same action  set $A$ and with matching arities.
 Their \emph{sequential composition} 
$\mdp{A} \seqcomp \mdp{B}\colon m\to n$ 
is given by
  \begin{math}
      \mdp{A} \seqcomp \mdp{B} \defeq
     \bigl(m,  n, Q^{\mdp{A}}+Q^{\mdp{B}},
              A,
               E^{\mdp{A} \seqcomp \mdp{B}},
                P^{\mdp{A} \seqcomp \mdp{B}}, [R^{\mdp{A}}, R^{\mdp{B}}]\bigr)
     \end{math}, where
\begin{itemize}
 \item $E^{\mdp{A} \seqcomp \mdp{B}}(i)\defeq E^{\mdp{A}}(i)$ if $E^{\mdp{A}}(i) \in Q^{\mdp{A}}$, and 
$E^{\mdp{A} \seqcomp \mdp{B}}(i)\defeq E^{\mdp{B}}(E^{\mdp{A}}(i))$ otherwise (if the $\mdp{A}$-entrance $i$ goes to an $\mdp{A}$-exit
which is identified with a $\mdp{B}$-entrance);
 \item the transition probabilities are defined in the following natural manner
     \begin{align*}\small
\begin{aligned}
          P^{\mdp{A} \seqcomp \mdp{B}}(s^{\mdp{A}}, a, s') &\defeq \begin{cases}
            P^{\mdp{A}}(s^{\mdp{A}}, a, s') &\text{if $s' \in Q^{\mdp{A}}$,}\\
            \sum_{i \in \nset{k}} P^{\mdp{A}}(s^{\mdp{A}}, a, i) \cdot \delta_{E^{\mdp{B}}(i) = s'} &\text{otherwise (i.e.\ $s'\in Q^{\mdp{B}}+\nset{n}$),}
         \end{cases}\\
        P^{\mdp{A} \seqcomp \mdp{B}}(s^{\mdp{B}}, a, s') &\defeq
         \begin{cases}
            P^{\mdp{B}}(s^{\mdp{B}}, a, s') &\text{if $s' \in Q^{\mdp{B}} + \nset{n}$,}\\
            0 &\text{otherwise,}
         \end{cases}
\end{aligned}     
\end{align*}
       where $\delta$ is a characteristic function (returning $1$ if the condition is true);

 \item and $[R^{\mdp{A}}, R^{\mdp{B}}]\colon Q^{\mdp{A}}+Q^{\mdp{B}}\to \Realp$ combines $R^{\mdp{A}}, R^{\mdp{B}}$ by case distinction.
\end{itemize}
      
\end{definition}

\begin{auxproof}
 Next, we define the sum $\mdp{A}\oplus\mdp{B}$ of open MDPs, which express the disjoint union of $\mdp{A}$ and $\mdp{B}$ with respecting the order of open ends.
\end{auxproof}

Defining sum $\oplus$ of roMDPs is straightforward, following \cref{fig:seqCompOplusIllustrated}.
\iffull
See \cref{def:sumRightwardOpenMDPs}. 
\else 
See~\cite[Appendix A]{Watanabe23long} for details. 
\fi

The trace operator $\tr$ is primitive in the TSMC $\roMDP$; it is crucial in defining  bidirectional sequential composition shown in~\cref{fig:seqCompOplusIllustrated} (cf.\ \cref{def:seqCompOpenMDPs}). 

\begin{definition}[the trace operator \(\trmdp{l}{m}{n}\) over roMDPs]
\label{def:trace_tmdp}
Let $\mdp{A}:l+m\rightarrow l+n$ be a rightward open MDP.
The \emph{trace} $\trmdp{l}{m}{n}(\mdp{A}):m\rightarrow n$ of $\mdp{A}$ with respect to  $l$ is the roMDP  $\trmdp{l}{m}{n}(\mdp{A})\defeq  
  \bigl(m,n, Q^{\mdp{A}},A,E,P,R^{\mdp{A}}\bigr)$ (cf.\ \cref{fig:trace_of_rmdp}), where
  \begin{itemize}
    \item The entry function $E$ is defined naturally, using a sequence $i_{0},\dotsc, i_{k-1}$ of intermediate open ends (in $\nset{l}$) until reaching a destination $i_{k}$. 

Precisely, we let $i_{0}\defeq i+l$ and $i_{j}=E^{\mdp{A}}(i_{j-1})$ for each $j$. We let $k$ to be the first index at which $i_{k}$ comes out of the loop, that is, 1) $i_{j}\in \nset{l}$ for each $j\in \nset{k-1}$, and 2)  $i_{k}\in [l+1,l+n]+Q^{\mdp{A}}$. 
Then we define $E(i)$ by the following:  $E(i)\defeq i_{k}-l$ if $i_{k}\in [l+1,l+n]$; and $E(i)\defeq i_{k}$ otherwise. 

    \item  The transition probabilities $P$ are defined as follows.  We let $\precedent(t)$ be the set of open ends in $\nset{l}$---those which are in the loop---that eventually enter $\mdp{A}$ at $t\in [l+1, n]+Q^{\mdp{A}}$. Precisely,
$\precedent(t) \defeq \{i\in\nset{l}\mid \exists i_{0},\dotsc, i_{k}.\, i_{0}=i,   i_{j+1} = E(i_j) \text{(for each $j$)}, i_{k}=t, i_{0},\dotsc, i_{k-1}\in [1,l], i_{k}\in [l+1, n]+Q^{\mdp{A}}
\}$. Using this, 
\\[-\baselineskip]
    \begin{align*}\small
\begin{aligned}
           P(q, a, q') &\defeq \begin{cases}
    P^{\mdp{A}}(q, a, q'+l)+ \sum_{i\in\precedent(q'+l) }P^{\mdp{A}}(q, a, i)  \text{ if }q'\in \nset{n}, \\
    P^{\mdp{A}}(q, a, q') + \sum_{i\in\precedent(q') }P^{\mdp{A}}(q, a, i)\text{ otherwise, i.e.\ if } q'\in Q^{\mdp{A}}.
  \end{cases} 
\end{aligned}    
\end{align*}
  \end{itemize}
Here $Q^{\mdp{A}}$ and $\nset{l}$ are assumed to be disjoint without loss of generality. 

\end{definition}

\vspace{-.2em}
\noindent
\begin{minipage}{\textwidth}
\begin{remark}
 In  string diagrams, it is common to annotate a wire with its type, such as
 $\overset{n}{\longrightarrow}$
 for $\id_{n}\colon n\to n$.  It is also common to separate a wire for a sum type into wires of its component types, such as below on the left. Therefore the two diagrams below on the right designate the same mathematical entity. Note that, on its right-hand side, the type annotation $1$ to each wire is omitted.
  \begin{displaymath}\footnotesize
 \begin{array}{ll}
\vcenter{\hbox{\scalebox{.8}{
    \begin{tikzpicture}
        \draw[->] (0cm, 0cm) to (2cm, 0cm);
        \node[] (dom) at (1cm, 0.2cm) {${\Huge m+n}$};
        \node[] (eq) at (2.5cm, 0cm) {$=$};
        \draw[->] (3cm, 0.3cm) to (5cm, 0.3cm);
        \node[] (dom2) at (4cm, 0.5cm) {$m$};
        \draw[->] (3cm, -0.3cm) to (5cm, -0.3cm);
        \node[] (dom3) at (4cm, -0.1cm) {$n$};
    \end{tikzpicture}   
    }}}
    \hspace{30pt}
    &\vcenter{\hbox{\scalebox{.8}{
    \begin{tikzpicture}
    \node[draw, rectangle, minimum size=1cm,fill=white] (mdpA) at (1cm, 0cm) {\scalebox{1.5}{$\mdp{A}$}};
    \draw[->] (0cm, 0cm) to (mdpA.west);
    \draw[->] (mdpA.east) to (2cm, 0cm);
    \node[] (dom) at (0.2cm, 0.2cm) {$3$};
    \node[] (dom) at (1.8cm, 0.2cm) {$2$};
    \node[] (eq) at (2.5cm, 0cm) {$=$};
    \node[draw, rectangle, minimum size=1cm] (mdpA2) at (4cm, 0cm) {\scalebox{1.5}{$\mdp{A}$}};
     \draw[->] (3cm, 0cm) to (mdpA2.west);
     \draw[->] (3cm, 0.25cm) to ($(mdpA2.north west)!0.5!(mdpA2.west)$);
     \draw[->] (3cm, -0.25cm) to ($(mdpA2.south west)!0.5!(mdpA2.west)$);
     \draw[->] ($(mdpA2.north east)!0.5!(mdpA2.east)$) to (5cm, 0.25cm);
    \draw[->] ($(mdpA2.south east)!0.5!(mdpA2.east)$) to (5cm, -0.25cm);
    \end{tikzpicture}   
    }}}
 \end{array} 
\end{displaymath}

\end{remark}
\end{minipage}

\subsection{TSMC Equations between roMDPs}\label{subsec:eqAxiomOfTSMCs}
Here we show that the three operations $\seqcomp, \oplus, \tr$ on roMDPs satisfy the equational axioms of TSMCs~\cite{joyal1996},  shown in \cref{fig:TSMCAxioms,fig:TSMCAxiomsInStringDiagrams}. These equational axioms are not directly needed for compositional model checking.  We nevertheless study them because 1) they validate some natural bookkeeping equivalences of roMDPs needed for their efficient handling, and 2) they act as a sanity check of  the mathematical authenticity of our compositional framework. For example, the handling of open ends is subtle in~\cref{subsec:roMDPsAndTSMCStringDiagrams}---e.g.\ whether they should be positions or not---and the TSMC equational axioms led us to our current definitions.

\begin{figure}[t]
 \begin{displaymath}\scriptsize
 \begin{array}{lllll}
  \text{($;$-Unit)}\quad &\ID_{m}\seqcomp\mdp{A} = \mdp{A} = \mdp{A}\seqcomp\ID_{n} 
&\quad&
 \text{(Vanishing1)}
 \quad
 & \trmdp{0}{m}{m}(\ID_m) = \ID_m
\\
\text{($;$-Assoc)}\quad& \mdp{A}\seqcomp(\mdp{B} \seqcomp\mdp{C}) = 
     (\mdp{A}\seqcomp\mdp{B}) \seqcomp\mdp{C}
 &\quad&
 \text{(Vanishing2)}
 \quad
 & \text{(see below)}
\\
\text{($\oplus$-Assoc)}\quad&
   (\mdp{A}\oplus\mdp{B})\oplus\mdp{C}
   = \mdp{A}\oplus(\mdp{B}\oplus \mdp{C})
 &\quad&
 \text{(Superposing)}
 \quad
 & \text{(see below)}
\\
 \text{(Bifunc1)}\quad
 &
   \ID_{m}\oplus\ID_{n}
   = \ID_{m+n}
 &\quad&
\text{(Yanking)}
 \quad
 & \trmdp{m}{m}{m}(\SYM_{m, m}) = \ID_m
\\
  \text{(Bifunc2)}\quad
 &
   (\mdp{A}\oplus\mdp{B})\seqcomp
   (\mdp{C}\oplus\mdp{D})
   =
   (\mdp{A}\seqcomp\mdp{C})\oplus
   (\mdp{B}\seqcomp\mdp{D})
 &\quad&
 \text{(Naturality1)}
 \quad
 & \text{(see below)}
\\
&&\quad&
 \text{(Naturality2)}
 \quad
 & \trmdp{l}{m}{n}(\mdp{A} \seqcomp (\ID_l\oplus \mdp{B})) 
\\
  \text{(Swap1)}\quad
 &
   \SYM_{m,  0} = \ID_{m}
 &\quad&
 \quad
 &
   = \trmdp{l}{m}{k}(\mdp{A}) \seqcomp \mdp{B}
\\
 \text{(Swap2)}\quad
  &
    \SYM_{l, m+n}
    =
    (\SYM_{l, m}\oplus\ID_n)\seqcomp
    (\ID_m\oplus\SYM_{l, n})
 &\quad&
 \text{(Dinaturality)}
 \quad
 & \text{(see below)}
\\
  \text{(Swap3)}\quad
 &
   \SYM_{m,  n}\seqcomp \SYM_{n, m} = \ID_{m+n}
 &\quad&
 &
 \end{array} 
 \end{displaymath}

\vspace{-2em}
  \begin{displaymath}\scriptsize
 \begin{array}{ll}
\vcenter{\hbox{\scalebox{0.7}{
    \begin{tikzpicture}
    \node[] (name) at (1cm, 2cm) {(Vanishing2)};
    \node[draw, rectangle, minimum size=1cm] (mdpA) at (1cm, 0cm) {\scalebox{1.5}{$\mdp{A}$}};
    \draw[<-] ($(mdpA.north west)!0.5!(mdpA.west)$) to [out=165,in=195] (0.5cm,0.7cm);
    \draw ($(mdpA.north east)!0.5!(mdpA.east)$) to [out=15,in=345] (1.5cm,0.7cm);
    \draw[] (0.5cm, 0.7cm) to (1.5cm, 0.7cm);
    \node[] (doml) at (1cm, 0.9cm) {$l_1+l_2$}; 
    \draw[->] (0cm, -0.25cm) to ($(mdpA.south west)!0.5!(mdpA.west)$); 
    \draw[->] ($(mdpA.south east)!0.5!(mdpA.east)$) to (2cm, -0.25cm);
    \node[] (domm) at (0.2cm, -0.1cm) {$m$};
    \node[] (codomn) at (1.8cm, -0.1cm) {$n$};
    \node[] (translate) at (2.7cm, 0cm) {\large $=$}; 
    \node[draw, rectangle, minimum size=1cm] (mdpA2) at (4.5cm, 0cm) {\scalebox{1.5}{$\mdp{A}$}};
    \draw[<-] ($(mdpA2.north west)!0.5!(mdpA2.west)$) to [out=165,in=195] (4cm,0.7cm);
    \draw (4cm, 0.7cm) to (5cm, 0.7cm);
    \draw ($(mdpA2.north east)!0.5!(mdpA2.east)$) to [out=15,in=345] (5cm,0.7cm);
    \draw[->] (3.5cm, -0.25cm) to  ($(mdpA2.south west)!0.5!(mdpA2.west)$);
    \draw[->] ($(mdpA2.south east)!0.5!(mdpA2.east)$) to (5.5cm, -0.25cm);
    \node[] (doml2) at (4.5cm, 0.85cm) {$l_1$}; 
    \draw[<-] (mdpA2.west) to [out=165,in=195] (4cm,1.05cm);
    \draw[] (4cm, 1.05cm) to (5cm, 1.05cm);
    \draw (mdpA2.east) to [out=15,in=345] (5cm,1.05cm);
     \node[] (doml3) at (4.5cm, 1.2cm) {$l_2$};
    \node[] (domm2) at (3.7cm, -0.1cm) {$m$};
    \node[] (codomn2) at (5.3cm, -0.1cm) {$n$};
    \end{tikzpicture}   
    }}}
    &\vcenter{\hbox{\scalebox{0.7}{
    \begin{tikzpicture}
    \node[] (name) at (1cm, 1.3cm) {(Superposing)};
    \node[draw, rectangle, minimum size=1cm] (mdpA) at (1cm, 0cm) {\scalebox{1.5}{$\mdp{A}$}};
    \draw[<-] ($(mdpA.north west)!0.5!(mdpA.west)$) to [out=165,in=195] (0.5cm,0.7cm);
    \draw ($(mdpA.north east)!0.5!(mdpA.east)$) to [out=15,in=345] (1.5cm,0.7cm);
    \draw[] (0.5cm, 0.7cm) to (1.5cm, 0.7cm);
    \node[] (doml) at (1cm, 0.9cm) {$l$}; 
    \draw[->] (0cm, -0.25cm) to ($(mdpA.south west)!0.5!(mdpA.west)$); 
    \draw[->] ($(mdpA.south east)!0.5!(mdpA.east)$) to (2cm, -0.25cm);
    \node[] (domm) at (0.2cm, -0.1cm) {$m_1$};
    \node[] (codomn) at (1.8cm, -0.1cm) {$n_1$};
    \node[draw, rectangle, minimum size=0.5cm] (mdpB) at (1cm, -1cm) {\scalebox{1.5}{$\mdp{B}$}};
    \draw[->] (0cm, -1cm) to (mdpB.west);
    \draw[->] (mdpB.east) to (2cm, -1cm);
    \node[] (domm2) at (0.4cm, -0.85cm) {$m_2$};
    \node[] (codomn2) at (1.6cm, -0.85cm) {$n_2$};
    \node[] (translate) at (2.7cm, 0cm) {\large $=$}; 
    \node[draw, rectangle, minimum size=1cm] (mdpA2) at (4.5cm, 0cm) {\scalebox{1.5}{$\mdp{A}$}};
    \node[] (doml) at (4.5cm, 0.9cm) {$l$};
     \node[draw, rectangle, minimum size=0.5cm] (mdpB) at (4.5cm, -1cm) {\scalebox{1.5}{$\mdp{B}$}};
    \draw (3.5cm, 0.25cm) to [out=165,in=195] (3.5cm,0.7cm);
    \draw[] (3.5cm, 0.7cm) to (5.5cm, 0.7cm);
    \draw (5.5cm, 0.25cm) to [out=15,in=345] (5.5cm,0.7cm);
    \draw[->] (3.5cm, 0.25cm) to  ($(mdpA2.north west)!0.5!(mdpA2.west)$);
    \draw[] ($(mdpA2.north east)!0.5!(mdpA2.east)$) to (5.5cm, 0.25cm);
    \draw[->] (3.5cm, -0.25cm) to  ($(mdpA2.south west)!0.5!(mdpA2.west)$);
    \draw[->] ($(mdpA2.south east)!0.5!(mdpA2.east)$) to (5.5cm, -0.25cm);
    \draw[->] (3.5cm, -1cm) to (mdpB.west);
    \draw[->] (mdpB.east) to (5.5cm, -1cm);
        \node[] (domm2) at (3.9cm, -0.85cm) {$m_2$};
    \node[] (codomn2) at (5.1cm, -0.85cm) {$n_2$};
    \node[] (domm2) at (3.7cm, -0.1cm) {$m_1$};
    \node[] (codomn2) at (5.3cm, -0.1cm) {$n_1$};
    \draw[dashed] (3.7cm, -1.35cm) to (5.3cm, -1.35cm);
    \draw[dashed] (3.7cm, -1.35cm) to (3.7cm, 0.6cm);
    \draw[dashed] (5.3cm, 0.6cm) to (3.7cm, 0.6cm);
    \draw[dashed] (5.3cm, 0.6cm) to (5.3cm, -1.35cm);
    \end{tikzpicture}   
    }}}
    \\
    &\\
\vcenter{\hbox{\scalebox{0.7}{
    \begin{tikzpicture}
    \node[] (name) at (-0.25cm, 1.2cm) {(Naturality1)};
    \node[draw, rectangle, minimum size=0.5cm] (mdpB) at (-0.25cm, -0.25cm) {\scalebox{1.5}{$\mdp{B}$}};
    \node[draw, rectangle, minimum size=1cm] (mdpA) at (1cm, 0cm) {\scalebox{1.5}{$\mdp{A}$}};
    \draw[->] (-1cm, -0.25cm) to (mdpB.west);
    \draw[->] (0cm, -0.25cm) to ($(mdpA.south west)!0.5!(mdpA.west)$); 
    \draw[->] (-1cm, 0.25cm) to ($(mdpA.north west)!0.5!(mdpA.west)$);
     \node[] (doml) at (0.5cm, 0.85cm) {$l$};
    \draw (-1cm, 0.25cm) to [out=165,in=195] (-1cm,0.7cm);
    \draw (2cm, 0.25cm) to [out=15,in=345] (2cm, 0.7cm);
    \draw[] (-1cm,0.7cm) to (2cm, 0.7cm);
    \draw[] ($(mdpA.north east)!0.5!(mdpA.east)$) to (2cm, 0.25cm);
    \draw[] ($(mdpA.south east)!0.5!(mdpA.east)$) to (2cm, -0.25cm);
    \node[] (domk) at (0.2cm, -0.1cm) {$k$};
    \node[] (domm) at (-0.75cm, -0.1cm) {$m$};
    \node[] (codomn) at (1.8cm, -0.1cm) {$n$};
    \node[] (translate) at (2.8cm, 0cm) {\large $=$};
    \node[draw, rectangle, minimum size=0.5cm] (mdpB2) at (4.15cm, -0.25cm) {\scalebox{1.5}{$\mdp{B}$}};
    \node[] (doml4) at (3.6cm, -0.1cm) {$m$};
    \draw[->] (3.4cm, -0.25cm) to (mdpB2.west);
    \node[draw, rectangle, minimum size=1cm] (mdpA2) at (5.4cm, 0cm) {\scalebox{1.5}{$\mdp{A}$}};
    \draw[<-] ($(mdpA2.north west)!0.5!(mdpA2.west)$) to [out=165,in=195] (4.9cm,0.7cm);
    \draw[] (4.9cm, 0.7cm) to (5.9cm, 0.7cm);
    \draw ($(mdpA2.north east)!0.5!(mdpA2.east)$) to [out=15,in=345] (5.9cm,0.7cm);
    \draw[->] (4.4cm, -0.25cm) to  ($(mdpA2.south west)!0.5!(mdpA2.west)$);
    \draw[->] ($(mdpA2.south east)!0.5!(mdpA2.east)$) to (6.4cm, -0.25cm);
     \node[] (doml3) at (5.4cm, 0.85cm) {$l$};
    \node[] (domm2) at (4.6cm, -0.1cm) {$k$};
    \node[] (codomn2) at (6.2cm, -0.1cm) {$n$};
    \end{tikzpicture}   
    }}}
    \hspace{30pt}
     &\vcenter{\hbox{\scalebox{0.7}{
    \begin{tikzpicture}
        \node[] (name) at (-0.25cm, 1.2cm) {(Dinaturality)};
    \node[draw, rectangle, minimum size=0.5cm] (mdpB) at (-0.25cm, 0.25cm) {\scalebox{1.5}{$\mdp{B}$}};
    \node[draw, rectangle, minimum size=1cm] (mdpA) at (1cm, 0cm) {\scalebox{1.5}{$\mdp{A}$}};
    \draw[->] (-1cm, 0.25cm) to (mdpB.west);
    \draw[->] (mdpB.east) to ($(mdpA.north west)!0.5!(mdpA.west)$);
    \node[] (domk) at (0.2cm, 0.4cm) {$k$};
    \draw[] ($(mdpA.north east)!0.5!(mdpA.east)$) to (2cm, 0.25cm);
    \draw (-1cm, 0.25cm) to [out=165,in=195] (-1cm,0.7cm);
    \draw (2cm, 0.25cm) to [out=15,in=345] (2cm,0.7cm);
    \draw[] (-1cm, 0.7cm) to (2cm, 0.7cm);
    \node[] (doml) at (0.5cm, 0.85cm) {$l$};
    \draw[->] (-1cm, -0.25cm) to ($(mdpA.south west)!0.5!(mdpA.west)$);
    \draw[->] ($(mdpA.south east)!0.5!(mdpA.east)$) to (2cm, -0.25cm);
    \node[] (domm) at (-0.25cm, -0.15cm) {$m$};
    \node[] (codomn) at (1.8cm, -0.15cm) {$n$};
    \node[] (translate) at (2.8cm, 0cm) {\large $=$}; 
    \node[draw, rectangle, minimum size=1cm] (mdpA2) at (4.4cm, 0cm) {\scalebox{1.5}{$\mdp{A}$}};
    \draw[->] (3.4cm, 0.25cm) to ($(mdpA2.north west)!0.5!(mdpA2.west)$);
    \draw[->] (3.4cm, -0.25cm) to ($(mdpA2.south west)!0.5!(mdpA2.west)$);
    \node[draw, rectangle, minimum size=0.5cm] (mdpB2) at (5.7cm, 0.25cm) {\scalebox{1.5}{$\mdp{B}$}};
    \draw[->] ($(mdpA2.north east)!0.5!(mdpA2.east)$) to (mdpB2.west);
    \draw[] (mdpB2.east) to (6.4cm, 0.25cm);
    \draw[->] ($(mdpA2.south east)!0.5!(mdpA2.east)$) to (6.4cm, -0.25cm);
    \draw (3.4cm, 0.25cm) to [out=165,in=195] (3.4cm,0.7cm);
    \draw (6.4cm, 0.25cm) to [out=15,in=345] (6.4cm,0.7cm);
    \draw[] (3.4cm, 0.7cm) to (6.4cm, 0.7cm);
    \node[] (doml2) at (5cm, 0.85cm) {$k$};
    \node[] (domk2) at (5.15cm, 0.4cm) {$l$};
    \node[] (domm) at (3.6cm, -0.15cm) {$m$};
    \node[] (codomn) at (5.7cm, -0.15cm) {$n$};
    \end{tikzpicture}   
    }}}
 \end{array} 
\end{displaymath}

\caption{The equational axioms of TSMCs, expressed for roMDPs, with some string diagram illustrations. Here we omit types of roMDPs; see~\cite{joyal1996} for details.  
}
\label{fig:TSMCAxioms}
\label{fig:TSMCAxiomsInStringDiagrams}
\end{figure}
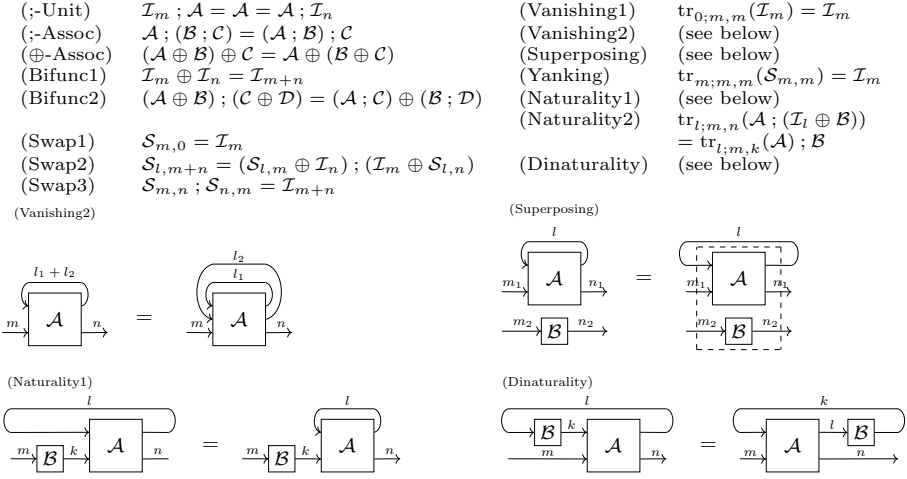

 The TSMC axioms  use 
 some ``positionless'' roMDPs as wires, such as  \emph{identities} $\ID_{m}$ ($\stackrel{m}{\text{---}\!\text{---}}$ in string diagrams) and  \emph{swaps} $\SYM_{m,  n}$ ($\times$).
 \iffull
 See \cref{def:idSwapRoMDPs}.
 \else
 See~\cite[Appendix A]{Watanabe23long} for details. 
 \fi
\begin{auxproof}
 \begin{definition}[identity, swap]\myindent\label{def:idSwapRoMDPs}
 Let $m, n$ be natural numbers. The \emph{identity} $\ID_{m}$ on $m$ (over $A$) is given by \(\ID_{m}=(m,\ m,\allowbreak\ \emptyset,\ A,\ E,\ !,\ !)\), where $E(i) = i$ for each $i\in \nset{m}$. The \emph{swap} $\SYM_{m,  n}$ on $m$ and $n$ (\emph{over} A) is given by $\SYM_{m,  n}=(m+n,\, n+m,\,\emptyset, A, E, !, !)$, where (1) $E(i) = i+n$ if $i\in \nset{m}$, and (2) $E(i) = i-n$  if $i\in [m+1, m+n]$.
 \end{definition}
\end{auxproof}
The proof of the following is routine. 
\iffull
For details, see~\cref{sec:proof_roMDPsTSMCEqAx}.
\else 
For details, see~\cite[Appendix B]{Watanabe23long}.
\fi

\begin{theorem}\label{thm:roMDPsTSMCEqAx}
 The three operations $\seqcomp, \oplus, \tr$ on roMDPs, defined in~\cref{subsec:roMDPsAndTSMCStringDiagrams}, satisfy the equational axioms in~\cref{fig:TSMCAxioms} up-to  isomorphisms 
 \iffull
 (\cref{def:isomorphismRoMDPs}).
 \else 
 (see~\cite[Appendix A]{Watanabe23long} for details).
 \fi \qed
\end{theorem}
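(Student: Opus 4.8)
The plan is to check each equational axiom of \cref{fig:TSMCAxioms} individually, all up to the natural notion of roMDP isomorphism (a bijection of position sets commuting with the entry function, transition probabilities, and reward function, and preserving arities, as in \cref{def:isomorphismRoMDPs}). Equality cannot hold strictly because $\seqcomp$ and $\oplus$ build positions by disjoint union: e.g.\ $Q^{\mdp{A}}+(Q^{\mdp{B}}+Q^{\mdp{C}})$ and $(Q^{\mdp{A}}+Q^{\mdp{B}})+Q^{\mdp{C}}$ are canonically isomorphic but not equal. Hence in each case the witnessing isomorphism is assembled from the canonical unit, associativity, and symmetry isomorphisms of the coproduct in $\Sets$, and the task is to verify that $E$, $P$, and $R$ are transported correctly along it.

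For the monoidal fragment --- ($;$-Unit), ($;$-Assoc), ($\oplus$-Assoc), (Bifunc1), (Bifunc2), (Swap1)--(Swap3) --- this verification is routine. On both sides of each equation the position set is the same coproduct of component position sets, the reward function is a copairing $[R^{\mdp{A}}, R^{\mdp{B}}, \dots]$ that agrees under reassociation, and the transition probabilities are given by the case distinction of \cref{def:seqCompRightwardOpenMDPs} on which component a target belongs to; matching the cases reduces each equation to the chosen coproduct identity. The swap axioms additionally require tracking the index arithmetic of the positionless entry function of $\SYM_{m,n}$, which is a finite combinatorial check on $\nset{m+n}$.

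The substantive part, and the main obstacle, is the trace fragment --- (Vanishing1), (Vanishing2), (Superposing), (Yanking), (Naturality1), (Naturality2), (Dinaturality). The trace of \cref{def:trace_tmdp} iterates the entry function $E^{\mdp{A}}$ along the traced open ends $\nset{l}$ and accumulates probabilities over the sets $\precedent(t)$ of loop-entrances eventually feeding into $t$, so a single step of the traced roMDP hides a finite sum over all ways of threading through the loop before re-entering $\mdp{A}$. My plan is to isolate this once as a key lemma: the entry function and $\precedent$-weighted transition probabilities of $\trmdp{l}{m}{n}(\mdp{A})$ realise exactly the ``loop-completed'' transitions of $\mdp{A}$, namely those obtained by identifying each traced exit $j\in\nset{l}$ with the traced entrance $j\in\nset{l}$ and summing over finite loop-threads. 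Given this lemma, each trace axiom becomes a statement about reassociating, splitting, or sliding loop-threads: (Vanishing1) and (Yanking) are immediate; (Superposing), (Naturality1), and (Naturality2) follow since the relevant component sits outside the loop; while (Vanishing2) and (Dinaturality) require showing that a loop-thread through $\nset{l_1+l_2}$ decomposes into nested threads, respectively that a thread is preserved when a component is slid around the loop.

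I expect (Vanishing2) and (Dinaturality) to be the hardest, as they demand careful reasoning about the combinatorics of loop-threads and the invariance of the accumulated $\precedent$-sums under regrouping of the traced dimension. Throughout, one must also confirm that each operation preserves the well-formedness conditions of \cref{def:openMDP}, in particular the ``unique access to each exit'' condition (item 6), on which the definition of $\precedent$ and the finiteness of the sums implicitly rely.
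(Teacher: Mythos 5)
Your plan coincides with the paper's proof: the appendix likewise first verifies well-definedness of the trace via the unique-access-to-exits condition (through a lemma that each traced end has at most one $\precedent$-target), then introduces the iterated entry function $\Estar{l}{\mdp{A}}$ --- exactly your ``loop-completed transitions'' lemma, yielding the reformulation $P^{\trmdp{l}{m}{n}(\mdp{A})}(s,a,s') = P^{\mdp{A}}(s,a,s'_l) + \sum_{i\in\nset{l}} P^{\mdp{A}}(s,a,i)\cdot\kron{\Estar{l}{\mdp{A}}(i)}{s'_l}$ --- and reduces the trace axioms to bookkeeping over these threads, treating the monoidal fragment and the canonical coproduct isomorphisms as routine. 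The only difference is presentational: the paper fully works out (Naturality1) and declares the remaining trace axioms ``simpler or similar,'' whereas you single out (Vanishing2) and (Dinaturality) as the hard cases.
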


\begin{corollary}[a TSMC $\roMDP$]\label{cor:roMDPsTSMC}
 Let $\roMDP$ be the category whose objects are natural numbers and whose arrows are roMDPs over the action set $A$ modulo 
 \iffull 
 isomorphisms (\cref{def:isomorphismRoMDPs}).
 \else 
 isomorphisms. 
 \fi Then the  operations $\seqcomp,\oplus,\tr,\ID,\SYM$ make $\roMDP$ a traced symmetric monoidal category (TSMC). \qed
\end{corollary}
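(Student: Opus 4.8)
The plan is to read the corollary off \cref{thm:roMDPsTSMCEqAx} by observing that passing to isomorphism classes upgrades its ``up-to-isomorphism'' equalities to strict equalities of arrows, which is exactly what the definition of a TSMC demands. The one ingredient the theorem does not supply directly, and which I would establish first, is that roMDP isomorphism is a \emph{congruence} for the three operations, i.e.\ that they descend to isomorphism classes. Concretely: if $\phi\colon\mdp{A}\xrightarrow{\cong}\mdp{A}'$ and $\psi\colon\mdp{B}\xrightarrow{\cong}\mdp{B}'$ are roMDP isomorphisms (bijections on positions fixing the arities and preserving $E$, $P$, $R$), then $\mdp{A}\seqcomp\mdp{B}\cong\mdp{A}'\seqcomp\mdp{B}'$, $\mdp{A}\oplus\mdp{B}\cong\mdp{A}'\oplus\mdp{B}'$, and $\trmdp{l}{m}{n}(\mdp{A})\cong\trmdp{l}{m}{n}(\mdp{A}')$. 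For $\seqcomp$ and $\oplus$ the witnessing isomorphism is the coproduct map $\phi+\psi$ on the combined position set, and for the trace it is $\phi$ itself, since tracing leaves the position set untouched; in each case one reads off the defining clauses of \cref{def:seqCompRightwardOpenMDPs,def:trace_tmdp} to check that entry functions, transition probabilities, and rewards are preserved.

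Once congruence is in hand, $\roMDP$ is a well-defined category: composition $\seqcomp$ and the identities $\ID_{m}$ act on isomorphism classes, and the axioms ($;$-Unit) and ($;$-Assoc) of \cref{thm:roMDPsTSMCEqAx}, which held only up to isomorphism, now hold as genuine equalities of arrows. For the monoidal structure I would use that the objects are the natural numbers with $m\oplus n:=m+n$ and unit $0$; since $(\Nat,+,0)$ is a strict commutative monoid, the associators and unitors are identities, so the pentagon and triangle coherences are trivial. Bifunctoriality of $\oplus$ is $(\oplus\text{-Assoc})$ together with (Bifunc1) and (Bifunc2), the symmetry is provided by $\SYM$, and the remaining symmetry/coherence conditions are (Swap1)--(Swap3); all hold strictly in the quotient by the theorem. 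This makes $\roMDP$ a strict SMC.

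Finally, I would take $\trmdp{l}{m}{n}$ as the required family of maps $\roMDP(l+m,\,l+n)\to\roMDP(m,n)$ and verify the trace axioms. (Vanishing1) and (Yanking) are the listed equalities, while (Vanishing2), (Superposing), (Naturality1), (Naturality2), and (Dinaturality) are the remaining axioms of \cref{fig:TSMCAxioms}; each holds up to isomorphism by \cref{thm:roMDPsTSMCEqAx} and hence strictly after quotienting. Assembling the strict SMC structure with this trace operator yields a TSMC, which is the claim.

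The main obstacle I anticipate is the congruence check for the trace operator. Unlike $\seqcomp$ and $\oplus$, whose underlying sets are plain coproducts, the traced entry function $E$ of \cref{def:trace_tmdp} is defined by iterating $E^{\mdp{A}}$ along a sequence $i_{0},\dots,i_{k}$ of intermediate open ends in $\nset{l}$, and the transition probabilities involve the sets $\precedent(t)$ of loop-entering open ends. I must check that an isomorphism $\phi$, which fixes every open end and only permutes internal positions, carries each such trajectory and each $\precedent$-set for $\mdp{A}$ to the corresponding one for $\mdp{A}'$. This holds because $\phi$ restricts to the identity on $\nset{l}$ and commutes with the entry functions, but it is the step that requires the most care. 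Everything else is routine bookkeeping, since the substantive equational content has already been discharged in \cref{thm:roMDPsTSMCEqAx}.
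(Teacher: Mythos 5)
Your proposal matches the paper's route: the corollary is read off \cref{thm:roMDPsTSMCEqAx} by passing to isomorphism classes, where the up-to-isomorphism equalities become strict, and the paper likewise treats this step as immediate (its only detailed work is the proof of the theorem itself, e.g.\ well-definedness of $\tr$ and (Naturality1) in the appendix). Your explicit congruence check---that $\seqcomp$, $\oplus$, $\tr$ descend to isomorphism classes, with $\phi+\psi$ (resp.\ $\phi$) as witness and the observation that $\phi$ fixes open ends so trajectories through $\nset{l}$ and the $\precedent$-sets are preserved---is a correct filling-in of a step the paper leaves implicit, not a different argument.
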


\subsection{Open MDPs and  ``Compact Closed'' String Diagrams}
\label{subsec:string_diagrams_openMDPs}
Following the outline  in~\cref{subsec:stringDiagramOfMDPsOutline}, we now introduce a bidirectional ``compact closed'' calculus of open MDPs (oMDPs),  using the $\Int$ construction~\cite{joyal1996} that  turns  TSMCs in general into compact closed categories (compCCs). 

 The following definition simply says  $\oMDP\defeq\Int(\roMDP)$, although it uses concrete terms adapted to the current context. 
\begin{auxproof}
 The construction is called Int since it resembles the construction of integers by pairs of natural numbers modulo a suitable equivalence (namely $(n,m)=n-m$) . 
\end{auxproof}

\begin{definition}[the category $\oMDP$]\label{def:oMDPCat}
 The \emph{category $\oMDP$ of open MDPs} is defined as follows. Its objects are pairs $(\nfr{m},\nfl{m})$ of natural numbers. Its arrows are defined by rightward open MDPs as follows:
\begin{equation}\label{eq:arrowsOfOMDPCat}
 \vcenter{\infer={ 
   \text{an arrow } \mdp{A}\colon \nfr{m}+\nfl{n}\longrightarrow  \nfr{n}+\nfl{m}
   \text{ in $\roMDP$, i.e.\ an roMDP}
   }{
   \text{an arrow } (\nfr{m}, \nfl{m}) \longrightarrow (\nfr{n}, \nfl{n}) 
   \text{ in $\oMDP$}
   }}
\end{equation}
where the double lines $=\joinrel=$ mean ``is the same thing as.'' 
\end{definition}

The definition may not immediately justify its name: no open MDPs appear there; only roMDPs do. The point is that we identify the roMDP $\mdp{A}$ in~\cref{eq:arrowsOfOMDPCat} with the oMDP $\Psi(\mdp{A})$ of the designated type, using ``twists'' in \cref{fig:twist}. 
\iffull 
See \cref{prop:twist}.
\else 
See \cite[Appendix A]{Watanabe23long} for details.
\fi
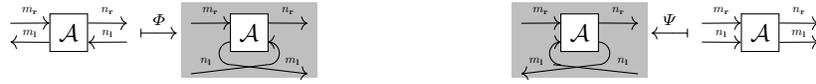
\begin{figure}[tbp]
 \centering
    \begin{subfigure}[b]{0.45\columnwidth}
        \centering
        \begin{tikzpicture}[
              innode/.style={draw, rectangle, minimum size=0.5cm},
              innodemini/.style={draw, rectangle, minimum size=0.5cm},
              interface/.style={inner sep=0},
              innodepos/.style={draw, circle, minimum size=0.2cm},
              ]
              \node[interface] (rdo1) at (0cm, 0.125cm) {};
              \node[innode, fill=white] (pos1) at (0.8cm, 0cm) {$\mdp{A}$};
              \node[interface] (ldo1) at (0cm, -0.125cm) {};
              \draw[->] (rdo1) to node[above] {\scalebox{0.5}{$\nfr{m}$}} ($(pos1.north west)!0.5!(pos1.west)$);
              \draw[<-] (ldo1) to node[above,yshift=-0.05cm] {\scalebox{0.5}{$\nfl{m}$}} ($(pos1.south west)!0.5!(pos1.west)$);
              \node[interface] (rcdo1) at (1.6cm, 0.125cm) {};
              \node[interface] (lcdo1) at (1.6cm, -0.125cm) {};
              \draw[<-] (rcdo1) to node[above] {\scalebox{0.5}{$\nfr{n}$}} ($(pos1.north east)!0.5!(pos1.east)$);
              \draw[->] (lcdo1) to node[above,yshift=-0.05cm] {\scalebox{0.5}{$\nfl{n}$}} ($(pos1.south east)!0.5!(pos1.east)$);
              \node[interface] (maps1) at (2cm, 0.1cm) {$\overset{\Phi}{\longmapsto}$};
              \fill[lightgray](2.3cm, 0.4cm)--(2.3cm, -0.6cm)--(4.1cm, -0.6cm)--(4.1cm, 0.4cm)--cycle;
              \node[interface] (rdo11) at (2.4cm, 0.125cm) {};
              \node[interface] (ldo11) at (2.4cm, -0.55cm) {};
              \node[interface] (rcdo11) at (4cm, 0.125cm) {};
              \node[interface] (lcdo11) at (4cm, -0.55cm) {};
              \node[innode, fill=white] (pos2) at (3.2cm, 0cm) {$\mdp{A}$};
              \draw[->] (rdo11) to node[above] {\scalebox{0.5}{$\nfr{m}$}} ($(pos2.north west)!0.5!(pos2.west)$);
              \draw[-] (2.95cm, -0.125cm) arc [radius=0.15, start angle = 90, end angle=270];
              \draw[->] ($(pos2.north east)!0.5!(pos2.east)$) to node[above] {\scalebox{0.5}{$\nfr{n}$}} (rcdo11);
              \draw[->] (2.95cm, -0.425cm) to node[at end, above, xshift=-0.2cm] {\scalebox{0.5}{$\nfl{m}$}} (lcdo11);
              \draw[->] (3.45cm, -0.425cm) arc [radius=0.15, start angle = 270, end angle=450];
              \draw[-] (3.45cm, -0.425cm) to node[at end, above, xshift=0.2cm] {\scalebox{0.5}{$\nfl{n}$}} (ldo11);
        \end{tikzpicture}
    \end{subfigure}
    \hfill
    \begin{subfigure}[b]{0.45\columnwidth}
        \centering
        \begin{tikzpicture}[
              innode/.style={draw, rectangle, minimum size=0.5cm},
              innodemini/.style={draw, rectangle, minimum size=0.5cm},
              interface/.style={inner sep=0},
              innodepos/.style={draw, circle, minimum size=0.2cm},
              ]
              \fill[lightgray](-0.1cm, 0.4cm)--(-0.1cm, -0.6cm)--(1.7cm, -0.6cm)--(1.7cm, 0.4cm)--cycle;
              \node[interface] (rdo11) at (0cm, 0.125cm) {};
              \node[interface] (ldo11) at (0cm, -0.55cm) {};
              \node[interface] (rcdo11) at (1.6cm, 0.125cm) {};
              \node[interface] (lcdo11) at (1.6cm, -0.55cm) {};
              \node[innode, fill=white] (pos2) at (0.8cm, 0cm) {$\mdp{A}$};
              \draw[->] (rdo11) to node[above] {\scalebox{0.5}{$\nfr{m}$}} ($(pos2.north west)!0.5!(pos2.west)$);
              \draw[<-] (0.55cm, -0.125cm) arc [radius=0.15, start angle = 90, end angle=270];
              \draw[->] ($(pos2.north east)!0.5!(pos2.east)$) to node[above] {\scalebox{0.5}{$\nfr{n}$}} (rcdo11);
              \draw[-] (0.45cm, -0.425cm) to node[at end, above, xshift=-0.2cm] {\scalebox{0.5}{$\nfl{n}$}} (lcdo11);
              \draw[-] (1.05cm, -0.425cm) arc [radius=0.15, start angle = 270, end angle=450];
              \draw[->] (1.05cm, -0.425cm) to node[at end, above, xshift=0.2cm] {\scalebox{0.5}{$\nfl{m}$}} (ldo11);
              \node[interface] (maps1) at (2cm, 0.1cm) {$\overset{\Psi}{\reflectbox{$\longmapsto$}}$};
              \node[interface] (rdo1) at (2.4cm, 0.125cm) {};
              \node[innode, fill=white] (pos1) at (3.2cm, 0cm) {$\mdp{A}$};
              \node[interface] (ldo1) at (2.4cm, -0.125cm) {};
              \draw[->] (rdo1) to node[above] {\scalebox{0.5}{$\nfr{m}$}} ($(pos1.north west)!0.5!(pos1.west)$);
              \draw[->] (ldo1) to node[above,yshift=-0.05cm] {\scalebox{0.5}{$\nfl{n}$}} ($(pos1.south west)!0.5!(pos1.west)$);
              \node[interface] (rcdo1) at (4cm, 0.125cm) {};
              \node[interface] (lcdo1) at (4cm, -0.125cm) {};
              \draw[<-] (rcdo1) to node[above] {\scalebox{0.5}{$\nfr{n}$}} ($(pos1.north east)!0.5!(pos1.east)$);
              \draw[<-] (lcdo1) to node[above,yshift=-0.05cm] {\scalebox{0.5}{$\nfl{m}$}} ($(pos1.south east)!0.5!(pos1.east)$);
        \end{tikzpicture}
    \end{subfigure}
    \caption{Turning oMDPs to roMDPs, and vice versa, via twists.}
     \label{fig:twist}
\end{figure}

We move on to describe algebraic operations for composing oMDPs. These operations come from the structure of $\oMDP$ as a compCC; the latter, in turn, arises canonically from the $\Int$ construction. 

\begin{definition}[$\seqcomp$ of oMDPs]
\label{def:seqCompOpenMDPs}
Let $\mdp{A}:(\nfr{m},\nfl{m})\rightarrow (\nfr{l},\nfl{l})$ and $\mdp{B}:(\nfr{l},\nfl{l})\rightarrow (\nfr{n},\nfl{n})$ be arrows in $\oMDP$ with the same action set $A$. Their \emph{sequential composition} $\mdp{A}\seqcomp\mdp{B}:(\nfr{m},\nfl{m})\rightarrow (\nfr{n},\nfl{n})$ is
defined by the string diagram  in~\cref{fig:seqCompIntOMDP}, formulated  in $\roMDP$. Textually the definition is
\begin{math}
  \mdp{A}\seqcomp\mdp{B}
 \defeq \trmdp{\nfl{l}}{\nfr{m} + \nfl{n}}{\nfr{n}+ \nfl{m}}\allowbreak\big((\SYM_{\nfl{l}, \nfr{m}}\oplus\ID_{\nfl{n}}) \seqcomp (\mdp{A}\oplus \ID_{\nfl{n}}) \seqcomp (\ID_{\nfr{l}}\oplus \SYM_{\nfl{m},\nfl{n}}) \seqcomp (\mdp{B}\oplus\ID_{\nfl{m}}) \seqcomp (\SYM_{\nfr{n},\nfl{l}}\oplus\ID_{\nfl{m}})\big)
\end{math}.
\end{definition}

\begin{figure}[tbp]
  \centering
    \begin{subfigure}[b]{0.45\columnwidth}
    \centering
    \scalebox{0.5}{
    \begin{tikzpicture}
    \draw (0cm, 1cm) to [out=165,in=195] (0cm,1.5cm);
    \draw[-] (0cm, 1.5cm) to (5.25cm, 1.5cm);
    \draw[arrows = {<[length=2mm]-}] (5.25cm, 1.5cm) to (10.5cm, 1.5cm);
    \draw (10.5cm, 1cm) to [out=15,in=345] (10.5cm,1.5cm);
    \node[] (doml) at (0.25cm, 1.2cm) {$\nfl{l}$};
    \node[] (domm) at (0.25cm, 0.2cm) {$\nfr{m}$};
    \node[] (domn) at (0.25cm, -0.8cm) {$\nfl{n}$};
    \draw[-] (0cm, 0cm) to (0.5cm, 0cm);
    \draw[-] (0.5cm, 0cm) to (1.5cm, 1cm);
    \draw[-] (1.5cm, 1cm) to (2cm, 1cm);
    \draw[-] (0cm, 1cm) to (0.5cm, 1cm);
    \draw[-] (0.5cm, 1cm) to (1.5cm, 0cm);
    \draw[-] (1.5cm, 0cm) to (2cm, 0cm);
    \draw[-] (0cm, -1cm) to (4.5cm, -1cm);
    \node[draw, rectangle, minimum size=1.5cm] (mdpA) at (3cm, 0.5cm) {\scalebox{2}{$\mdp{A}$}};
    \draw[arrows = {->[length=2mm]}] (2cm, 1cm) to ($(mdpA.north west)!0.33!(mdpA.west)$);
    \draw[arrows = {->[length=2mm]}] (2cm, 0cm) to ($(mdpA.south west)!0.33!(mdpA.west)$);
    \draw[-] ($(mdpA.north east)!0.33!(mdpA.east)$) to (6cm, 1cm);
    \draw[-] ($(mdpA.south east)!0.33!(mdpA.east)$) to (4.5cm, 0cm);
    \node[] (codoml) at (4.25cm, 1.2cm) {$\nfr{l}$};
    \node[] (codomm) at (4.25cm, 0.2cm) {$\nfl{m}$};
    \draw[-] (4.5cm, 0cm) to (5.5cm, -1cm);
    \draw[-] (4.5cm, -1cm) to (5.5cm, 0cm);
    \draw[-] (5.5cm, -1cm) to (6.5cm, -1cm);
    \draw[-] (5.5cm, 0cm) to (6cm, 0cm);
    \node[draw, rectangle, minimum size=1.5cm] (mdpB) at (7cm, 0.5cm) {\scalebox{2}{$\mdp{B}$}};
    \draw[arrows = {->[length=2mm]}] (6cm, 1cm) to ($(mdpB.north west)!0.33!(mdpB.west)$);
    \draw[arrows = {->[length=2mm]}] (6cm, 0cm) to ($(mdpB.south west)!0.33!(mdpB.west)$);
    \draw[arrows = {->[length=2mm]}] (6.5cm, -1cm) to (10.5cm, -1cm);
    \draw[-] ($(mdpB.south east)!0.33!(mdpB.east)$) to (9cm, 0cm);
    \draw[-] ($(mdpB.north east)!0.33!(mdpB.east)$) to (9cm, 1cm);
    \node[] (codoml2) at (8.75cm, 1.2cm) {$\nfr{n}$};
    \node[] (codomm2) at (8.75cm, 0.2cm) {$\nfl{l}$};
    \draw[-] (9cm, 0cm) to (10cm, 1cm);
    \draw[-] (9cm, 1cm) to (10cm, 0cm);
    \draw[arrows = {->[length=2mm]}] (10cm, 1cm) to (10.5cm, 1cm);
    \draw[arrows = {->[length=2mm]}] (10cm, 0cm) to (10.5cm, 0cm);
    \end{tikzpicture}
    }
\end{subfigure}
    \begin{subfigure}[b]{0.45\columnwidth}
    \centering
    \scalebox{0.5}{
    \begin{tikzpicture}
    \draw[-] (0cm, 1cm) to (0.5cm, 1cm);
    \node[] (dom1) at (0.25cm, 1.2cm) {$\nfr{m}$};
    \draw[-] (0cm, 0cm) to (0.5cm, 0cm);
    \node[] (dom2) at (0.25cm, 0.2cm) {$\nfr{k}$};
    \draw[-] (0cm, -1cm) to (0.5cm, -1cm);
    \node[] (don1) at (0.25cm, -0.8cm) {$\nfl{l}$};
    \draw[-] (0cm, -2cm) to (0.5cm, -2cm);
    \node[] (don2) at (0.25cm, -1.8cm) {$\nfl{n}$};
    \draw[-] (0.5cm, 1cm) to (1.5cm, 0cm);
    \draw[-] (0.5cm, 0cm) to (1.5cm, 1cm);
    \draw[-] (0.5cm, -1cm) to (1.5cm, -2cm);
    \draw[-] (0.5cm, -2cm) to (1.5cm, -1cm);
    \draw[-] (1.5cm, 1cm) to (4.5cm, 1cm);
    \draw[arrows = {->[length=2mm]}] (1.5cm, 0cm) to (2.25cm, 0cm);
    \draw[arrows = {->[length=2mm]}] (1.5cm, -1cm) to (2.25cm, -1cm);
    \draw[-] (1.5cm, -2cm) to (4.5cm, -2cm);
    \node[draw, rectangle, minimum size=1.5cm] (mdpA) at (3cm, -0.5cm) {\scalebox{2}{$\mdp{A}$}};
    \draw[-] ($(mdpA.north east)!0.33!(mdpA.east)$) to (4.5cm, 0cm);
    \node[] (codomAr) at (4.25cm, 0.2cm) {$\nfr{n}$};
    \draw[-] ($(mdpA.south east)!0.33!(mdpA.east)$) to (4.5cm, -1cm);
    \node[] (codomAl) at (4.25cm, -0.8cm) {$\nfl{m}$};
    \draw[-] (4.5cm, 1cm) to (5.5cm, 0cm);
    \draw[-] (4.5cm, 0cm) to (5.5cm, 1cm);
    \draw[-] (4.5cm, -1cm) to (5.5cm, -2cm);
    \draw[-] (4.5cm, -2cm) to (5.5cm, -1cm);
    \draw[arrows = {->[length=2mm]}] (5.5cm, 1cm) to (8.5cm, 1cm);
    \draw[arrows = {->[length=2mm]}] (5.5cm, 0cm) to (6.25cm, 0cm);
    \draw[arrows = {->[length=2mm]}] (5.5cm, -1cm) to (6.25cm, -1cm);
    \draw[arrows = {->[length=2mm]}] (5.5cm, -2cm) to (8.5cm, -2cm);
    \node[draw, rectangle, minimum size=1.5cm] (mdpB) at (7cm, -0.5cm) {\scalebox{2}{$\mdp{B}$}};
    \draw[arrows = {->[length=2mm]}] ($(mdpB.north east)!0.33!(mdpB.east)$) to (8.5cm, 0cm);
    \node[] (codomBr) at (8.1cm, 0.2cm) {$\nfr{l}$};
    \draw[arrows = {->[length=2mm]}] ($(mdpB.south east)!0.33!(mdpB.east)$) to (8.5cm, -1cm);
    \node[] (codomBl) at (8.1cm, -0.8cm) {$\nfl{k}$};
    \end{tikzpicture}
    }
\end{subfigure}
 \caption{String diagrams in $\roMDP$ for $\mdp{A}\seqcomp\mdp{B}$, $\mdp{A}\oplus\mdp{B}$ in $\oMDP$.}
\label{fig:seqCompIntOMDP}
\end{figure}

The definition of \emph{sum} $\oplus$ of oMDPs is similarly shown in the string diagram in \cref{fig:seqCompIntOMDP}, formulated in $\roMDP$. Definition of ``wires'' such as identities, swaps, \emph{units} ($\subset$ in string diagrams) and \emph{counits} ($\supset$) is easy, too.

\begin{auxproof}
 \begin{definition}[sum $\oplus$ of oMDPs]
 \label{def:sumOpenMDPs}
 Let $\mdp{A}:(\nfr{m},\nfl{m})\rightarrow (\nfr{n},\nfl{n})$ and $\mdp{B}:(\nfr{k},\nfl{k})\rightarrow (\nfr{l},\nfl{l})$ be arrows with the same action set $A$. Their sum $\mdp{A}\oplus\mdp{B}:(\nfr{m}+\nfr{k},\nfl{k}+\nfl{m})\rightarrow (\nfr{n}+\nfr{l},\nfl{l}+\nfl{n})$ is given by $\mdp{A}\oplus\mdp{B}\defeq (\SYM_{\nfr{m}, \nfr{k} }\oplus \SYM_{\nfl{l}, \nfl{n}}) \seqcomp (\ID_{\nfr{k}}\oplus \mdp{A}\oplus\ID_{\nfl{l}}) \seqcomp (\SYM_{\nfr{k}, \nfr{n}}\oplus \SYM_{\nfl{m}, \nfl{l}}) \seqcomp (\ID_{\nfr{n}}\oplus\mdp{B} \oplus \ID_{\nfl{m}})$.
 \end{definition}
 We emphasize that the above mathematical definitions are rigorous embodiments of the graphical intuition in~\cref{fig:seqCompOplusIllustrated}. 
\end{auxproof}

\begin{auxproof}
 In compCC string diagrams for oMDPs, we additionally use the following wires.
 \begin{definition}[identity, swap, unit, and counit]
 \label{def:seqCompSumOpenMDPs}
 Let $(\nfr{m}, \nfl{m}), (\nfr{n}, \nfl{n})\in \Nat\times \Nat$. The \emph{identity} $\ID_{(\nfr{m}, \nfl{m})}:(\nfr{m}, \nfl{m})\rightarrow (\nfr{m}, \nfl{m})$ on $(\nfr{m}, \nfl{m})$ is given by $\ID_{(\nfr{m}, \nfl{m})}\defeq\ID_{\nfr{m}+\nfl{m}}$. The \emph{swap} $\SYM_{(\nfr{m}, \nfl{m}), (\nfr{n}, \nfl{n})}:(\nfr{m}+\nfr{n}, \nfl{n}+\nfl{m})\rightarrow (\nfr{n}+\nfr{m}, \nfl{m}+\nfl{n})$ on $(
 \nfr{m}, \nfl{m}), (\nfr{n}, \nfl{n})$ is given by $\SYM_{(\nfr{m}, \nfl{m}), (\nfr{n}, \nfl{n})}\defeq \SYM_{\nfr{m},\nfr{n}}\oplus \SYM_{\nfl{n},\nfl{m}}$. The \emph{unit} $\dunit_{(\nfr{m}, \nfl{m})}:(0, 0)\rightarrow (\nfr{m}+\nfl{m}, \nfr{m}+\nfl{m})$ on $(\nfr{m}, \nfl{m})$ is given by $\dunit_{(\nfr{m}, \nfl{m})}\defeq \ID_{\nfr{m}+\nfl{m}}$, and the \emph{counit} $\dcounit_{(\nfr{m}, \nfl{m})}:(\nfl{m} + \nfr{m}, \nfl{m}+\nfr{m})\rightarrow (0, 0)$ on $(\nfr{m}, \nfl{m})$ is given by $\dcounit_{(\nfr{m}, \nfl{m})}\defeq \ID_{\nfl{m}+\nfr{m}}$.
 \end{definition}
\end{auxproof}

\begin{theorem}[$\oMDP$ is a compCC]\label{thm:oMDPIsCompactClosed}
 The category $\oMDP$ (\cref{def:oMDPCat}), equipped with the operations $\seqcomp,\oplus$, is a compCC.
\qed
\end{theorem}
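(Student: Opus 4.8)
The plan is to avoid verifying the compact closed axioms directly on open MDPs and instead invoke the general categorical machinery of the $\Int$ construction. By \cref{cor:roMDPsTSMC}, $\roMDP$ is a traced symmetric monoidal category (TSMC) under $\seqcomp, \oplus, \tr, \ID, \SYM$. The theorem of Joyal, Street, and Verity~\cite{joyal1996} states that for \emph{any} TSMC $\mathcal{C}$, the category $\Int(\mathcal{C})$ is compact closed, with its entire compCC structure constructed canonically from the traced monoidal structure of $\mathcal{C}$. Since $\oMDP \defeq \Int(\roMDP)$ by \cref{def:oMDPCat}, this abstract result immediately yields that $\oMDP$ is a compCC, and every compCC axiom is inherited for free.

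The only substantive work, then, is to confirm that the concrete operations $\seqcomp$ and $\oplus$ on $\oMDP$ introduced in \cref{def:seqCompOpenMDPs} (together with the companion definition of sum and the wires) coincide with the operations that the $\Int$ construction produces from those of $\roMDP$. First I would recall the explicit form of $\Int$: an object is a pair $(\nfr{m}, \nfl{m})$, an arrow $(\nfr{m}, \nfl{m}) \to (\nfr{n}, \nfl{n})$ is an $\roMDP$-arrow $\nfr{m} + \nfl{n} \to \nfr{n} + \nfl{m}$ (exactly \cref{eq:arrowsOfOMDPCat}), and the composite of $\mdp{A}$ and $\mdp{B}$ is obtained by tracing out the shared intermediate wire, interleaved with the symmetries $\SYM$ that route the rightward ($\dr$) and leftward ($\dl$) interfaces correctly. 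Comparing this with the textual formula in \cref{def:seqCompOpenMDPs}---which traces with respect to $\nfl{l}$ and threads the same pattern of swaps---one sees that the two agree verbatim, and the string diagram in \cref{fig:seqCompIntOMDP} is precisely the canonical $\Int$-composition. The analogous identifications for $\oplus$, for the identities and swaps, and for the units and counits are routine.

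The main obstacle is bookkeeping rather than anything conceptual: one must be scrupulous about the ordering of the rightward and leftward components within each arity, because the $\Int$ construction threads the leftward wires backwards through the trace while keeping the rightward wires forward. The ``twists'' $\Phi, \Psi$ of \cref{fig:twist} encode exactly this reindexing, and the key lemma to establish is that $\Psi$ is a structure-preserving bijection between $\roMDP$-arrows and oMDPs of the designated type. Once this correspondence is pinned down, the compCC structure on $\oMDP$ is entirely inherited from the general $\Int$ theorem, and---crucially---no MDP-specific computation on transition probabilities or reward functions is ever needed.
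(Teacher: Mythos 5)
Your proposal is correct and takes essentially the same route as the paper: since $\oMDP\defeq\Int(\roMDP)$ and $\roMDP$ is a TSMC (\cref{cor:roMDPsTSMC}), the compact closed structure is inherited for free from the general $\Int$ construction of~\cite{joyal1996}, which is exactly why the paper states the theorem with no further proof. Your remaining check---that the concrete operations of \cref{def:seqCompOpenMDPs} and the sum coincide with the canonical $\Int$ operations, mediated by the twists of \cref{fig:twist}---is the same routine identification the paper makes implicitly via \cref{eq:arrowsOfOMDPCat} and \cref{fig:seqCompIntOMDP}.
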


\section{Decomposition Equalities for Open Markov Chains}
\label{sec:compositionalAnalyssiOfOpenMarkovChains}
Here we exhibit some basic equalities that decompose the behavior of (rightward open) Markov chains. We start with such equalities on \emph{reachability probabilities} (which are widely known) and extend them to equalities on \emph{expected rewards} (which seem less known). Notably, the latter equalities involve not only expected rewards but also reachability probabilities.

Here we focus on \emph{rightward open Markov chains (roMCs)}, since the extension to richer settings is taken care of by categorical constructions. See \cref{fig:catsFunctors}.
\begin{definition}[roMC]
A \emph{rightward open Markov chain (roMC)} $\mc{C}$ from $m$ to $n$ is an roMDP from $m$ to $n$ over the singleton action set $\{\star\}$.

For an roMC $\mc{C}$, its \emph{reachability probability} $\Reacha{i}{j}{\mc{C}}$ and \emph{expected reward} $\ETRa{i}{j}{\mc{C}}$ are defined as in \cref{def:openMDPReachProbExpCumRew}.  The scheduler $\tau$ is omitted since it is unique.

Rightward open MCs, as a special case of roMDPs, form a TSMC (\cref{cor:roMDPsTSMC}). It is denoted by $\roMC$. 
\end{definition}

\vspace{.1em}
The following equalities are well-known,
although they are  not stated in terms of open MCs. 
Recall that $\Reacha{i}{k}{\mc{C}}$ is the probability of reaching the exit $k$ from the entrance $i$ in $\mc{C}$ (\cref{def:openMDPReachProbExpCumRew}). Recall also the  definitions of $\mc{C}\seqcomp\mc{D}$ (\cref{def:seqCompRightwardOpenMDPs}) and $\trmdp{l}{m}{n}{}(\mc{E})$ (\cref{def:trace_tmdp}), which are essentially as in \cref{fig:seqCompOplusIllustrated} and \cref{fig:trace_of_rmdp}.

\begin{proposition}[decomposition equalities for $\mathrm{RPr}$]
\label{prop:MCeq}
    Let $\mc{C}:m\rightarrow l$, $\mc{D}:l\rightarrow n$ and $\mc{E}:l+m\rightarrow l+n$ be  roMCs. The following matrix equalities hold. 
    \begin{align}
&\small
\begin{array}{l}
         \bigl[\,\Reacha{i}{j}{\mc{C} \seqcomp \mc{D}}\,\bigr]_{i\in\nset{m},j\in\nset{n}}  \;=\; 
     \bigl[\,\Reacha{i}{k}{\mc{C}}\,\bigr]_{i\in\nset{m},k\in\nset{l}} \cdot
          \bigl[\,\Reacha{k}{j}{\mc{D}}\,\bigr]_{k\in\nset{l},j\in\nset{n}}\ ,
\end{array}
   \\
&\small\begin{array}{l}
   \bigl[\,\Reacha{i}{j}{\trmdp{l}{m}{n}{}(\mc{E})} \,\bigr]_{i\in\nset{m},j\in\nset{n}}
    \;=\;    \bigl[\,
    \Reacha{l+i}{l+j}{\mc{E}} 
   \,\bigr]_{i\in\nset{m},j\in\nset{n}}
+ \textstyle\sum_{d\in\Nat} A\cdot B^{d} \cdot C.
\end{array}    \label{eq:infiniteSumTraceRPr}
\end{align}
 Here $\bigl[\,\Reacha{i}{j}{\mc{C} \seqcomp \mc{D}}\,\bigr]_{i\in\nset{m},j\in \nset{n}}$ denotes the $m\times n$ matrix with the designated components; other matrices are similar. The matrices $A, B, C$ are given by $A\defeq  \bigl[\,
     \Reacha{l+i}{k}{\mc{E}}
   \,\bigr]_{i\in\nset{m},k\in\nset{l}}$, $B\defeq  \bigl[\,
    \Reacha{k}{k'}{\mc{E}}
   \,\bigr]_{k\in\nset{l},k'\in\nset{l}}$,
   and $C\defeq\bigl[\,
    \Reacha{k'}{l+j}{\mc{E}}
   \,\bigr]_{k'\in\nset{l},j\in\nset{n}}$. In the last line, note that the matrix in the middle is the $d$-th power.
\qed
\end{proposition}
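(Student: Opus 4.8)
The plan is to prove both matrix equalities directly from \cref{def:openMDPReachProbExpCumRew}, which expresses $\Reacha{i}{j}{\cdot}$ as a sum of path probabilities, by exhibiting probability-preserving bijections on the relevant path sets; since an roMC carries a unique scheduler, I suppress it throughout. The trace equality is the probabilistic instance of the kind of feedback (execution) formula alluded to in the introduction, and it is the substantial half of the statement.

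For the sequential-composition equality, I would first observe that every path from an entrance $i\in\nset{m}$ to an exit $j\in\nset{n}$ in $\mc{C}\seqcomp\mc{D}$ crosses the shared interface $\nset{l}$ exactly once. Indeed, by \cref{def:seqCompRightwardOpenMDPs} the transitions out of a $\mc{C}$-position that leave $Q^{\mc{C}}$ land in $Q^{\mc{D}}+\nset{n}$ through a unique glued endpoint $k\in\nset{l}$ (via $E^{\mc{D}}$), while $\mc{D}$-positions never transition back into $Q^{\mc{C}}$. This yields a bijection between $\Pathstra{\mc{C}\seqcomp\mc{D}}{}{i}{j}$ and $\coprod_{k\in\nset{l}}\Pathstra{\mc{C}}{}{i}{k}\times\Pathstra{\mc{D}}{}{k}{j}$, under which $\ProPathc{\mc{C}\seqcomp\mc{D}}(\pi)=\ProPathc{\mc{C}}(\pi_1)\cdot\ProPathc{\mc{D}}(\pi_2)$, because the single crossing transition is assigned, by \cref{def:seqCompRightwardOpenMDPs}, exactly the probability that $\mc{C}$ exits at $k$ and $\mc{D}$ enters accordingly. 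Summing over all paths and over $k$ gives $\Reacha{i}{j}{\mc{C}\seqcomp\mc{D}}=\sum_{k\in\nset{l}}\Reacha{i}{k}{\mc{C}}\cdot\Reacha{k}{j}{\mc{D}}$, i.e.\ the claimed matrix product.

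For the trace equality, I would classify each path in $\trmdp{l}{m}{n}(\mc{E})$ from $i\in\nset{m}$ to $j\in\nset{n}$ by the number $t\ge 0$ of times it traverses the feedback loop. Unfolding \cref{def:trace_tmdp}, a path using the loop $t$ times decomposes uniquely into $t+1$ consecutive segments that are paths in $\mc{E}$: a first segment from the real entrance $l+i$ to a loop exit $k_1\in\nset{l}$ (or, when $t=0$, directly to the real exit $l+j$), intermediate segments from loop entrance $k_s$ to loop exit $k_{s+1}$, and a final segment from loop entrance $k_t$ to the real exit $l+j$. The deterministic positionless chains through loop endpoints that \cref{def:trace_tmdp} folds into $P$ via $\precedent$ are exactly what is summed over inside the reachability probability of each segment, and the trace's definition of $P$ makes the segment probabilities multiply across each loop junction. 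Summing over the intermediate indices $k_1,\dots,k_t\in\nset{l}$ turns the contribution of the $t$-loop paths into $(A\cdot B^{t-1}\cdot C)_{ij}$ for $t\ge 1$ and into $\Reacha{l+i}{l+j}{\mc{E}}$ for $t=0$; reindexing $d=t-1$ collects the terms with $t\ge 1$ into $\sum_{d\in\Nat}A\cdot B^{d}\cdot C$, which is the right-hand side.

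The hard part will be the trace equality, on two fronts. First, I must check that the segment decomposition is a genuine probability-preserving bijection: this needs a careful unfolding of how $P$ in \cref{def:trace_tmdp} re-routes a loop exit $k$ back to loop entrance $k$ and how $\precedent$ accounts for the chains of positionless loop endpoints, verifying that no probability mass is lost or double-counted at these junctions. Second, I must justify the infinite sum. Here it is cleanest to note that the partial sum up to $t=T$ equals the total probability of those paths that use the loop at most $T$ times, hence is bounded by $\Reacha{i}{j}{\trmdp{l}{m}{n}(\mc{E})}\le 1$; since all terms are non-negative and every finite path uses the loop only finitely often, monotone convergence lets me interchange the sum over $t$ with the sums over paths and intermediate indices, so the series converges and its value is exactly the left-hand side. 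In particular, convergence of $\sum_d B^{d}$ need not be assumed separately—it follows a posteriori from the finiteness of the left-hand side.
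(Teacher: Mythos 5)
Your proposal is correct, and its two halves relate differently to the paper's proof. For the sequential-composition equality you argue exactly as the paper does (the paper dispatches it by ``distinguishing cases on the intermediate open end $k$,'' which is your path-splitting sum, spelled out for the $\mathrm{ERw}$ analogue in the appendix proof of \cref{prop:MCERweq}). For the trace equality \cref{eq:infiniteSumTraceRPr}, however, you take a genuinely different route: you prove the execution formula directly, decomposing paths of $\trmdp{l}{m}{n}(\mc{E})$ by the number of loop crossings and justifying the infinite sum by monotone convergence with partial sums bounded by $\Reacha{i}{j}{\trmdp{l}{m}{n}(\mc{E})}\le 1$. The paper instead characterizes both sides as least solutions of the linear equation system of \cref{prop:linear_equations} and invokes the Kleene fixed-point theorem, citing standard Markov-chain reachability theory \cite{BaierKatoen08}; this is shorter given that machinery, and it doubles as the computational method (solving linear systems) that the implementation needs anyway, while your direct series argument is self-contained, makes the convergence issue transparent, and explains \emph{why} the matrix powers $B^d$ appear. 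One imprecision worth fixing in your write-up: neither decomposition is literally a bijection of path sets. By \cref{def:seqCompRightwardOpenMDPs} and \cref{def:trace_tmdp}, a single transition of the composite carries a \emph{sum} of probabilities over interface ends (the $\delta_{E^{\mc{D}}(i)=s'}$ clause, respectively the $\precedent$ clause), so the same position sequence in $\mc{C}\seqcomp\mc{D}$ or $\trmdp{l}{m}{n}(\mc{E})$ can arise from several interface decompositions; the correct statement is that expanding each composite transition probability as a sum yields a mass-preserving correspondence between paths \emph{decorated with crossing data} and tuples of $\mc{E}$-segments (including positionless segments, which is how the deterministic chains through loop ends are absorbed into $\Reacha{k}{k'}{\mc{E}}$). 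You flag this junction bookkeeping as the hard part, and once phrased as above it goes through, so this is a presentational rather than a mathematical gap.
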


\vspace{.1em}
The first equality is easy, distinguishing cases  on the intermediate open end $k$ (mutually exclusive since MCs are rightward). The second  says
\begin{displaymath}
 \includegraphics[width=.45\textwidth]{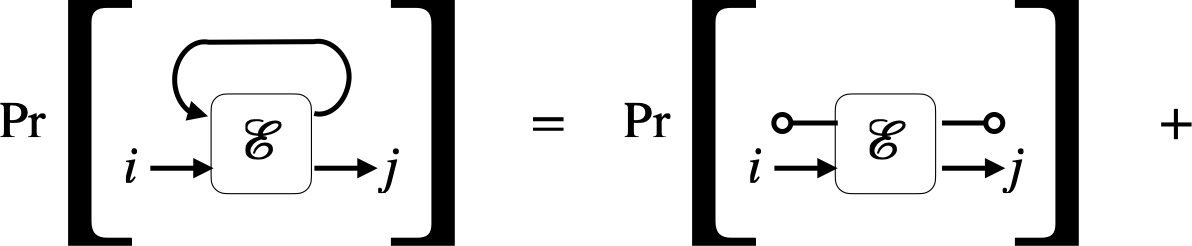}
 \hspace{7.5pt}
 \includegraphics[width=.48\textwidth]{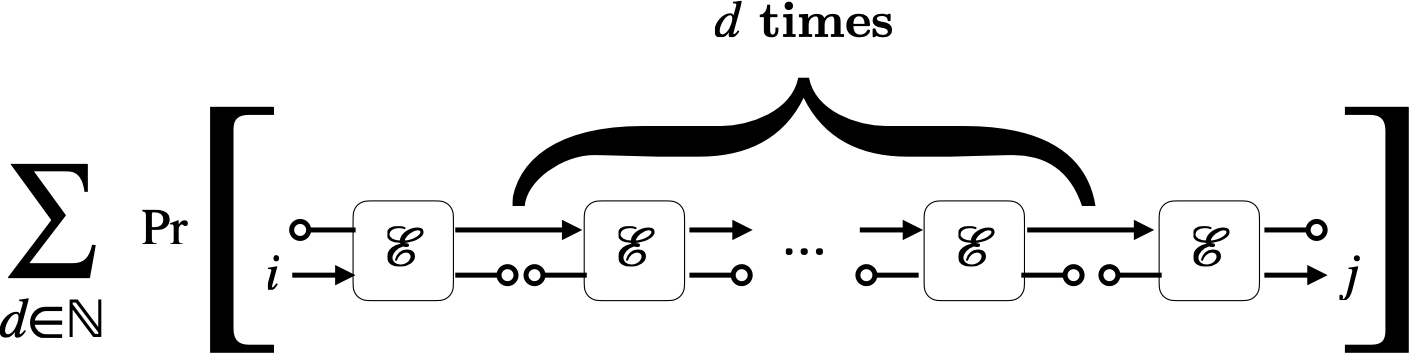}
\end{displaymath}
which is  intuitive. Here, the small circles in the diagram correspond to dead ends.
It is known as \emph{Girard's execution formula}~\cite{girard1989geometry} in linear logic. 

We  now extend \cref{prop:MCeq} to expected rewards $\ETRa{i}{j}{\mc{C}}$.

\vspace{.1em}
\noindent
\begin{proposition}[decomposition eq.\ for $\mathrm{ERw}$]
\label{prop:MCERweq}
    Let $\mc{C}:m\rightarrow l$, $\mc{D}:l\rightarrow n$ and $\mc{E}:l+m\rightarrow l+n$ be  roMCs. The following equalities of matrices hold.
\begin{align}
&\small\begin{array}{rl}
           \bigl[\,\ETRa{i}{j}{\mc{C} \seqcomp \mc{D}}\,\bigr]_{
	       i\in\nset{m}
              ,
              j\in\nset{n}
            }
     \;=\; 
     &\bigl[\,\Reacha{i}{k}{\mc{C}}\,\bigr]_{i\in\nset{m},k\in\nset{l}} \cdot
          \bigl[\,\ETRa{k}{j}{\mc{D}}\,\bigr]_{k\in\nset{l},j\in\nset{n}} 
   \\
   &+ \bigl[\,\ETRa{i}{k}{\mc{C}}\,\bigr]_{i\in\nset{m},k\in\nset{l}} \cdot
          \bigl[\,\Reacha{k}{j}{\mc{D}}\,\bigr]_{k\in\nset{l},j\in\nset{n}}\ ,
\end{array}   
\\
&\small\begin{array}{rl}
    & \bigl[\,\ETRa{i}{j}{\trmdp{l}{m}{n}{}(\mc{E})} \,\bigr]_{i\in\nset{m},j\in\nset{n}}
    \;=\;    \bigl[\,
    \ETRa{l+i}{l+j}{\mc{E}} 
   \,\bigr]_{i\in\nset{m},j\in\nset{n}}
    +  \textstyle\sum_{d\in\Nat} A\cdot B^{d}\cdot C\ .
\end{array}    \label{eq:infiniteSumTraceERw}
\end{align}
Here $A, B,C $ are the following $m\times 2l$, $2l\times 2l$, $2l\times n$ matrices.
\begin{align*}
&\small\begin{array}{ll}
A  &= \begin{pmatrix}
 \bigl[\,
    \Reacha{l+i}{k}{\mc{E}} 
   \,\bigr]_{i\in\nset{m},k\in\nset{l}} & \bigl[\,
    \ETRa{l+i}{k}{\mc{E}} 
   \,\bigr]_{i\in\nset{m},k\in\nset{l}} \\
\end{pmatrix},
\\
B &= \begin{pmatrix}
 \bigl[\,
    \Reacha{k}{k'}{\mc{E}} 
   \,\bigr]_{k\in\nset{l},k'\in\nset{l}} & \bigl[\,
    \ETRa{k}{k'}{\mc{E}} 
   \,\bigr]_{k\in\nset{l},k'\in\nset{l}} \\
   \bigl[\,
    0
   \,\bigr]_{k\in\nset{l},k'\in\nset{l}} & \bigl[\,
    \Reacha{k}{k'}{\mc{E}} 
   \,\bigr]_{k\in\nset{l},k'\in\nset{l}} 
\end{pmatrix},
\\
C &= \begin{pmatrix}
 \bigl[\,
    \ETRa{k'}{l+j}{\mc{E}} 
   \,\bigr]_{k'\in\nset{l},j\in\nset{n}}\\
   \bigl[\,
    \Reacha{k'}{l+j}{\mc{E}} 
   \,\bigr]_{k'\in\nset{l},j\in\nset{n}} \\
\end{pmatrix}.
\end{array}
\end{align*}

\vspace*{-2em}
\qed
\end{proposition}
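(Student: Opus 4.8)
The plan is to derive both equalities directly from the path-sum definitions of $\Reacha{i}{j}{\cdot}$ and $\ETRa{i}{j}{\cdot}$ (\cref{def:probRewPath,def:openMDPReachProbExpCumRew}), the only new ingredient beyond \cref{prop:MCeq} being the \emph{Leibniz (product) rule} for the reward of a concatenated path. The two facts I will use repeatedly are that when a path $\pi$ is split into consecutive segments $\sigma_1,\dots,\sigma_t$, its probability factorizes as the product $\prod_s \ProPathc{\mc{E}}(\sigma_s)$ of the segment probabilities while its reward is the \emph{sum} $\sum_s \PReward{\mc{E}}(\sigma_s)$ of the segment rewards. Here it is essential that the open ends at which segments are glued---the intermediate exit $k\in\nset{l}$ for $\seqcomp$ and the looped ends in $\nset{l}$ for $\tr$---are \emph{not} positions of the composite MC, so they carry no reward and nothing is double-counted; this is exactly why the constructions in \cref{def:seqCompRightwardOpenMDPs,def:trace_tmdp} leave the position set unchanged.

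\textbf{Sequential composition.} As in the $\mathrm{RPr}$ case, since roMCs are rightward each path $\mypath{i}{j}$ in $\mc{C}\seqcomp\mc{D}$ crosses from $Q^{\mc{C}}$ to $Q^{\mc{D}}$ exactly once, hence factors uniquely as $\pi_{\mc{C}}\cdot\pi_{\mc{D}}$ with $\pi_{\mc{C}}$ a $\mc{C}$-path from $i$ to a unique intermediate $k\in\nset{l}$ and $\pi_{\mc{D}}$ a $\mc{D}$-path from $k$ to $j$. Substituting $\ProPathc{\mc{C}\seqcomp\mc{D}}(\pi)=\ProPathc{\mc{C}}(\pi_{\mc{C}})\,\ProPathc{\mc{D}}(\pi_{\mc{D}})$ and $\PReward{\mc{C}\seqcomp\mc{D}}(\pi)=\PReward{\mc{C}}(\pi_{\mc{C}})+\PReward{\mc{D}}(\pi_{\mc{D}})$ into $\ETRa{i}{j}{\mc{C}\seqcomp\mc{D}}$ and distributing the sum-of-two-rewards splits the expression into two pieces: in the first the reward is carried by $\pi_{\mc{C}}$ (a factor $\ETRa{i}{k}{\mc{C}}$) while $\pi_{\mc{D}}$ contributes only its probability (a factor $\Reacha{k}{j}{\mc{D}}$), and in the second the roles are swapped. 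Summing over the intermediate $k$ is precisely matrix multiplication, yielding the stated $[\Reacha{i}{k}{\mc{C}}]\cdot[\ETRa{k}{j}{\mc{D}}] + [\ETRa{i}{k}{\mc{C}}]\cdot[\Reacha{k}{j}{\mc{D}}]$.

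\textbf{Trace.} I would reuse the segment decomposition underlying \cref{prop:MCeq}: a path of $\trmdp{l}{m}{n}{}(\mc{E})$ from $i$ to $j$ either runs directly $l{+}i\to l{+}j$ inside $\mc{E}$ without touching the loop (this accounts for the leading term $\ETRa{l+i}{l+j}{\mc{E}}$), or it decomposes through the looped ends $\nset{l}$ into $d+2$ segments for some $d\in\Nat$: an initial segment $l{+}i\to k_1$, then $d$ middle segments $k_s\to k_{s+1}$ $(s=1,\dots,d)$, then a final segment $k_{d+1}\to l{+}j$. By the Leibniz rule $\ETRa{i}{j}{\trmdp{l}{m}{n}{}(\mc{E})}$ then equals the leading term plus, for each $d$, the sum over all decompositions and over the \emph{one} segment that carries the reward, that segment contributing an $\ETRa{}{}{\mc{E}}$ factor and every other segment its $\Reacha{}{}{\mc{E}}$ factor. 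The heart of the argument is that this ``exactly one $\mathrm{ERw}$ among many $\mathrm{RPr}$'' sum is realized by the block-upper-triangular $B$: its two diagonal blocks $[\Reacha{k}{k'}{\mc{E}}]$ propagate probability in the ``reward-not-yet-taken'' and ``reward-already-taken'' modes, while the upper-right block $[\ETRa{k}{k'}{\mc{E}}]$ switches modes exactly once, so that the relevant block of $B^{d}$ is the Leibniz expansion $\sum_{s}(\mathrm{RPr})^{s}\,\mathrm{ERw}\,(\mathrm{RPr})^{d-s}$. The vectors $A$ and $C$ inject and terminate this bookkeeping: the two halves of $A$ let the reward be taken (or not) in the initial segment, and those of $C$ force it to be taken in the final segment iff it has not been taken yet. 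I would finish by expanding $A\,B^{d}\,C$ and checking its entries are exactly the per-$d$ Leibniz sums.

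\textbf{Main obstacle.} The sequential case is routine; the delicate part is the trace equation, and specifically that the block-triangular encoding is faithful. The obstacles I expect are: (i) proving the segment decomposition is a bijection, each composite path corresponding to a unique tuple $(d;\,k_1,\dots,k_{d+1};\,\sigma_0,\dots,\sigma_{d+1})$, matching the combinatorics already implicit in \cref{prop:MCeq}; (ii) confirming that the upper-right block of $B^{d}$ is literally the Leibniz sum, so that $A\,B^{d}\,C$ counts every composite path with its reward assigned to exactly one segment, with neither omission nor double counting; and (iii) the infinite sum over $d$, whose convergence parallels that of the $\mathrm{RPr}$ sum $\sum_{d}A\cdot B^{d}\cdot C$ already present in \cref{prop:MCeq} (the spectral radius of $[\Reacha{k}{k'}{\mc{E}}]$ being below $1$ on the relevant subspace) together with the finiteness of each $\ETRa{i}{j}{\cdot}$ (\cref{def:openMDPReachProbExpCumRew}).
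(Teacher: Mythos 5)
Your handling of the sequential-composition equality coincides with the paper's proof: expand $\ETRa{i}{j}{\mc{C}\seqcomp\mc{D}}$ as a path sum, use the unique crossing through an intermediate $k\in\nset{l}$ to factor each path, split the reward sum into its $\mc{C}$- and $\mc{D}$-parts, and reorder summations. For the trace equality, however, you take a genuinely different route. The paper does not expand the infinite sum at all: it proves \cref{eq:infiniteSumTraceERw} by showing it is \emph{equivalent to the linear-equation characterization} of \cref{prop:linear_equations}, i.e., it identifies both sides with the least solution of the system \cref{eq:linearEqERw} (whose ERw-part is coupled to the RPr-unknowns of \cref{eq:linearEqRPr}), outsourcing the combinatorics and convergence to the Kleene fixed-point theorem and standard MC arguments. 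Your direct approach---bijective segment decomposition through the looped ends plus the block-upper-triangular ``mode'' bookkeeping, verified by computing that the upper-right block of $B^{d}$ is the Leibniz sum---is self-contained and explains \emph{why} $A$, $B$, $C$ have exactly the shapes they do, which the fixed-point route leaves opaque; the paper's route is shorter and connects directly to the computational method actually used downstream (solving the linear systems). Two small repairs your sketch would need when carried out: first, the upper-right block of $B^{d}$ is $\sum_{s=0}^{d-1}P^{s}RP^{d-1-s}$ with exponents summing to $d-1$ (writing $P=[\Reacha{k}{k'}{\mc{E}}]$, $R=[\ETRa{k}{k'}{\mc{E}}]$), not $\sum_{s}P^{s}RP^{d-s}$ as written---the expansion $A\,B^{d}\,C = P_{A}P^{d}R_{C}+P_{A}\bigl(\sum_{s=0}^{d-1}P^{s}RP^{d-1-s}\bigr)P_{C}+R_{A}P^{d}P_{C}$ then matches exactly the three cases ``reward in final / middle / initial segment'' for a path visiting the loop ends $d+1$ times; second, convergence should not be argued via a spectral radius below $1$ (false in general: a substochastic loop block can have spectral radius $1$, e.g., an almost-sure loop with vanishing exit probabilities, in which case the offending terms are simply zero), but by monotone convergence: all terms are nonnegative and the partial sums are dominated by the left-hand side, which is finite by \cref{def:openMDPReachProbExpCumRew}. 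With these fixed, your argument is a valid, more explicit alternative to the paper's proof.
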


\vspace{.1em}
\cref{prop:MCERweq} seems new, although proving them is not hard once the statements are given (\iffull
see~\cref{sec:proof_MCERweq} for details\else
see~\cite[Appendix C]{Watanabe23long} for details\fi). 
They enable one to compute the expected rewards of composite roMCs $\mc{C} \seqcomp \mc{D}$ and $\trmdp{l}{m}{n}{\mc{E}}$ from those of component roMCs $\mc{C},\mc{D},\mc{E}$. They also signify the role of reachability probabilities in such computation, suggesting their use in the definition of semantic categories (cf.\ granularity of semantics in~\cref{sec:intro}). 

The last equalities in \cref{prop:MCeq,prop:MCERweq} involve infinite sums $\sum_{d\in\Nat}$, and one may wonder how to compute them. A key is their characterization as  \emph{least fixed points} via the Kleene theorem: the desired quantity on the left side ($\mathrm{RPr}$ or $\mathrm{ERw}$) is a solution of a suitable linear equation; see \cref{prop:linear_equations}. With the given definitions, the proof of \cref{prop:MCeq,prop:MCERweq}   is (lengthy but) routine work (see e.g.~\cite[Thm. 10.15]{BaierKatoen08}).

\begin{proposition}[linear equation characterization for
\cref{eq:infiniteSumTraceRPr} and \cref{eq:infiniteSumTraceERw}]
\label{prop:linear_equations}
    Let $\mc{E}:l+m\rightarrow l+n$ be an  roMC, and $k\in \nset{l+1,l+n}$ be a specified exit of $\mc{E}$. 
 Consider the following linear equation on an unknown vector $[x_i]_{i\in \nset{l+m}}$:
    \begin{align}\label{eq:linearEqRPr}
         \bigl[\,x_i\,\bigr]_{i\in\nset{l+m}} =   \bigl[\,\Reacha{i}{k}{\mc{E}}\,\bigr]_{i\in\nset{l+m}} +  \bigl[\,\Reacha{i}{j}{\mc{E}}\,\bigr]_{i\in\nset{l+m}, j\in \nset{l}}\cdot \bigl[\,x_{j}\,\bigr]_{j\in\nset{l}}.
    \end{align}
Consider the least solution $[\tilde{x}_i]_{i\in \nset{l+m}}$ of the equation. Then its part 
$[\tilde{x}_{i+l}]_{i\in \nset{m}}$ is given by
 the vector  $\big(\Reacha{i}{k-l}{\trmdp{l}{m}{n}{}(\mc{E})}\big)_{i\in\nset{m} }$ of suitable reachability probabilities.

Moreover,
 consider the following linear equation on an unknown vector $[y_i]_{i\in \nset{l+m}}$:
    \begin{align}\label{eq:linearEqERw}
\begin{aligned}
          \bigl[\,y_i\,\bigr]_{i\in\nset{l+m}} =   \bigl[\,&\ETRa{i}{k}{\mc{E}}\,\bigr]_{i\in\nset{l+m}} + \bigl[\,\ETRa{i}{j}{\mc{E}}\,\bigr]_{i\in\nset{l+m}, j\in \nset{l}}\cdot \bigl[\,x_{j}\,\bigr]_{j\in\nset{l}}\\
         &+  \bigl[\,\Reacha{i}{j}{\mc{E}}\,\bigr]_{i\in\nset{l+m}, j\in \nset{l}}\cdot \bigl[\,y_{j}\,\bigr]_{j\in\nset{l}} ,
\end{aligned}    
\end{align}
where the unknown $[x_j]_{j\in\nset{l}}$ is shared with~\cref{eq:linearEqRPr}. 
Consider the least solution $[\tilde{y}_i]_{i\in \nset{l+m}}$ of the equation. Then its part 
$[\tilde{y}_{i+l}]_{i\in \nset{m}}$ is given by
 the vector  $\big(\ETRa{i}{k-l}{\trmdp{l}{m}{n}{}(\mc{E})}\big)_{i\in\nset{m} }$ of suitable expected rewards.

\end{proposition}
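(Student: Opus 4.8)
The plan is to present both the reachability vector and the reward vector as \emph{least fixed points} of the affine operators on the right-hand sides of \cref{eq:linearEqRPr,eq:linearEqERw}, to compute those least fixed points as Neumann series via Kleene iteration, and to match the outcome with the closed forms already established in \cref{prop:MCeq,prop:MCERweq}. For \cref{eq:linearEqRPr}, write $F(\vec{x})_i \defeq \Reacha{i}{k}{\mc{E}} + \sum_{j\in\nset{l}}\Reacha{i}{j}{\mc{E}}\,x_j$. Since all coefficients lie in $\Realp$ and the reachabilities to the distinct (mutually exclusive) exits of a rightward MC satisfy $\Reacha{i}{k}{\mc{E}}+\sum_{j\in\nset{l}}\Reacha{i}{j}{\mc{E}}\le 1$, the map $F$ is a monotone, Scott-continuous endomap of the $\omega$-cpo $[0,1]^{l+m}$; by the Kleene fixed-point theorem its least fixed point is $\bigsqcup_{p}F^{p}(\vec{0})$.

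First I would exploit the block structure obtained by splitting $\nset{l+m}$ into the fed-back entrances $\nset{l}$ and the genuine entrances $[l+1,l+m]$. Only the coordinates $x_j$ with $j\in\nset{l}$ occur on the right, so the top block is self-contained: with $\mathbf{R}\defeq[\Reacha{j}{j'}{\mc{E}}]_{j,j'\in\nset{l}}$ and $\vec{b}\defeq[\Reacha{j}{k}{\mc{E}}]_{j\in\nset{l}}$, it reads $\vec{x}_{\nset{l}}=\vec{b}+\mathbf{R}\,\vec{x}_{\nset{l}}$. The Kleene iterates from $\vec{0}$ are exactly the partial sums of the Neumann series, so $\tilde{x}_{\nset{l}}=\sum_{d\in\Nat}\mathbf{R}^{d}\vec{b}$; the bottom block then gives $\tilde{x}_{l+i}=\Reacha{l+i}{k}{\mc{E}}+\bigl(\mathbf{R}'\sum_{d}\mathbf{R}^{d}\vec{b}\bigr)_i$ with $\mathbf{R}'\defeq[\Reacha{l+i}{j}{\mc{E}}]_{i\in\nset{m},j\in\nset{l}}$. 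Writing $k=l+j'$, the matrices of \cref{eq:infiniteSumTraceRPr} specialize to $A=\mathbf{R}'$, $B=\mathbf{R}$, and $j'$-th column of $C$ equal to $\vec{b}$, while $\Reacha{l+i}{l+j'}{\mc{E}}=\Reacha{l+i}{k}{\mc{E}}$; hence $\tilde{x}_{l+i}$ is term-for-term the $(i,j')$-entry of the right-hand side of \cref{eq:infiniteSumTraceRPr}, which by \cref{prop:MCeq} equals $\Reacha{i}{j'}{\trmdp{l}{m}{n}{}(\mc{E})}$, as claimed.

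For \cref{eq:linearEqERw} I would fix $\vec{x}=\tilde{x}$ (its least value from \cref{eq:linearEqRPr}), which turns the equation into an affine, monotone, continuous map in $\vec{y}$ alone on $[0,\infty]^{l+m}$; its least solution is again a Neumann series, now of $\mathbf{R}$ applied to $\vec{c}+\mathbf{E}\,\tilde{x}_{\nset{l}}$, where $\mathbf{E}\defeq[\ETRa{j}{j'}{\mc{E}}]_{j,j'\in\nset{l}}$ and $\vec{c}\defeq[\ETRa{j}{k}{\mc{E}}]_{j\in\nset{l}}$, so that $\tilde{y}_{\nset{l}}=\sum_{d}\mathbf{R}^{d}(\vec{c}+\mathbf{E}\,\tilde{x}_{\nset{l}})$ and $\tilde{y}_{l+i}=\ETRa{l+i}{k}{\mc{E}}+(\mathbf{E}'\tilde{x}_{\nset{l}})_i+(\mathbf{R}'\tilde{y}_{\nset{l}})_i$ with $\mathbf{E}'\defeq[\ETRa{l+i}{j}{\mc{E}}]_{i\in\nset{m},j\in\nset{l}}$. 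The one genuinely fiddly step, and the main obstacle, is matching this with \cref{eq:infiniteSumTraceERw}: one expands the $d$-th power of the block-upper-triangular $2l\times 2l$ matrix $B=\left(\begin{smallmatrix}\mathbf{R}&\mathbf{E}\\0&\mathbf{R}\end{smallmatrix}\right)$, whose off-diagonal block is $\sum_{a+b=d-1}\mathbf{R}^{a}\mathbf{E}\,\mathbf{R}^{b}$, and, taking $A=(\,\mathbf{R}'\ \ \mathbf{E}'\,)$ and $j'$-th column of $C$ equal to $\left(\begin{smallmatrix}\vec{c}\\\vec{b}\end{smallmatrix}\right)$, verifies that $\sum_{d}A B^{d}C$ reproduces exactly $\mathbf{R}'\bigl(\sum_{d}\mathbf{R}^{d}\bigr)\mathbf{E}\,\tilde{x}_{\nset{l}}+\mathbf{E}'\tilde{x}_{\nset{l}}+\mathbf{R}'\sum_{d}\mathbf{R}^{d}\vec{c}$, i.e.\ precisely $\mathbf{R}'\tilde{y}_{\nset{l}}+\mathbf{E}'\tilde{x}_{\nset{l}}$; \cref{prop:MCERweq} then identifies the sum with $\ETRa{i}{j'}{\trmdp{l}{m}{n}{}(\mc{E})}$. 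Finiteness of all these sums is not an issue, since the closed forms in \cref{prop:MCeq,prop:MCERweq} are finite, so the increasing Kleene iterates converge to them. Finally, I would note that one can bypass \cref{prop:MCeq,prop:MCERweq} altogether and argue intrinsically by a first-passage decomposition: from an $\mc{E}$-entrance the first boundary event is either reaching the target exit $k$ or reaching a fed-back exit $j\in\nset{l}$ and restarting at re-entry $j$, which yields \cref{eq:linearEqRPr} directly; leastness holds because the trace reachability is the supremum over the number of completed feedback cycles, i.e.\ exactly the Kleene iteration, and the reward case follows the same split after additionally accounting for the reward accrued before the first boundary event, which is what makes the $\ETRa{\cdot}{\cdot}{}$- and $\Reacha{\cdot}{\cdot}{}$-terms appear multiplied as in \cref{eq:linearEqERw}.
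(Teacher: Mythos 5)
Your calculations are sound: the identification of the least solutions via Kleene iteration as Neumann series, the observation that the top block of \cref{eq:linearEqRPr} is self-contained in the fed-back coordinates $\nset{l}$, the subnormality bound $\Reacha{i}{k}{\mc{E}}+\sum_{j\in\nset{l}}\Reacha{i}{j}{\mc{E}}\le 1$ (valid since paths to distinct exits of a rightward MC are disjoint), and the expansion of the block-upper-triangular power $B^{d}$ with off-diagonal block $\sum_{a+b=d-1}\mathbf{R}^{a}\mathbf{E}\,\mathbf{R}^{b}$ are all exactly the algebra that links \cref{eq:linearEqRPr,eq:linearEqERw} to \cref{eq:infiniteSumTraceRPr,eq:infiniteSumTraceERw}; nonnegativity of all entries justifies every rearrangement, and finiteness is as you say. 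The problem is the \emph{direction} in which you run this link. The paper treats \cref{prop:linear_equations} as the primitive fact---the standard least-fixed-point characterization of reachability probabilities in the style of \cite[Thm.~10.19]{BaierKatoen08}, extended to expected rewards by the same first-passage reasoning---and then, in its proof of \cref{prop:MCERweq} (\cref{sec:proof_MCERweq}), obtains the trace equality \cref{eq:infiniteSumTraceERw} precisely by showing it equivalent to the linear equations of \cref{prop:linear_equations}. So your headline argument, which derives \cref{prop:linear_equations} \emph{from} the trace halves of \cref{prop:MCeq,prop:MCERweq}, is circular within the paper's development: those trace equalities have no proof there that is independent of \cref{prop:linear_equations}.

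Fortunately, your closing paragraph already contains the non-circular proof, and it coincides with the paper's intended one: argue intrinsically that a path in $\trmdp{l}{m}{n}(\mc{E})$ from entrance $i$ to exit $k-l$ decomposes uniquely along its successive boundary crossings, so that the vector of quantities $\Reacha{i}{k-l}{\trmdp{l}{m}{n}(\mc{E})}$ (suitably extended to the fed-back entrances) satisfies \cref{eq:linearEqRPr}; leastness holds because the reachability probability is the supremum over paths completing at most $p$ feedback cycles, which is exactly the $p$-th Kleene iterate $F^{p}(\vec{0})$; and the reward case repeats the split after accounting for the reward accrued before the first crossing, which produces the mixed $\Reacha{}{}{}$/$\ETRa{}{}{}$ terms of \cref{eq:linearEqERw}. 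You should promote that sketch to the main argument (making the cycle-counting supremum precise), and demote your Neumann-series and block-matrix computation to what it really is in the paper's architecture: the ``straightforward equivalence'' invoked in \cref{sec:proof_MCERweq}, which then delivers \cref{eq:infiniteSumTraceRPr,eq:infiniteSumTraceERw} as corollaries of \cref{prop:linear_equations} rather than as premises for it.
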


We can modify the linear equations~(\ref{eq:linearEqRPr},\ref{eq:linearEqERw})---removing unreachable positions, specifically---so that they have unique solutions
without changing the least ones. One can then solve these linear equations to compute the reachabilities and expected rewards in (\ref{eq:infiniteSumTraceRPr},\ref{eq:infiniteSumTraceERw}).
This is a well-known technique for computing reachability probabilities~\cite[Thm.~10.19]{BaierKatoen08}; it is not hard to confirm the correctness of our current extension to expected rewards.

\section{Semantic Categories and Solution Functors}
\label{sec:FatSemanticCategories}
\begin{wrapfigure}[7]{r}{0pt}
\centering
\includegraphics[width=.25\textwidth]{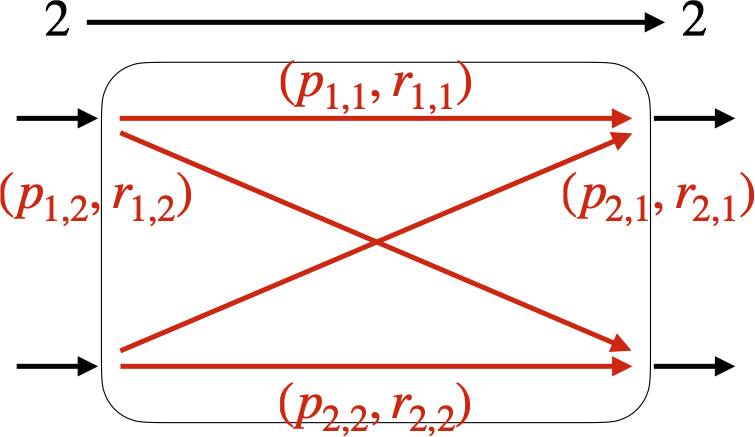}
\caption{An arrow $f\colon 2\to 2$ in $\semCatrmc$.} 
\label{fig:SroMCillust}
\end{wrapfigure}
We  build on the decomposition equalities  (\cref{prop:MCERweq}) and define the semantic category $\semCat$ for compositional model checking. This is the main construct in our  framework. 

Our definitions proceed in three steps, from roMCs to roMDPs to oMDPs (\cref{fig:catsFunctors}). The gaps between them are filled in using general constructions from category theory.

\subsection{Semantic Category for Rightward Open MCs}\label{subsec:semCatROMCs}
We first define the semantic category $\semCatrmc$ for roMCs (\cref{fig:catsFunctors}, bottom right).

\begin{definition}[objects and arrows of $\semCatrmc$]\label{def:semCatrmc}
 The \emph{category $\semCatrmc$} has natural numbers $m$ as objects. Its arrow $f\colon m \to n$ is given by an assignment, for each pair $(i,j)$ of $i\in \nset{m}$ and  $j\in\nset{n}$, of a pair $(p_{i,j}, r_{i,j})$ of nonnegative real numbers. There pairs $(p_{i,j}, r_{i,j})$ are subject to the following conditions.
\begin{itemize}
 \item (Subnormality) $\sum_{j\in\nset{n}}p_{i,j}\le 1$ for each $i\in\nset{m}$.
 \item (Realizability) $p_{i,j}=0$ implies $r_{i,j}=0$.
\end{itemize}
\end{definition}

 An illustration is in \cref{fig:SroMCillust}. For an object $m$, each $i\in \nset{m}$ is identified with an open end, much like in $\roMC$ and $\roMDP$. For an arrow $f\colon m \to n$, the pair $f(i,j)=(p_{i,j}, r_{i,j})$ encodes a reachability probability and an expected reward, from an open end $i$ to $j$; together they represent a possible roMC behavior.

We go on to define the algebraic operations of $\semCatrmc$ as a TSMC. While there is a categorical description of $\semCatrmc$  using  a \emph{monad}~\cite{Moggi91}, we prefer a concrete definition here.
\iffull
See~\cref{sec:monad} for the categorical definition of $\semCatrmc$. 
\else
See~\cite[Appendix D]{Watanabe23long} for the categorical definition of $\semCatrmc$. 
\fi

\begin{definition}[sequential composition $\seqcomp$ of $\semCatrmc$]\label{def:seqCompSemCatrmc}
 Let  $f\colon m\rightarrow l$ and $g\colon l\rightarrow n$ be arrows in $\semCatrmc$. Their \emph{sequential composition} $f\seqcomp g\colon m\to n$ of $f$ and $g$ is defined as follows: letting
$f(i, j) = (p^{f}_{i, j}, r^{f}_{i, j})$ and $g(i, j) = (p^{g}_{i, j}, r^{g}_{i, j})$, then
$f\seqcomp g(i)\defeq(p^{f\seqcomp g}_{i,j},r^{f\seqcomp g}_{i,j})_{j\in [n]}$ is given by
\begin{align*}
         \bigl[\, p^{f\seqcomp g}_{i, j} \,\bigr]_{i\in\nset{m},j\in\nset{n}}  =
     \bigl[\,& p^{f}_{i, k}\,\bigr]_{i\in\nset{m},k\in\nset{l}} \cdot
          \bigl[\,p^{g}_{k, j}\,\bigr]_{k\in\nset{l},j\in\nset{n}},
   \\
   \bigl[\, r^{f\seqcomp g}_{i, j} \,\bigr]_{i\in\nset{m},j\in\nset{n}} =
     \bigl[\,& p^{f}_{i, k}\,\bigr]_{i\in\nset{m},k\in\nset{l}} \cdot
          \bigl[\,r^{g}_{k, j}\,\bigr]_{k\in\nset{l},j\in\nset{n}}\\ 
     &+
     \bigl[\, r^{f}_{i, k}\,\bigr]_{i\in\nset{m},k\in\nset{l}} \cdot
          \bigl[\,p^{g}_{k, j}\,\bigr]_{k\in\nset{l},j\in\nset{n}}. 
\end{align*}
\end{definition}
The sum $\oplus$ and the trace operator $\tr$ of $\semCatrmc$ are defined similarly. 
To define and prove axioms of the trace operator (\cref{fig:TSMCAxioms}), we exploit  the categorical theory of \emph{strong unique decomposition categories}~\cite{hoshino12}. 
\iffull
See~\cref{subsec:execution_formula}. 
\else
See~\cite[Appendix D]{Watanabe23long}. 
\fi
\begin{definition}[$\semCatrmc$ as a TSMC]\label{def:semCatrmcAsTSMC}
$\semCatrmc$ is a TSMC, with its operations 
 $ \seqcomp ,\oplus,\tr$. 
\end{definition}

Once we expand the above definitions to concrete terms, it is evident that they mirror the decomposition equalities. Indeed, the sequential composition $\seqcomp$ mirrors the first equalities in \cref{prop:MCeq,prop:MCERweq}. The same holds for the trace operator, too.
 Therefore, one can think of the above categorical development in \cref{def:seqCompSemCatrmc} and \cref{def:semCatrmcAsTSMC} as  a structured \emph{lifting} of the (local) equalities in \cref{prop:MCeq,prop:MCERweq} to the (global) categorical structures, as shown in \cref{fig:catsFunctors}. 

Once we found the semantic domain $\semCatrmc$, the following definition is easy.
\begin{definition}[$\semFunctorrmc$]
\label{def:semFuncRoMC}
The \emph{solution functor} $\semFunctorrmc\colon \roMC\to \semCatrmc$ is defined as follows. It carries an object $m$ (a natural number) to the same $m$; it carries an arrow $\mc{C}\colon m\to n$ in $\roMC$ to the arrow $\semFunctorrmc(\mc{C})\colon m\to n$ in $\semCatrmc$, defined by
 \begin{equation}\label{eq:defsemFuncRoMC}
     \semFunctorrmc(\mc{C})(i, j) \defeq  \bigl(\,\Reacha{i}{j}{\mc{C}},\,\ETRa{i}{j}{\mc{C}}\,\bigr),
 \end{equation} 
using reachability probabilities and expected rewards (\cref{def:openMDPReachProbExpCumRew}). 
\end{definition}

\begin{theorem}[$\semFunctorrmc$ is compositional]\label{thm:semFunctorrmcIsCompositional}
The correspondence $\semFunctorrmc$, defined in~\cref{eq:defsemFuncRoMC}, is a traced symmetric monoidal functor. That is, 
\begin{math}
 \semFunctorrmc(\mc{C} \seqcomp \mc{D})
=  
\semFunctorrmc(\mc{C}) \seqcomp 
\semFunctorrmc(\mc{D})
\end{math},
\begin{math}
 \semFunctorrmc(\mc{C}\oplus\mc{D})
=  
\semFunctorrmc(\mc{C})\oplus
\semFunctorrmc(\mc{D})
\end{math},
and
\begin{math}
 \semFunctorrmc(
\tr(\mc{E}))
=  
\tr(\semFunctorrmc(\mc{E}))
\end{math}.
Here $ \seqcomp ,\oplus,\tr$ on the left are from \cref{subsec:roMDPsAndTSMCStringDiagrams}; those on the right are from \cref{def:semCatrmcAsTSMC}. \qed
\end{theorem}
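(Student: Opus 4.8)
The plan is to reduce the claim to its three constituent equalities, together with the routine preservation of identities, the monoidal unit, and symmetries, and then to verify each by unfolding the concrete definitions of $\semFunctorrmc$ (\cref{def:semFuncRoMC}) and of the operations of $\semCatrmc$ (\cref{def:seqCompSemCatrmc} and its analogues for $\oplus$ and $\tr$), comparing the result with the decomposition equalities of \cref{prop:MCeq,prop:MCERweq}. Because the operations of $\semCatrmc$ were \emph{defined} so as to mirror those decomposition equalities, each verification amounts to a side-by-side comparison of two explicit expressions. Since $\semFunctorrmc$ is the identity on objects, preservation of the monoidal unit is immediate, and preservation of identities and swaps follows by direct computation: for instance $\semFunctorrmc(\ID_m)(i,j)=(\delta_{i,j},0)$, which is exactly the semantic identity, and similarly for $\SYM$.

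For sequential composition, both sides are arrows $m\to n$ in $\semCatrmc$, so it suffices to compare their values at each pair $(i,j)$. By \cref{def:semFuncRoMC} the left-hand side is $\bigl(\Reacha{i}{j}{\mc{C}\seqcomp\mc{D}},\,\ETRa{i}{j}{\mc{C}\seqcomp\mc{D}}\bigr)$. Expanding the right-hand side with $\semFunctorrmc(\mc{C})(i,k)=(\Reacha{i}{k}{\mc{C}},\ETRa{i}{k}{\mc{C}})$ and likewise for $\mc{D}$, \cref{def:seqCompSemCatrmc} yields a probability component $\sum_{k}\Reacha{i}{k}{\mc{C}}\cdot\Reacha{k}{j}{\mc{D}}$ and a reward component $\sum_{k}\bigl(\Reacha{i}{k}{\mc{C}}\cdot\ETRa{k}{j}{\mc{D}}+\ETRa{i}{k}{\mc{C}}\cdot\Reacha{k}{j}{\mc{D}}\bigr)$. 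These coincide with the right-hand sides of the first equality of \cref{prop:MCeq} and of \cref{prop:MCERweq}, so the two pairs are equal. The sum $\oplus$ is even simpler: paths of $\mc{C}\oplus\mc{D}$ are exactly those of $\mc{C}$ and of $\mc{D}$ on disjoint open ends, so reachabilities and expected rewards split with no cross terms, matching the definition of $\oplus$ on $\semCatrmc$.

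For the trace, we again compare entrywise. The left-hand side $\semFunctorrmc(\tr(\mc{E}))(i,j)$ equals $\bigl(\Reacha{i}{j}{\trmdp{l}{m}{n}{}(\mc{E})},\,\ETRa{i}{j}{\trmdp{l}{m}{n}{}(\mc{E})}\bigr)$, whose components are given by the execution-formula equalities \cref{eq:infiniteSumTraceRPr,eq:infiniteSumTraceERw}, i.e.\ a diagonal term plus the infinite sum $\sum_{d\in\Nat}A\cdot B^{d}\cdot C$ with the block matrices $A,B,C$ of \cref{prop:MCERweq}. Since the trace of $\semCatrmc$ is defined by this very formula applied to the pair-valued data $\semFunctorrmc(\mc{E})$ --- whose entries are precisely the reachabilities and expected rewards occurring in $A$, $B$, $C$ --- the right-hand side $\tr(\semFunctorrmc(\mc{E}))(i,j)$ produces the same value, and the two agree.

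I expect the trace to be the main obstacle, though the difficulty is bookkeeping rather than conceptual. One must check that the $2l\times 2l$ block matrix $B$, which interleaves reachabilities and expected rewards, is exactly what the intrinsic trace of $\semCatrmc$ generates from the pair-valued arrow $\semFunctorrmc(\mc{E})$; a misalignment of the blocks is the easiest way to go wrong. One must also justify manipulating the infinite sums: they converge by the finiteness of the position set together with the almost-sure-termination assumption $(\dagger)$ of \cref{rem:suf_memoryless}, and rigor is most cleanly maintained by invoking the least-fixed-point characterization of \cref{prop:linear_equations} (equivalently, the strong-unique-decomposition-category structure used to define $\tr$ on $\semCatrmc$). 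All remaining cases are routine comparisons of explicit formulas.
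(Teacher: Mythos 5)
Your proposal is correct and follows essentially the same route as the paper: the operations of $\semCatrmc$ were defined precisely to mirror the decomposition equalities of \cref{prop:MCeq,prop:MCERweq} (with the sum case handled by the analogous block-diagonal equality, and the semantic trace given by the execution formula with the same block matrices $A,B,C$), so the theorem follows by exactly the entrywise comparison you describe. One minor remark: the paper does not need the almost-sure-termination assumption $(\dagger)$ for convergence of the trace sums --- it argues that reachability probabilities are bounded by $1$ and that expected rewards are finite by \cref{def:openMDPReachProbExpCumRew}, so the execution formula is always defined in $\semCatrmc$.
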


\begin{auxproof}
 As a theoretical remark whose understanding is unnecessary to read the paper,
 the semantic category $\semCatrmc$ can be seen as a full subcategory of the \emph{Kleisli category} of $T$~\cite{Moggi91}.
 Furthermore, it is a \emph{strong unique decomposition category} whose monoidal product is the coproduct, 
 and hence is a TSMC whose trace is given by the execution formula~\cite{hoshino12}. %
 \begin{proposition}
 \begin{enumerate}
 \item 
  Let $\kleisli{\Sets}{T}$ be the Kleisli category of $T$. The restriction $\rest{\kleisli{\Sets}{T}}{\Nat}$ to natural numbers as objects is a TSMC.
 \item 
  The restriction $\rest{\kleisli{\Sets}{T}}{\Nat}$ has the same objects and arrows as $\semCatrmc$. 
 \qed
\end{enumerate}
 \end{proposition}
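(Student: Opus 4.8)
The plan is to prove both items by recognizing $\semCatrmc$ as exactly the full-on-objects restriction of $\kleisli{\Sets}{T}$, where $T$ is the probability--reward monad on $\Sets$ whose definition is recalled in the appendix; concretely, $T(X)$ consists of the finitely-supported assignments $\varphi\colon X\to\Realp\times\Realp$ satisfying subnormality $\sum_{x}\mathrm{pr}_1(\varphi(x))\le 1$ and realizability $\mathrm{pr}_1(\varphi(x))=0\Rightarrow\mathrm{pr}_2(\varphi(x))=0$, with unit sending $x$ to the Dirac pair $(1,0)$ at $x$ and multiplication $\mu$ combining probabilities multiplicatively and rewards by the expected-reward product rule. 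Item (2) is then a bookkeeping identification, while item (1) follows from the general theory of strong unique decomposition categories~\cite{hoshino12}.

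For item (2) I first identify each natural number $n$ with the finite set $\nset{n}$, so that the two categories share the same objects. An arrow $m\to n$ in $\rest{\kleisli{\Sets}{T}}{\Nat}$ is by definition a function $f\colon\nset{m}\to T(\nset{n})$; unfolding the description of $T(\nset{n})$ above shows that this is precisely an assignment to each pair $(i,j)$ of a pair $(p_{i,j},r_{i,j})$ subject to subnormality and realizability, i.e.\ exactly an arrow of $\semCatrmc$ (\cref{def:semCatrmc}). It remains to check that the Kleisli composite $\mu\circ T(g)\circ f$ of $f\colon m\to l$ and $g\colon l\to n$ coincides with $f\seqcomp g$ of \cref{def:seqCompSemCatrmc}. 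Tracing $\mu\circ T(g)\circ f$ through the definition of $\mu$ yields, for the probability component, $\sum_{k}p^{f}_{i,k}\,p^{g}_{k,j}$ and, for the reward component, $\sum_{k}\bigl(p^{f}_{i,k}\,r^{g}_{k,j}+r^{f}_{i,k}\,p^{g}_{k,j}\bigr)$; these are the matrix formulas of \cref{def:seqCompSemCatrmc}, which are in turn the shape of the first decomposition equalities in \cref{prop:MCeq,prop:MCERweq}. Since the identities in both categories are the Dirac-with-zero-reward arrows, the two categories coincide.

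For item (1) I equip $\rest{\kleisli{\Sets}{T}}{\Nat}$ with the structure required to invoke Hoshino's theorem. The symmetric monoidal product is the coproduct of $\Sets$, transported along the identity-on-objects left adjoint $\Sets\to\kleisli{\Sets}{T}$; on objects it is $m\oplus n=m+n$ with unit $0=\emptyset$, and $0$ is a zero object of the Kleisli category because $T(\emptyset)$ is a singleton (terminality) while $\emptyset$ is trivially initial. This zero object supplies zero morphisms, and the coproduct supplies the quasi-injection/quasi-projection families. The crucial extra ingredient is a $\Sigma$-monoid structure on each homset: a countable family of arrows is summed componentwise by adding the $(p,r)$-pairs, the sum being declared defined exactly when the accumulated probabilities stay subnormal, with realizability automatically inherited since a vanishing total probability forces every summand, hence every reward, to vanish. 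These data make $\rest{\kleisli{\Sets}{T}}{\Nat}$ a strong unique decomposition category, so by~\cite{hoshino12} it carries a canonical trace given by the execution formula and is therefore a TSMC. One checks that this execution-formula trace is exactly the block-matrix operation of the second decomposition equalities in \cref{prop:MCeq,prop:MCERweq}, consistently with \cref{def:semCatrmcAsTSMC}.

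The hard part will be verifying the $\Sigma$-monoid axioms and, in particular, the summability condition entering the execution formula. Unlike the probability component, which is uniformly bounded by subnormality and therefore always yields a convergent geometric-type series $\sum_{d}A\,B^{d}\,C$, the reward component is unbounded, so one must argue that the reward parts of these families are summable and that their limits again satisfy subnormality and realizability. The cleanest route is to phrase the relevant sums as least fixed points of the linear systems of \cref{prop:linear_equations} and to invoke finiteness of support---for the concrete roMCs, finiteness of the position set guarantees convergence, as already noted in \cref{def:openMDPReachProbExpCumRew}. Packaging this convergence into the axioms of a $\Sigma$-monoid, so that Hoshino's abstract trace specializes to the execution formula, is the one step requiring genuine care.
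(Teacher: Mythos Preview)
Your approach via Hoshino's strong unique decomposition categories is exactly the paper's, and your treatment of item~(2) is fine. The one place your sketch goes astray is the convergence argument in the last paragraph: for $\semCatrmc$ to be a TSMC the execution formula must converge for \emph{every} arrow $f\colon l+m\to l+n$, not just those arising from finite roMCs, so appealing to \cref{prop:linear_equations} or to finiteness of position sets does not close the gap. The argument has to use only the subnormality and realizability constraints on abstract arrows. Subnormality bounds the probability partial sums by a telescoping inequality, as you note. For the reward part, realizability is the key: whenever $(C_r)_{k,j}\ne 0$ one has $(C_p)_{k,j}>0$, hence the $k$-th row sum of the loop-to-loop probability block $B_p$ is strictly below~$1$; this makes $k$ transient in the finite substochastic chain with transition matrix $B_p$, so $\sum_d (B_p^d)_{i,k}<\infty$ for all $i$, and every nonvanishing contribution to the reward execution formula is controlled. (Incidentally, the paper's $T$ does not impose finite support, though this is immaterial once objects are finite sets.)
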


 \begin{definition}
 We equip $\semCatrmc$ with the TSMC structure induced from $\rest{\kleisli{\Sets}{T}}{\Nat}$ by the equality $\rest{\kleisli{\Sets}{T}}{\Nat} = \semCatrmc$ as categories.
 \end{definition}

 The TSMC structure, introduced abstractly above, is illustrated as follows:
 \begin{itemize}
 \item for arrows $f:m\rightarrow l$ and $g:l\rightarrow n$ such that $f(i, j) = (p^{f}_{i, j}, r^{f}_{i, j})$ and $g(i, j) = (p^{g}_{i, j}, r^{g}_{i, j})$,
 their sequential composition $f\seqcomp g:m\rightarrow n$,
 for which let $(f\seqcomp g)(i,j) = (p^{f\seqcomp g}_{i, j}, r^{f\seqcomp g}_{i, j})$, is given by the following equations of matrices:
 \begin{align*}
     &     \bigl[\, p^{f\seqcomp g}_{i, j} \,\bigr]_{i\in\nset{m},j\in\nset{n}}  \;=\; 
     \bigl[\, p^{f}_{i, k}\,\bigr]_{i\in\nset{m},k\in\nset{l}} \cdot
          \bigl[\,p^{g}_{k, j}\,\bigr]_{k\in\nset{l},j\in\nset{n}}\ ,
   \\
   & \bigl[\, r^{f\seqcomp g}_{i, j} \,\bigr]_{i\in\nset{m},j\in\nset{n}}  \;=\; 
     \bigl[\, p^{f}_{i, k}\,\bigr]_{i\in\nset{m},k\in\nset{l}} \cdot
          \bigl[\,r^{g}_{k, j}\,\bigr]_{k\in\nset{l},j\in\nset{n}}\ 
     +
     \bigl[\, r^{f}_{i, k}\,\bigr]_{i\in\nset{m},k\in\nset{l}} \cdot
          \bigl[\,p^{g}_{k, j}\,\bigr]_{k\in\nset{l},j\in\nset{n}}\ 
     ,
 \end{align*}
 \item for arrow $f:l+m\rightarrow l+n$ such that $f(i, j) = (p^{f}_{i, j}, r^{f}_{i, j})$, its trace $\trace{l}{m}{n}{}(f)$, for which let $\trace{l}{m}{n}{}(f) = (p^{\trace{l}{m}{n}{}(f)}_{i, j}, r^{\trace{l}{m}{n}{}(f)}_{i, j})$, is given by the following equations of matrices:
 \begin{align*}
    &
   \bigl[\,p^{\trace{l}{m}{n}{}(f)}_{i, j} \,\bigr]_{i\in\nset{m},j\in\nset{n}}
    \;=\;    \bigl[\,
    p^{f}_{l+i, l+j}
   \,\bigr]_{i\in\nset{m},j\in\nset{n}}
 + 
  \\
  &\qquad
    \textstyle\sum_{d\in\Nat} 
   \bigl[\,
     p^{f}_{l+i, k}
   \,\bigr]_{i\in\nset{m},k\in\nset{l}}
 \cdot
   \bigl[\,
     p^{f}_{k, k'}
   \,\bigr]_{k\in\nset{l},k'\in\nset{l}}^{d}
 \cdot
   \bigl[\,
    p^{f}_{k', l+j}
   \,\bigr]_{k'\in\nset{l},j\in\nset{n}}\ ,\\
   & \bigl[\, r^{\trace{l}{m}{n}{}(f)}_{i, j} \,\bigr]_{i\in\nset{m},j\in\nset{n}}
    \;=\;    \bigl[\,
    r^{f}_{l+i, l+j} 
   \,\bigr]_{i\in\nset{m},j\in\nset{n}}
    +  \textstyle\sum_{d\in\Nat} A\cdot B^{d}\cdot C.
 \end{align*}
 Here $A, B,C $ are the following $m\times 2l$, $2l\times 2l$, $2l\times n$ matrices.
 \begin{align*}
 &\begin{aligned}
 A &= \begin{pmatrix}
 \bigl[\,
    p^{f}_{l+i, k} 
   \,\bigr]_{i\in\nset{m},k\in\nset{l}} & 
   \bigl[\,
    r^{f}_{l+i, k}
   \,\bigr]_{i\in\nset{m},k\in\nset{l}} \\
 \end{pmatrix}
 \\
 B &= \begin{pmatrix}
 \bigl[\,
    p^{f}_{k, k'} 
   \,\bigr]_{k\in\nset{l},k'\in\nset{l}} & \bigl[\,
    r^{f}_{k, k'} 
   \,\bigr]_{k\in\nset{l},k'\in\nset{l}}\\
   \bigl[\,
    0
   \,\bigr]_{k\in\nset{l},k'\in\nset{l}} &  \bigl[\,
    p^{f}_{k, k'} 
   \,\bigr]_{k\in\nset{l},k'\in\nset{l}}
 \end{pmatrix} 
 \\
 C &= \begin{pmatrix}
 \bigl[\,
    r^{f}_{k', l+j}
   \,\bigr]_{k'\in\nset{l},j\in\nset{n}}\\
   \bigl[\,
    p^{f}_{k', l+j}
   \,\bigr]_{k'\in\nset{l},j\in\nset{n}} \\
 \end{pmatrix}
 \end{aligned}
 \end{align*}
 \end{itemize}

 Note that the above formulae resemble those in \cref{prop:MCeq,prop:MCERweq},
 which is a key to show that the following theorem readily follows from the results of \cref{sec:compositionalAnalyssiOfOpenMarkovChains}.

 \begin{theorem}[$\semFunctorrmc$ is traced]
 \label{thm:coherence_cond_trace}
 The functor $\semFunctorrmc:\roMC\rightarrow \semCatrmc$ defined in the following is a traced symmetric monoidal functor:
 for an roMC $\mc{C}:m\rightarrow n$, $i\in \nset{m}$ and $j\in \nset{n}$, 
 \begin{align*}
     \semFunctorrmc(\mc{C})(i, j) \defeq  (\Reacha{i}{j}{\mc{C}},\ETRa{i}{j}{\mc{C}})\ .
 \end{align*}
 \qed
 \end{theorem}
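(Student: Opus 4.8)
The plan is to verify the three defining equalities of a traced symmetric monoidal functor by unfolding the definition of $\semFunctorrmc$ (\cref{def:semFuncRoMC}) together with the concrete TSMC operations on $\semCatrmc$ (\cref{def:seqCompSemCatrmc,def:semCatrmcAsTSMC}), and then matching the resulting formulas against the decomposition equalities of \cref{prop:MCeq,prop:MCERweq}. Before that, I would confirm that $\semFunctorrmc$ is well defined, i.e.\ that each $\semFunctorrmc(\mc{C})$ is a genuine arrow of $\semCatrmc$. Subnormality $\sum_{j\in\nset{n}}\Reacha{i}{j}{\mc{C}}\le 1$ holds because in a rightward open MC the exits are mutually exclusive, terminal open ends, so the total reachability from a fixed entrance is at most $1$; realizability holds because $\ETRa{i}{j}{\mc{C}}$ is a sum over paths from $i$ to $j$ weighted by their probabilities, hence vanishes whenever $\Reacha{i}{j}{\mc{C}}=0$. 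I would also record functoriality proper---preservation of identities---which is immediate since the identity roMC has the identity reachability matrix and zero reward, matching the identity arrow of $\semCatrmc$.

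For the sequential composition equality, unfolding $\semFunctorrmc(\mc{C})\seqcomp\semFunctorrmc(\mc{D})$ via \cref{def:seqCompSemCatrmc} produces exactly the right-hand sides of the first equality in \cref{prop:MCeq} (for the probability component) and the first equality in \cref{prop:MCERweq} (for the reward component); these equal the probability and reward components of $\semFunctorrmc(\mc{C}\seqcomp\mc{D})$ by those very propositions. The sum case is the easiest: $\oplus$ amounts to a disjoint juxtaposition of open ends, so both reachability and expected reward decompose block-diagonally and the equality follows directly from the definitions. For the trace, the same pattern applies: the trace on $\semCatrmc$ is defined through the block matrices $A, B^{d}, C$ and an infinite sum $\sum_{d\in\Nat}$ that coincide verbatim with the execution-formula expressions in the second equalities of \cref{prop:MCeq,prop:MCERweq}; invoking those propositions closes the trace case.

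The main obstacle is not mathematical depth but careful bookkeeping in the trace case, where one must line up the $2l\times 2l$ block structure of $B$ and the indices shifted by $l$ exactly as in \cref{prop:MCERweq}. The substantive work---establishing that these infinite sums indeed compute the reachability and expected reward of $\tr(\mc{E})$---has already been discharged in \cref{sec:compositionalAnalyssiOfOpenMarkovChains}, ultimately via the least-fixed-point characterization of \cref{prop:linear_equations}. Finally, to fully justify the phrase ``traced symmetric monoidal functor'' I would check the remaining coherence data, namely that $\semFunctorrmc$ strictly preserves the monoidal unit (the object $0$) and the symmetries $\SYM_{m,n}$; both are positionless and carry trivial reward, so they map onto the corresponding structural arrows of $\semCatrmc$. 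These verifications are routine once the three displayed equalities are in hand, the real content having been isolated into the decomposition equalities of the previous section.
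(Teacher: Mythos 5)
Your proposal is correct and takes essentially the same route as the paper: since the TSMC operations of $\semCatrmc$ were defined precisely to mirror the decomposition equalities, preservation of $\seqcomp$, $\oplus$, and $\tr$ by $\semFunctorrmc$ follows directly from \cref{prop:MCeq,prop:MCERweq} (with the $\oplus$ case given by the block-diagonal equality stated in the appendix, and the trace on $\semCatrmc$ coming from the execution formula whose concrete unfolding matches the matrices $A$, $B^{d}$, $C$ verbatim). Your extra checks---well-definedness of $\semFunctorrmc(\mc{C})$ as an arrow of $\semCatrmc$ (subnormality and realizability) and preservation of identities, the unit, and symmetries---are routine verifications the paper leaves implicit.
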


 \ifdraft
 \newpage
 \fi

\end{auxproof}

\subsection{Semantic Category of Rightward Open MDPs}\label{subsec:semCatROMDPs}
We extend the theory in \cref{subsec:semCatROMCs} from MCs to MDPs (\cref{fig:catsFunctors}).
In particular,
on the semantics side, we have to  bundle up all possible behaviors of an MDP under different schedulers. We find that this is done systematically by \emph{change of base}~\cite{Eilenberg65,cruttwell2008normed}.
 We use the following notation for fixing scheduler $\tau$.

\begin{definition}[roMC $\indmc{\mdp{A}}{\tau}$ induced by $\mdp{A},\tau$]
Let $\mdp{A}:m\rightarrow n$ be a rightward open MDP and $\tau:Q^{\mdp{A}}\rightarrow A$ be a memoryless scheduler.  The \emph{rightward open MC} $\indmc{\mdp{A}}{\tau}$ \emph{induced by $\mdp{A}$ and $\tau$} is $(m,n,Q^{\mdp{A}}, \{\star\},E^{A}, P^{\indmc{\mdp{A}}{\tau}},R^{\mdp{A}})$, where for each $s\in Q$ and $t\in (\nset{\nfr{n}+\nfl{m}}+Q)$, $P^{\indmc{\mdp{A}}{\tau}}(s, \star, t) \defeq P^{\mdp{A}}(s, \tau(s), t)$.
\end{definition}

Much like in~\cref{subsec:semCatROMCs}, we first describe the semantic category $\semCatr$ in concrete terms. We later use the categorical machinery to define its algebraic structure.

\begin{definition}[objects and arrows of $\semCatr$]
\label{def:mor_in_sr}
 The \emph{category $\semCatr$} has natural numbers $m$ as objects. Its arrow $F\colon m \to n$ is given by a set  $\{f_{i}\colon m\to n \text{ in $\semCatrmc$}\}_{i\in I}$ of arrows
of the same type in $\semCatrmc$ ($I$ is an arbitrary index set).
\end{definition}

The above definition of arrows---collecting arrows in $\semCatrmc$, each of which corresponds to the behavior of $\indmc{\mdp{A}}{\tau}$ for each $\tau$---follows from the change of base construction (specifically
with the powerset functor $\pMnd$ on the category $\Sets$ of sets). Its general theory
gives sequential composition $ \seqcomp $ for free (concretely described in~\cref{def:seqc_in_sr}), together with equational axioms. 
\iffull
See~\cref{subsec:change_of_base} for details.
\else 
See~\cite[Appendix D]{Watanabe23long}.
\fi
Sum $\oplus$ and trace $\tr$ are not covered by general  theory, but  we can define them analogously to $\seqcomp$ in the current setting. Thus, for $\oplus$ and $\tr$ as well, we are using change of base as an inspiration.

Here is a concrete description of algebraic operations. It applies the corresponding operation of $\semCatrmc$ in the elementwise manner.

\begin{definition}[$\seqcomp,\oplus,\tr$ in $\semCatr$]
\label{def:seqc_in_sr}
Let $F:m\rightarrow l$, $G:l\rightarrow n$, $H:l+m\rightarrow l+n$ be arrows in $\semCatr$.
Their \emph{sequential composition} $F\seqcomp G$ 
of $F$ and $G$ 
is given by $F\seqcomp G\defeq \{ f\seqcomp g\mid f\in F,\  g\in G\}$ where $f\seqcomp g$ is the sequential composition of $f$ and $g$
in $\semCatrmc$. The \emph{trace} $\trace{l}{m}{n}{}(H):m\rightarrow n$ of $H$ with respect to $l$ is given by $\trace{l}{m}{n}{}(H) \defeq \{ \trace{l}{m}{n}{}(h) \mid h\in H\}$ where $ \trace{l}{m}{n}{}(h)$ is the trace of $h$ with respect to $l$ in $\semCatrmc$.

\emph{Sum} $\oplus$ in $\semCatr$ is defined analogously, applying the operation in $\semCatrmc$ elementwise. 
\iffull
See \cref{apnd:completeDefOfSemCatr}.
\else 
See \cite[Appendix A]{Watanabe23long} for details.
\fi
\end{definition}

\begin{auxproof}
 On the equational axioms of TSMCs (\cref{fig:TSMCAxioms}), those regarding $\seqcomp$ follow from the general theory (see e.g.~\cite{laird2017qualitative});
 the rest are routine.\footnote{Special categorical settings under which the remaining axioms are guaranteed are studied in~\cite{laird2017qualitative}.  This is driven towards game semantics and  not suited for our purpose.}
\end{auxproof}
\begin{theorem}
 $\semCatr$ is a TSMC. \qed
\end{theorem}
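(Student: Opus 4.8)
The plan is to exploit the fact that every piece of structure on $\semCatr$ is built elementwise from the corresponding structure on $\semCatrmc$ (see \cref{def:seqc_in_sr}), together with the fact that $\semCatrmc$ is already a TSMC (\cref{def:semCatrmcAsTSMC}). First I would record the ``wires'' of $\semCatr$: the identity $\ID_m$ is the singleton $\{\ID_m\}$ of the $\semCatrmc$-identity, the swap $\SYM_{m,n}$ is the singleton of the $\semCatrmc$-swap, and the monoidal unit is the object $0$, exactly as in $\semCatrmc$ but wrapped as singletons. The purely categorical part of the claim---that $\seqcomp$ is associative and unital, i.e.\ the axioms $(\seqcomp\text{-Unit})$ and $(\seqcomp\text{-Assoc})$ of \cref{fig:TSMCAxioms}---is then immediate from the change-of-base construction along the lax symmetric monoidal functor $\pMnd\colon \Sets\to\Sets$, whose general theory turns the $\Sets$-category $\semCatrmc$ into the $\Sets$-category $\semCatr$ with hom-sets $\pMnd(\semCatrmc(m,n))$ and composition $\{f\seqcomp g\mid f\in F,\ g\in G\}$ (\cref{def:mor_in_sr,def:seqc_in_sr}).

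It remains to verify the symmetric monoidal axioms ($\oplus$-Assoc, Bifunc1, Bifunc2, Swap1--3) and the trace axioms (Vanishing1, Vanishing2, Superposing, Yanking, Naturality1, Naturality2, Dinaturality), since $\oplus$ and $\tr$ are not supplied by change of base. My key observation is that each of these axioms is \emph{linear}: every arrow-variable (such as $\mdp{A},\mdp{B},\mdp{C},\mdp{D}$) occurs exactly once on each side of the equation. Consequently, interpreting any such axiom in $\semCatr$ on input sets $F_1,\dots,F_k$, and writing $\Phi^{\mathrm{L}},\Phi^{\mathrm{R}}$ for the two composites built from the elementwise $\semCatrmc$-operations, both sides evaluate to a set indexed by the \emph{same} product $F_1\times\cdots\times F_k$:
\[
 \{\,\Phi^{\mathrm{L}}(f_1,\dots,f_k)\mid f_i\in F_i\,\}
 \qquad\text{and}\qquad
 \{\,\Phi^{\mathrm{R}}(f_1,\dots,f_k)\mid f_i\in F_i\,\}.
\]
Since the corresponding axiom already holds in $\semCatrmc$, we have $\Phi^{\mathrm{L}}(f_1,\dots,f_k)=\Phi^{\mathrm{R}}(f_1,\dots,f_k)$ for every tuple, so the two sets coincide. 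This single argument discharges all of \cref{fig:TSMCAxioms} uniformly; I would only need to check, once and for all, that each operation in \cref{def:seqc_in_sr} is the elementwise image of the matching $\semCatrmc$-operation, and that the $\semCatrmc$-trace of a single arrow is well-defined (which it is, by \cref{def:semCatrmcAsTSMC}).

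The step I expect to be the main obstacle is justifying the linearity claim and pinning down its role. Elementwise (powerset) lifting is faithful to the underlying equations \emph{only} when no variable is duplicated: a duplicated variable would force the lifted operation to range over two independent copies of the same set, producing spurious cross terms rather than the diagonal, and the two sides could then genuinely differ. Thus the crux is to confirm that the signature of \cref{fig:TSMCAxioms} is multilinear---which it is, being the theory of a traced PROP, where wires are neither copied nor discarded---so that the product index sets on the two sides match. A secondary point to address is that the arbitrary (possibly infinite) index set of \cref{def:mor_in_sr} causes no difficulty: each operation merely forms the direct image of a set under a map into $\semCatrmc$-arrows, and the infinite sum $\sum_{d\in\Nat}$ inside the $\semCatrmc$-trace is resolved per element at the $\semCatrmc$ level, so no new convergence or cardinality issue arises at the level of $\semCatr$.
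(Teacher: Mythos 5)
Your proposal is correct and takes essentially the same route as the paper: the category structure and the $\seqcomp$-axioms are obtained from change of base along the powerset functor $\pMnd$, while the remaining symmetric monoidal and trace axioms are checked by elementwise lifting from the TSMC $\semCatrmc$, which the paper treats as a routine verification. Your uniform multilinearity observation---that every axiom in \cref{fig:TSMCAxioms} uses each arrow variable exactly once on each side, so direct-image lifting introduces no spurious cross terms, and convergence of the trace sum is settled per element in $\semCatrmc$---is a clean formalization of exactly that routine step.
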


We now define a solution functor and prove its compositionality.

\begin{definition}[$\semFunctorr$]
\label{def:semFuncR}
 The \emph{solution functor} $\semFunctorr \colon \roMDP\to \semCatr$ is defined as follows. It carries an object $m\in\Nat$ to $m$, and an arrow $\mdp{A}\colon m\to n$ in $\roMDP$ to $\semFunctorr(\mdp{A})\colon m\to n$ in $\semCatr$. The latter is defined in the following elementwise manner, using $\semFunctorrmc$ in \cref{def:semFuncRoMC}. 
 \begin{equation}\label{eq:defsemFuncR}
     \semFunctorr(\mdp{A}) \;\defeq\; \bigl\{ \semFunctorrmc(\indmc{\mdp{A}}{\tau})\,\big|\, \tau:Q^{\mdp{A}}\rightarrow A\text{ a (memoryless) scheduler}\bigr\}.
 \end{equation}
\end{definition}

\begin{theorem}[compositionality]\label{thm:semFunctorrIsCompositional}
The correspondence $\semFunctorr \colon \roMDP\to \semCatr$ 
is a traced symmetric monoidal functor, preserving $\seqcomp,\oplus,\tr$ as in \cref{thm:semFunctorrmcIsCompositional}. 
\qed
\end{theorem}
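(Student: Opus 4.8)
The plan is to reduce the whole statement to the already-established Markov-chain case \cref{thm:semFunctorrmcIsCompositional}, exploiting the defining feature of $\semFunctorr$ (\cref{def:semFuncR}): it collects the $\semFunctorrmc$-images of the induced roMCs $\indmc{\mdp{A}}{\tau}$ over all memoryless schedulers $\tau$, and the operations of $\semCatr$ are applied elementwise (\cref{def:seqc_in_sr}). The two ingredients I need are (i) a bijection between schedulers of a composite roMDP and tuples of schedulers of its components, and (ii) \emph{commutation lemmas} stating that forming the induced roMC commutes with $\seqcomp,\oplus,\tr$.

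For (i) and (ii): since $\mdp{A}\seqcomp\mdp{B}$ and $\mdp{A}\oplus\mdp{B}$ both have position set $Q^{\mdp{A}}+Q^{\mdp{B}}$, the universal property of the coproduct in $\Sets$ gives a bijection $\tau\mapsto(\tau_{\mdp{A}},\tau_{\mdp{B}})$ between schedulers $\tau\colon Q^{\mdp{A}}+Q^{\mdp{B}}\to A$ and pairs of schedulers of the components. Reading off \cref{def:seqCompRightwardOpenMDPs} (and the analogous definition of $\oplus$), I would check
\[
 \indmc{(\mdp{A}\seqcomp\mdp{B})}{(\tau_{\mdp{A}},\tau_{\mdp{B}})}=\indmc{\mdp{A}}{\tau_{\mdp{A}}}\seqcomp\indmc{\mdp{B}}{\tau_{\mdp{B}}},\qquad
 \indmc{(\mdp{A}\oplus\mdp{B})}{(\tau_{\mdp{A}},\tau_{\mdp{B}})}=\indmc{\mdp{A}}{\tau_{\mdp{A}}}\oplus\indmc{\mdp{B}}{\tau_{\mdp{B}}}.
\]
These hold because the MDP operations assemble entry functions, transition kernels, and reward functions uniformly in the action argument, whereas forming $\indmc{-}{\tau}$ merely substitutes $\tau(s)$ for the free action at each position $s$; substitution therefore commutes with the assembly. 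For the trace, $\trmdp{l}{m}{n}(\mdp{A})$ keeps the position set $Q^{\mdp{A}}$ unchanged (\cref{def:trace_tmdp}), so its schedulers coincide with those of $\mdp{A}$, and the same substitution argument yields $\indmc{\trmdp{l}{m}{n}(\mdp{A})}{\tau}=\trmdp{l}{m}{n}(\indmc{\mdp{A}}{\tau})$.

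Granting the lemmas, each preservation equality follows by unfolding \cref{def:semFuncR} and \cref{def:seqc_in_sr} and invoking \cref{thm:semFunctorrmcIsCompositional} inside the set-comprehension. For sequential composition,
\begin{align*}
 \semFunctorr(\mdp{A}\seqcomp\mdp{B})
 &=\bigl\{\semFunctorrmc(\indmc{(\mdp{A}\seqcomp\mdp{B})}{\tau})\bigm|\tau\bigr\}
  =\bigl\{\semFunctorrmc(\indmc{\mdp{A}}{\tau_{\mdp{A}}}\seqcomp\indmc{\mdp{B}}{\tau_{\mdp{B}}})\bigm|\tau_{\mdp{A}},\tau_{\mdp{B}}\bigr\}\\
 &=\bigl\{\semFunctorrmc(\indmc{\mdp{A}}{\tau_{\mdp{A}}})\seqcomp\semFunctorrmc(\indmc{\mdp{B}}{\tau_{\mdp{B}}})\bigm|\tau_{\mdp{A}},\tau_{\mdp{B}}\bigr\}
  =\semFunctorr(\mdp{A})\seqcomp\semFunctorr(\mdp{B}),
\end{align*}
where the second equality uses the scheduler bijection together with the $\seqcomp$-commutation lemma, the third uses \cref{thm:semFunctorrmcIsCompositional}, and the last is the elementwise definition of $\seqcomp$ in $\semCatr$. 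The crucial point is that the bijection lets $(\tau_{\mdp{A}},\tau_{\mdp{B}})$ range over \emph{all independent} pairs, matching exactly the elementwise $\{f\seqcomp g\mid f\in\semFunctorr(\mdp{A}),\,g\in\semFunctorr(\mdp{B})\}$ --- this is precisely why $\semFunctorr$ collects all schedulers rather than only optimal ones. The cases of $\oplus$ and $\tr$ are handled identically, using their respective commutation lemmas; for $\tr$ the comprehension index is simply the set of schedulers of $\mdp{A}$. Finally, identities and symmetries are positionless roMDPs carrying a unique (empty) scheduler, so $\semFunctorr$ sends each to the singleton of the $\semFunctorrmc$-image of the corresponding positionless roMC; these are the identity and swap of $\semCatr$ because $\semFunctorrmc$ is a symmetric monoidal functor.

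I expect the only genuine work to be in the $\tr$-commutation lemma, which is the most involved of the three. The trace operator (\cref{def:trace_tmdp}) routes probability mass along internal loops using the auxiliary data $\precedent(\cdot)$ and an entry-function fixpoint, so I must confirm that performing this routing before or after fixing the scheduler gives the same roMC. This holds because the loop bookkeeping ($E$ and $\precedent$) is defined purely from the action-independent entry function $E^{\mdp{A}}$ --- hence literally identical on both sides --- while the routed transition probabilities agree since substituting $\tau(q)$ for the action commutes with the finite sums and reindexing in \cref{def:trace_tmdp}. Everything else is a mechanical unfolding of definitions.
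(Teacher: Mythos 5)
Your proof is correct and is essentially the paper's own (omitted-as-routine) argument: the elementwise definitions in \cref{def:seqc_in_sr} together with the change-of-base viewpoint presuppose exactly your reduction---the bijection between schedulers of a composite and tuples of component schedulers, the commutation of $\indmc{-}{\tau}$ with $\seqcomp,\oplus,\tr$, and then an appeal to \cref{thm:semFunctorrmcIsCompositional} inside the set comprehension. In particular, your key observation for the trace case, that schedulers of $\trmdp{l}{m}{n}(\mdp{A})$ literally coincide with those of $\mdp{A}$ since the position set is unchanged, is precisely the memorylessness point the paper itself flags as the crux in \cref{rem:memoryless}.
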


\begin{remark}[memoryless schedulers]\label{rem:memoryless}
 Our restriction to memoryless schedulers (cf.\ \cref{def:path}) plays a crucial role in the proof of \cref{thm:semFunctorrIsCompositional}, specifially for the trace operator (i.e.\ loops, cf.\ \cref{fig:trace_of_rmdp}). Intuitively, a \emph{memoryful} scheduler for a loop may
act differently in different iterations. Its technical consequence is that the elementwise definition of $\tr$, as in \cref{def:seqc_in_sr}, no longer works for memoryful schedulers. 
\begin{auxproof}
 It is known that memoryless schedulers do not suffice for  other objectives.
 A possible workaround is to enrich the categorical framework with memories, as done e.g.\ in~\cite{HoshinoMH14CSLLICSNoCrossref}. This is future work.
\end{auxproof}
\end{remark}

\begin{auxproof}

\begin{remark}[$\roMC$ and $\roMDP$]\label{rem:roMDPIsNotByChangeOfBase}
Unlike the relationship between the first and second rows in \cref{fig:catsFunctors} (where everything is transferred by Int), in the passage from the third to the second, we use change of base only on the semantic side. 

We can indeed apply change of base to  $\roMC$ on the system side, but this yields a category where an arrow is a bunch of roMCs. Such data is not itself an roMDP $\mdp{A}$ but rather 
$\bigl\{\,\indmc{\mdp{A}}{\tau}\,\big|\, \tau \text{ is a scheduler}\,\}$. 

Nevertheless, the value of change of base as a guiding principle on the semantic side is significant, as we demonstrated above.
\end{remark}

 Here, we define the semantic category $\semCatr$ of rightward open MDPs by exploiting the categorical structure of $\semCatrmc$. The intuition is that for each memoryless scheduler $\tau$ in rightward open MDP $\mdp{A}$, $\tau$ induces rightward open MC $\indmc{\mdp{A}}{\tau}$, and the result of a rightward open MDP is a set of rightward open MCs that are induced by memoryless schedulers.

 MCs by the well-known categorical construction called \emph{change of base}~\cite{Eilenberg65,kelly1982basic}, which is applied for construction for quantitative models of \emph{game semantics}~\cite{journals/iandc/HylandO00} in~\cite{laird2017qualitative}.

 \begin{definition}[arrow in $\semCatr$]
 \myindent\label{def:mor_in_sr}
 Let $m, n\in \Nat$.
 An \emph{arrow} $F$ from $m$ to $n$ (in $\semCatr$), denoted by $F:m\rightarrow n$, is a subset of $\semCatrmc(m, n)$. The set arrows from $m$ to $n$ in $\semCatr$ is denoted by $\semCatr(m, n)$.
 \end{definition}

 \begin{definition}[sequential compositon $\seqcomp$ in $\semCatr$]
 \myindent\label{def:seqc_in_sr}
 Let $F:m\rightarrow l$, $G:l\rightarrow n$ be arrows in $\semCatr$.
 The sequential composition $F\seqcomp G$ of $F$ and $G$ is given by $F\seqcomp G\defeq \{ f\seqcomp g\mid f\in F,\  g\in G\}$ where $f\seqcomp g$ is the sequential composition %
 in $\semCatrmc$.
 \end{definition}

 The sum $\oplus$, trace operator $\tr$, identity $\id$, and swap $\swap[]{}{}$ in $\semCatr$ are defined similarly in the natural manner. The complete definition is found in~\cref{apnd:completeDefOfSemCatr}.
 The construction of~\cref{def:mor_in_sr,def:seqc_in_sr} is known as \emph{change of base}~\cite{Eilenberg65,kelly1982basic}.

 \begin{definition}[induced rightward open MC $\indmc{\mdp{A}}{\tau}$]\myindent
 Let $\mdp{A}:m\rightarrow n$ be a rightward open MDP and $\tau:Q^{\mdp{A}}\rightarrow A$ be a memoryless scheduler. We define the \emph{induced rightward open MC} $\indmc{\mdp{A}}{\tau}$ as $(m,n,Q^{\mdp{A}}, \{\star\},E^{A}, P^{\indmc{\mdp{A}}{\tau}},R^{\mdp{A}})$, where for each $s\in Q$ and $t\in (\nset{\nfr{n}+\nfl{m}}+Q)$, $P^{\indmc{\mdp{A}}{\tau}}(s, \star, t) \defeq P^{\mdp{A}}(s, \tau(s), t)$.
 \end{definition}

 \begin{proposition}[$\semCatr, \semFunctorr$ are traced]
 \label{prop:fsemmdp_TSMC}
 We have a traced symmetric monoidal category $(\semCatr,\seqcomp ,\oplus, \tr, \id, \swap[]{}{})$, whose objects are natural numbers.

 Furthermore, the functor $\semFunctorr:\roMDP\rightarrow \semCatr$ defined in the following is a traced symmetric monoidal functor: for a rightward open MDP $\mdp{A}:m\rightarrow n$, %
 \begin{align*}
     \semFunctorr(\mdp{A}) \defeq \{ \semFunctorrmc(\indmc{\mdp{A}}{\tau})\mid \tau:Q^{\mdp{A}}\rightarrow A\text{ is a memoryless scheduler of }\mdp{A}\}.
 \end{align*}
 \qed
\end{proposition}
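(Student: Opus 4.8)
The plan is to reduce the three preservation equations to a single structural fact about the scheduler-fixing operation $\mdp{A}\mapsto\indmc{\mdp{A}}{\tau}$, and then combine this with the already-established compositionality of $\semFunctorrmc$ (\cref{thm:semFunctorrmcIsCompositional}) and the elementwise definitions of $\seqcomp,\oplus,\tr$ in $\semCatr$ (\cref{def:seqc_in_sr}). Because the operations in $\semCatr$ act on sets of roMC-semantics elementwise, each required equation becomes an equality of sets that I would establish via a bijection on the indexing schedulers. I will treat $\seqcomp$ in detail; $\oplus$ is identical and $\tr$ is analogous with one crucial twist.

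The structural lemma I would isolate first is that scheduler-fixing commutes with each TSMC operation at the level of roMCs. For sequential composition, $Q^{\mdp{A}\seqcomp\mdp{B}}=Q^{\mdp{A}}+Q^{\mdp{B}}$, so by the coproduct universal property a scheduler $\tau\colon Q^{\mdp{A}\seqcomp\mdp{B}}\to A$ corresponds bijectively to a pair $(\tau_{\mdp{A}},\tau_{\mdp{B}})$; inspecting \cref{def:seqCompRightwardOpenMDPs} shows every transition probability of $\mdp{A}\seqcomp\mdp{B}$ references $P(s,a,\cdot)$ with $s$ in a single component, so substituting $a=\tau(s)$ yields exactly $\indmc{\mdp{A}\seqcomp\mdp{B}}{\tau}=\indmc{\mdp{A}}{\tau_{\mdp{A}}}\seqcomp\indmc{\mdp{B}}{\tau_{\mdp{B}}}$ (the entry function and rewards match trivially, being scheduler-independent). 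The same argument handles $\oplus$. For the trace, $Q^{\trmdp{l}{m}{n}(\mdp{E})}=Q^{\mdp{E}}$, so a scheduler on the trace \emph{is literally} a scheduler $\tau$ on $\mdp{E}$, and since the $\precedent$ sets in \cref{def:trace_tmdp} depend only on the (scheduler-independent) entry function while the trace probabilities are linear in $P^{\mdp{E}}(q,a,\cdot)$, fixing $\tau$ before or after tracing agrees: $\indmc{\trmdp{l}{m}{n}(\mdp{E})}{\tau}=\trmdp{l}{m}{n}(\indmc{\mdp{E}}{\tau})$.

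Granting the lemma, each equation is a short chain. For $\seqcomp$,
\begin{align*}
\semFunctorr(\mdp{A}\seqcomp\mdp{B})
&= \bigl\{\, \semFunctorrmc(\indmc{\mdp{A}\seqcomp\mdp{B}}{\tau}) \,\big|\, \tau \,\bigr\}
= \bigl\{\, \semFunctorrmc(\indmc{\mdp{A}}{\tau_{\mdp{A}}} \seqcomp \indmc{\mdp{B}}{\tau_{\mdp{B}}}) \,\big|\, \tau_{\mdp{A}}, \tau_{\mdp{B}} \,\bigr\} \\
&= \bigl\{\, \semFunctorrmc(\indmc{\mdp{A}}{\tau_{\mdp{A}}}) \seqcomp \semFunctorrmc(\indmc{\mdp{B}}{\tau_{\mdp{B}}}) \,\big|\, \tau_{\mdp{A}}, \tau_{\mdp{B}} \,\bigr\}
= \semFunctorr(\mdp{A}) \seqcomp \semFunctorr(\mdp{B}),
\end{align*}
where the second equality is the scheduler bijection and the lemma, the third is \cref{thm:semFunctorrmcIsCompositional}, and the last is the elementwise definition in $\semCatr$. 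The argument for $\oplus$ is verbatim, and for $\tr$ it is the same with the literal equality of scheduler sets replacing the bijection. Preservation of identities and swaps follows by the same bookkeeping: $\ID_m$ and $\SYM_{m,n}$ carry no positions, hence a unique (empty) scheduler, so their images are the corresponding singletons, which are the identity and swap of $\semCatr$.

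The main obstacle is the trace case, specifically the commutation $\indmc{\trmdp{l}{m}{n}(\mdp{E})}{\tau}=\trmdp{l}{m}{n}(\indmc{\mdp{E}}{\tau})$. This is exactly where the restriction to memoryless schedulers is indispensable (cf.\ \cref{rem:memoryless}): a memoryful scheduler on the traced MDP could select different actions at the same position on successive loop iterations, so fixing it would not factor through a single roMC, and the elementwise definition of $\tr$ in $\semCatr$ would fail. For memoryless $\tau$, the single substitution $a\mapsto\tau(q)$ passes through the $\precedent$-based summations of \cref{def:trace_tmdp} unchanged, which is what makes the elementwise bundling of schedulers correct and completes the proof.
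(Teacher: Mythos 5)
Your treatment of the second claim---that $\semFunctorr$ preserves $\seqcomp$, $\oplus$, $\tr$ (and $\ID$, $\SYM$)---is correct, and it is essentially the argument the paper intends: the paper states the result with the proof deferred as routine, relying implicitly on exactly the reduction you make explicit. Your key structural lemma checks out against \cref{def:seqCompRightwardOpenMDPs}, \cref{def:sumRightwardOpenMDPs} and \cref{def:trace_tmdp}: in all three constructions the entry function, the rewards, and (for the trace) the $\precedent$ sets are scheduler-independent, and the transition probabilities of the composite access $P$ only at a single source position and action, so fixing $a=\tau(s)$ commutes with the construction. Hence $\indmc{\mdp{A}\seqcomp\mdp{B}}{\tau}=\indmc{\mdp{A}}{\tau_{\mdp{A}}}\seqcomp\indmc{\mdp{B}}{\tau_{\mdp{B}}}$ under the coproduct bijection of schedulers, and $\indmc{\trmdp{l}{m}{n}{}(\mdp{E})}{\tau}=\trmdp{l}{m}{n}{}(\indmc{\mdp{E}}{\tau})$ with a literal identity of scheduler sets since $Q^{\trmdp{l}{m}{n}{}(\mdp{E})}=Q^{\mdp{E}}$; chaining with \cref{thm:semFunctorrmcIsCompositional} and the elementwise \cref{def:seqc_in_sr} gives the three set equalities. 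Your diagnosis that memorylessness is precisely what makes the trace case factor through a single roMC matches \cref{rem:memoryless}.

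However, the proposition makes two claims, and you prove only the second. The first---that $(\semCatr,\seqcomp,\oplus,\tr,\ID,\SYM)$ is itself a TSMC---requires verifying the equational axioms of \cref{fig:TSMCAxioms} for the elementwise operations, which your proposal never does. The paper discharges this part by a different route: the category structure and the axioms involving only $\seqcomp$ come for free from the change-of-base construction with the powerset functor (\cref{subsec:change_of_base}), while the axioms involving $\oplus$ and $\tr$ are checked directly. In your concrete setting the patch is short but should be stated: every axiom in \cref{fig:TSMCAxioms} is an equation in which each arrow variable occurs exactly once on each side, so applying the operations elementwise reduces each axiom in $\semCatr$ to the corresponding axiom in $\semCatrmc$; this linearity is essential, since an equation with a repeated variable would not lift elementwise (e.g.\ $\{f\oplus f'\mid f,f'\in F\}\neq\{f\oplus f\mid f\in F\}$ in general). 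Together with the observation you already make---that the constants $\ID_m$ and $\SYM_{m,n}$ of $\semCatr$ are singletons, so the unit, symmetry, yanking and vanishing axioms go through---this one remark closes the gap and makes your proof complete.
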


\end{auxproof}

\subsection{Semantic Category of MDPs}
Finally, we extend from (unidirectional) roMDPs to (bidirectional) oMDPs (i.e.\ from the second  to the first row in \cref{fig:catsFunctors}). The system-side construction is already presented in~\cref{subsec:string_diagrams_openMDPs}; the semantical side, described here, follows the same $\Int$ construction~\cite{joyal1996}. The common intuition is that of twists, see~\cref{fig:twist}.

\begin{definition}[the semantic category $\semCat$]
\label{def:mor_in_s}
We define $\semCat=\Int(\semCatr)$. Concretely, its objects are pairs $(\nfr{m},\nfl{m})$ of natural numbers. Its arrows are given by arrows of $\semCatr$ as follows:
\begin{equation}\label{eq:arrowsOfSemCat}
 \vcenter{\infer={ 
   \text{an arrow } F\colon \nfr{m}+\nfl{n}\longrightarrow  \nfr{n}+\nfl{m}
   \text{ in $\semCatr$}
   }{
   \text{an arrow } F\colon (\nfr{m}, \nfl{m}) \longrightarrow (\nfr{n}, \nfl{n}) 
   \text{ in $\semCat$}
   }}
\end{equation}

By  general properties of  $\Int$, $\semCat$ is a compact closed category (compCC).
\end{definition}

The $\Int$ construction applies not only to categories but also to functors.
\begin{definition}[$\semFunctor$]
The \emph{solution functor} $\semFunctor \colon \oMDP\to \semCat$ is defined by $\semFunctor=\Int(\semFunctorr)$. 
\end{definition}

The following is our main theorem. 
\begin{theorem}[the solution  $\semFunctor$ is compositional]
 The solution functor $\semFunctor \colon \oMDP\to \semCat$ is a compact closed functor, preserving operations $\seqcomp, \oplus$ as in
\begin{equation}\small
 \begin{array}{l}
  \semFunctor(\mdp{A} \seqcomp \mdp{B})
 =  
 \semFunctor(\mdp{A}) \seqcomp 
 \semFunctor(\mdp{B})
 ,\quad
 \semFunctor(\mdp{A}\oplus\mdp{B})
 =  
 \semFunctor(\mdp{A})\oplus
 \semFunctor(\mdp{B})
 .
\end{array}
\tag*{\qed}
\end{equation}

\end{theorem}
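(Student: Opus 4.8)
The plan is to obtain the theorem as a formal consequence of the $\Int$ construction, leaning entirely on the rightward result \cref{thm:semFunctorrIsCompositional} and never manipulating bidirectional open MDPs by hand. Recall the three identifications that our development was engineered to satisfy: $\oMDP=\Int(\roMDP)$ (\cref{def:oMDPCat}), $\semCat=\Int(\semCatr)$ (\cref{def:mor_in_s}), and $\semFunctor=\Int(\semFunctorr)$. Thus the whole claim reduces to showing that $\Int$ sends the traced symmetric monoidal functor $\semFunctorr$ to a compact closed functor.

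First I would invoke \cref{thm:semFunctorrIsCompositional}, which gives that $\semFunctorr\colon\roMDP\to\semCatr$ is a traced symmetric monoidal functor, preserving $\seqcomp,\oplus,\tr$ and the structural wires $\ID,\SYM$; this is the only nontrivial input, and it is exactly the data that $\Int$ consumes. I would then appeal to the $2$-functoriality of the $\Int$ construction~\cite{joyal1996}: just as $\Int$ turns a TSMC into a compCC, it turns a traced symmetric monoidal functor into a compact closed functor. Concretely, $\Int(\semFunctorr)$ is the identity on the natural-number coordinates of an object $(\nfr{m},\nfl{m})$, and on an arrow---which by \cref{eq:arrowsOfOMDPCat} is an roMDP of a twisted type---it applies $\semFunctorr$, landing at the twisted type of \cref{eq:arrowsOfSemCat}. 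Well-definedness and preservation of the compact closed structure then follow from $\semFunctorr$ preserving $\seqcomp,\oplus,\tr$, the trace clause being essential because composition in an $\Int$-category runs through the trace (feedback along the backward wires), and because the units and counits of the compact closed structure are assembled from $\ID$ and $\SYM$.

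The two displayed equalities are then immediate: preservation of $\seqcomp$ is functoriality, since by \cref{def:seqCompOpenMDPs} (cf.\ \cref{fig:seqCompIntOMDP}) the operation $\seqcomp$ on both $\oMDP$ and $\semCat$ is the categorical composition of the respective $\Int$-category; and preservation of $\oplus$ is the monoidality of $\semFunctor=\Int(\semFunctorr)$. The one step demanding care---where the argument stops being purely abstract---is that the equalities must hold on the nose as written, not merely up to the coherence isomorphisms a strong monoidal functor carries in general. Here I would use that $\semFunctorr$ is strict on objects (it is the identity on $\Nat$) and that the rightward equalities of \cref{thm:semFunctorrmcIsCompositional,thm:semFunctorrIsCompositional} already hold strictly, whence $\Int$ keeps the comparison morphisms of $\Int(\semFunctorr)$ equal to identities. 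Equivalently, one may avoid the abstract route and argue directly: unfold $\mdp{A}\seqcomp\mdp{B}$ and $\mdp{A}\oplus\mdp{B}$ via \cref{def:seqCompOpenMDPs} and \cref{fig:seqCompIntOMDP} into traced composites of $\seqcomp,\oplus,\tr,\ID,\SYM$ in $\roMDP$, push $\semFunctorr$ through each operation by \cref{thm:semFunctorrIsCompositional}, and recognize the result, under the twists of \cref{fig:twist}, as $\semFunctor(\mdp{A})\seqcomp\semFunctor(\mdp{B})$ and $\semFunctor(\mdp{A})\oplus\semFunctor(\mdp{B})$ respectively.
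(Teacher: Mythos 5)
Your proposal is correct and follows exactly the paper's route: the paper defines $\semFunctor=\Int(\semFunctorr)$ and obtains the theorem immediately from the general fact that the $\Int$ construction~\cite{joyal1996} sends traced symmetric monoidal functors (here $\semFunctorr$, compositional by \cref{thm:semFunctorrIsCompositional}) to compact closed functors, with $\seqcomp$ and $\oplus$ on both sides being the induced $\Int$-structure. Your additional attention to strictness of the equalities is a careful elaboration of what the paper leaves implicit, not a different argument.
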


We can  easily confirm, from \cref{def:semFuncRoMC,def:semFuncR}, that   $\semFunctor$  computes the solution  we want.  Given an open MDP $\mathcal{A}$, an entrance $i$ and an exit $j$, $\semFunctor$  returns the set
\begin{equation}\label{eq:concreteOutputOfSemFunc}\small
\begin{array}{l}
 \bigl\{\,\bigl(\,\Reacha{i}{j}{\indmc{\mdp{A}}{\tau}},\,\ETRa{i}{j}{\indmc{\mdp{A}}{\tau}}\,\bigr)
 \,\big|\,
 \text{$\tau$ is a memoryless scheduler} 
 \,\bigr\}
\end{array}
\end{equation} of pairs of a reachability probability and expected reward, under different schedulers, in a passage from $i$ to $j$. 

\begin{remark}[synthesizing an optimal scheduler]\label{rem:synthesis}
The compositional solution functor $\semFunctor$ abstracts away schedulers and only records their results (see~\cref{eq:concreteOutputOfSemFunc} where $\tau$ is not recorded). At the implementation level, we can explicitly record schedulers so that our compositional algorithm also synthesizes an optimal scheduler. We do not do so here for theoretical simplicity.

\end{remark}

\begin{auxproof}

 \begin{definition}[arrow in $\semCat$]
 \myindent\label{def:mor_in_s}
 Let $(\nfr{m},\nfl{m}), (\nfr{n},\nfl{n})\in \Nat\times \Nat$.
 An \emph{arrow} $F$ from $(\nfr{m}, \nfl{m})$ to $(\nfr{n}, \nfl{n})$ (in $\semCat$), denoted by $F:(\nfr{m}, \nfl{m})\rightarrow (\nfr{n}, \nfl{n})$, is an arrow in $\semCatrmc(\nfr{m}+\nfl{n}, \nfr{n}+\nfl{m})$. The set of arrows from $(\nfr{m}, \nfl{m})$ to $(\nfr{n}, \nfl{n})$ in $\semCat$ is denoted by $\semCat\big((\nfr{m}, \nfl{m}), (\nfr{n}, \nfl{n})\big)$.
 \end{definition}

 By using the TSMC structure of $\semCatrmc$, the sequential compositon $\seqcomp$, sum $\oplus$, identity $\id$, swap $\swap[]{}{}$, unit $\dunit$, counit $\dcounit$ in $\semCatr$ are defined  in the same manner as~\cref{subsec:string_diagrams_openMDPs}, namely by the $\Int$ construction. The complete definitions are found in~\cref{apnd:completeDefOfSemCat}.

 \begin{theorem}[$\semCat,\ \semFunctor$ are compact closed]
 \label{thm:coherence_cond_compCC}
 We have a compact closed category
 $(\semCat, \seqcomp, \oplus, \id, \swap[]{}{}, \dunit, \dcounit)$, whose objects are pairs of natural numbers.

 Furthermore, the functor $\semFunctor:\oMDP\rightarrow \semCat$ defined in the following is a compact closed functor: for an arrow $\mdp{A}:(\nfr{m}, \nfl{m})\rightarrow (\nfr{n}, \nfl{n})$ in $\oMDP$, 
 \begin{align*}
     \semFunctor(\mdp{A}) &\defeq  \semFunctorr(\mdp{A}),
 \end{align*}
 where note that by definition $\mdp{A}$ is also an arrow from $\nfr{m}+\nfl{n}$ to $\nfr{n}+\nfl{m}$ in $\roMDP$.
 \qed
 \end{theorem}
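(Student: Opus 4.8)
The plan is to reduce everything to the functoriality of the $\Int$ construction, so that essentially no new computation is required beyond what \cref{thm:semFunctorrIsCompositional} already provides. Recall from \cref{def:oMDPCat} and \cref{def:mor_in_s} that we have set $\oMDP = \Int(\roMDP)$ and $\semCat = \Int(\semCatr)$, and that $\semFunctor$ is defined as $\Int(\semFunctorr)$. The key classical fact I would invoke is that $\Int$ is not merely an operation on categories but is functorial on morphisms as well: it sends a traced symmetric monoidal functor to a compact closed functor, see~\cite{joyal1996}. Since \cref{thm:semFunctorrIsCompositional} established that $\semFunctorr \colon \roMDP \to \semCatr$ is a traced symmetric monoidal functor (preserving $\seqcomp$, $\oplus$, and $\tr$ on the nose), applying $\Int$ to it yields at once that $\semFunctor = \Int(\semFunctorr)$ is a compact closed functor.

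It then remains only to observe that being a compact closed functor gives exactly the two displayed equalities. The preservation of $\seqcomp$ is preservation of arrow composition, which holds for any functor, and the preservation of $\oplus$ is preservation of the monoidal tensor, which holds because a compact closed functor is in particular a (strong, and here strict) symmetric monoidal functor. Since all structure in our setting is strict --- the operations $\seqcomp, \oplus$ on $\oMDP$ and on $\semCat$ are precisely the compact closed composition and tensor produced by $\Int$ --- these preservation statements become the on-the-nose equalities claimed.

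If one prefers to avoid citing the functoriality of $\Int$ as a black box, a fully explicit route is available, and I would include it as a sanity check. Unfold $\mdp{A}\seqcomp\mdp{B}$ and $\mdp{A}\oplus\mdp{B}$ in $\oMDP$ through \cref{def:seqCompOpenMDPs} (and the analogous definition of $\oplus$), where both are expressed as composites, sums, traces, identities $\ID$, and swaps $\SYM$ in $\roMDP$. Since \cref{thm:semFunctorrIsCompositional} guarantees that $\semFunctorr$ commutes with each of $\seqcomp, \oplus, \tr$ and carries the identities and swaps of $\roMDP$ to the corresponding identities and swaps of $\semCatr$, one propagates $\semFunctorr$ inward through these expressions term by term, arriving at precisely the $\Int$-level definitions of $\seqcomp$ and $\oplus$ in $\semCat$ applied to $\semFunctor(\mdp{A})$ and $\semFunctor(\mdp{B})$.

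The hard part, such as it is, lies entirely upstream: it is \cref{thm:semFunctorrIsCompositional}, whose proof in turn rests on the decomposition equalities of \cref{prop:MCeq,prop:MCERweq} and on verifying the traced symmetric monoidal axioms for $\semCatr$. Granting those, the present statement presents no genuine obstacle; the only point demanding a moment's care is confirming that the compact closed operations manufactured by $\Int$ on the semantic side coincide with the operations $\seqcomp, \oplus$ named in the statement, which is immediate from \cref{def:mor_in_s}.
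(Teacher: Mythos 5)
Your proposal is correct and follows exactly the paper's route: the paper likewise obtains the result by applying the $\Int$ construction of~\cite{joyal1996} to the traced symmetric monoidal functor $\semFunctorr$ from \cref{thm:semFunctorrIsCompositional}, using the general fact that $\Int$ sends TSMCs to compact closed categories and traced symmetric monoidal functors to compact closed functors. Your optional explicit unfolding via \cref{def:seqCompOpenMDPs} is a reasonable sanity check but adds nothing beyond what the paper's (black-box) argument already delivers.
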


\end{auxproof}

\section{Implementation and Experiments}\label{sec:impl}
\paragraph{Meager Semantics}
Since our problem is to compute optimal expected rewards, in our compositional algorithm, we can ignore those intermediate results which are \emph{totally subsumed} by other results (i.e.\ those which come from clearly suboptimal schedulers). This notion of \emph{subsumption} is formalized as an order $\le$ between parallel arrows in $\semCatrmc$ (cf.\ \cref{def:semCatrmc}):  $(p_{i,j},r_{i,j})_{i,j} \le (p'_{i,j},r'_{i,j})_{i,j}$ if $p_{i,j}\le p'_{i,j}$ and  $r_{i,j}\le r'_{i,j}$ for each $i,j$. Our implementation works with this \emph{meager semantics} for better performance; specifically, it removes elements of $\semFunctorr(\mdp{A})$ in~\cref{eq:defsemFuncR} that are subsumed by others. It is possible to formulate this meager semantics as categories and functors, compare it with the semantics in \cref{sec:FatSemanticCategories}, and prove its correctness. We defer it to another venue for lack of space. 
\paragraph{Implementation}
We implemented the compositional solution functor $\semFunctor\colon \oMDP\allowbreak\to \semCat$, using the meager semantics as discussed. This prototype implementation is in Python and called $\ocf$. 

$\ocf$ takes a string diagram $\mdp{A}$ of open MDPs as input;
they are expressed in a textual format that uses operations $\seqcomp,\oplus$ (such as the textual expression in \cref{def:seqCompOpenMDPs}). Note that  we are abusing notations here, identifying a string diagram of oMDPs and the composite oMDP $\mdp{A}$ denoted by it. 

Given such input $\mdp{A}$, $\ocf$ returns the arrow $\semFunctor(\mdp{A})$, which is concretely given by pairs of a reachability probability and expected reward shown in~\cref{eq:concreteOutputOfSemFunc} (we have suboptimal pairs removed, as discussed above). Since different pairs correspond to different schedulers, we choose a pair in which the expected reward is the greatest. This way we answer the optimal expected reward problem.

\paragraph{Freezing} In the input format of $\ocf$, we have an additional \emph{freeze} operator: any expression inside it is considered monolithic, and thus $\ocf$ does not solve it compositionally. Those frozen oMDPs---i.e.,\ those expressed by frozen expressions---are solved by PRISM~\cite{kwiatkowska2011prism} in our implementation.

 Freezing allows us to choose how deep---in the sense of the nesting  of string diagrams---we go compositional. For example, when a component oMDP $\mdp{A}_{0}$ is small but has many loops, fully compositional model checking of $\mdp{A}_{0}$ can be more expensive than (monolithic) PRISM. Freezing is useful in such situations. 

We have found experimentally that the degree of freezing often should not be extremal (i.e.\ none or all). The optimal degree, which should be thus somewhere intermediate, is not known a priori.

 However, there are not too many options (the number of layers in compositional model description), and freezing a half is recommended, both from our experience and for the purpose of binary search.

We require that a frozen oMDP should have a unique exit. Otherwise, an oMDP with a specified exit can have the reachability probability $<1$, in which case PRISM returns $\infty$ as the expected reward. The last is different from our definition of expected reward (\cref{rem:notAlmostSureTerm}).

\paragraph{Research Questions}
We posed the following questions.
\begin{description}
 \item[RQ1] Does the compositionality of $\ocf$ help  improve performance?
 \item[RQ2] How much do we benefit from freezing, i.e.,\ a feature that allows us to choose the degree of compositionality?
 \item[RQ3]  What is the absolute performance of $\ocf$?
 \item[RQ4] Does the formalism of string digrams accommodate real-world models, enabling their compositional model checking?
 \item[RQ5] On which (compositional) models does $\ocf$ work well?
\end{description}

\paragraph{Experiment Setting}
We conducted experiments  on Apple 2.3 GHz Dual-Core Intel Core i5 with 16GB of RAM. 
We designed three benchmarks, called Patrol, Wholesale, and Packets, as string diagrams of MDPs. 
Patrol is sketched in \cref{fig:string_diagrams_of_mdps}; it has layers of \emph{tasks}, \emph{rooms}, \emph{floors}, \emph{buildings} and a \emph{neighborhood}.

Wholesale is similar to Patrol, with four layers (\emph{item}, \emph{dispatch}, \emph{pipeline}, \emph{wholesale}), but their transition structures are more complex: they have more loops, and more actions are enabled in each position, compared to Patrol. The lowest-level component MDP is much larger, too: an \emph{item} in Wholesale has 5000 positions, while a \emph{task} in Patrol has a unique position.

Packets has  two layers: the lower layer models a transmission of 100 packets with probabilistic failure. The upper layer is a sequence of copies of 2--5 variations of the lower layer---in total, we have 50 copies---modeling  50 batches of packets.

For Patrol and Wholesale, we conducted experiments with varying \emph{degree of identification (DI)}; this can be seen as an ablation study. These benchmarks have  identical copies of a component MDP in their string diagrams; high DI  means that these copies are indeed expressed as multiple occurrences of the same variable, informing $\ocf$ to reuse the intermediate solution. As DI goes lower, we introduce new variables for these copies and let them look different to $\ocf$. Specifically, we have twice as many variables for DI-mid, and three (Patrol) or four (Wholesale) times as many for DI-low, as for DI-high.

For Packets, we conducted experiments with different degrees of  freezing (FZ). FZ-none indicates no freezing, where our compositional algorithm digs all the way down to individual positions as component MDPs. FZ-all freezes everything, which means we simply used PRISM (no compositionality). FZ-int.\ (\emph{intermediate}) freezes the lower   of the two layers. Note that this includes the performance comparison between CompMDP and PRISM (i.e.\ FZ-all).

For Patrol and Wholesale, we also compared the performance of CompMDP and PRISM using their simple variations Patrol5 and Wholesale5. We did not use other variations (Patrol/Wholesale1--4) since the translation of the models to the PRISM format blowed up.

\begin{table}[tbp]
    \caption{Experimental results. 
}
    \label{tab:experiment_results}
    \begin{minipage}[t]{.45\textwidth}
    \centering\scriptsize
    \scalebox{0.9}{
    \begin{tabular}[t]{lrrrrr}
        \toprule
         &&&
         \multicolumn{3}{c}{\text{exec.\ time [s]}}
        \\\cmidrule(lr){4-6}
        benchmark & $|Q|$ & $|\totalAction|$  & 
        DI-high &
        DI-mid &
        DI-low 
        \\
        \midrule Patrol1    & $10^8$ &   $10^{8}$  & 21 & 42 & 83 \\   
        Patrol2    & $10^8$  &   $10^8$    & 23 & 48 & 90\\
        Patrol3    & $10^9$  &          $10^9$      &  22  &  43  &  89  \\   
        Patrol4    &   $10^9$        &    $10^9$  &  30   &  60  &   121 \\
        \midrule Wholesale1    &  $10^8$       &  $2\cdot 10^8$             &  130 &  260  &  394 \\   
        Wholesale2    &   $10^8$       &    $2\cdot 10^8$   & 92    & 179   & 274  \\  
        Wholesale3    &   $2\cdot 10^8$    &  $4\cdot 10^8$      & 6    &  12 & 23  \\  
        Wholesale4    &    $2\cdot 10^8$    &  $4\cdot 10^8$               &  129   & 260   & 393 \\
        \bottomrule
    \end{tabular}
    }
    \end{minipage}
    \hfill
    \begin{minipage}[t]{.5\textwidth}\scriptsize
    \scalebox{0.9}{
    \begin{tabular}[t]{lrrrrr}
        \toprule 
             &  &&
         \multicolumn{3}{c}{\text{exec.\ time [s]}}
       \\\cmidrule(lr){4-6}
        benchmark & $|Q|$ & $|\totalAction|$ &
          FZ-none &   FZ-int.  & FZ-all\\
        &&&&&(PRISM)\\
        \midrule Packets1    & $2.5\cdot 10^5$  & $5\cdot 10^5$& TO         &     1       & 65\\
        Packets2    &  $2.5\cdot 10^5$  & $5\cdot 10^5$& TO         &     3       & 64\\
        Packets3    & $2.5\cdot 10^5$  & $5\cdot 10^5$&  TO         &     1       & 56\\
        Packets4    & $2.5\cdot 10^5$  & $5\cdot 10^5$&  TO         &     3       & 56\\
        Patrol5 &    $10^8$ &    $10^8$   &      $22$    &  $22$             &  TO\\
        Wholesale5 & $5\cdot 10^7$  &   $10^8$   &  TO   
            &   14  & TO  \\
        \bottomrule
    \end{tabular}
    }
\begin{minipage}{\textwidth}
  $|Q|$ is the number of positions; $|\totalAction|$ is the number of transitions (only counting action branching,  not probabilistic branching);  execution time is the average of five runs, in sec.;  timeout (TO) is 1200 sec.
\end{minipage}

\end{minipage}

\end{table}

\paragraph{Results and Discussion}
\cref{tab:experiment_results} summarizes the experiment results. 

\textbf{RQ1}. 
A big advantage of compositional verification is that it can reuse intermediate results. This advantage is clearly observed in the ablation experiments with the benchmarks Patrol1--4 and Wholesale1--4: as the degree of reuse goes 1/2 and 1/3--1/4 (see above), the execution time grew inverse-proportionally. Moreover, with the benchmarks Packets1--4, Patrol5 and Wholesale5, we see that 
compositionality greatly improves performance, compared to PRISM (FZ-all). Overall, we can say that compositionality has clear performance advantages in probabilistic model checking.

\textbf{RQ2}. 
The Packets experiments show that controlling the degree of compositionality is important. Packet's lower layer (frozen in FZ-int.) is a large and complex model, without a clear compositional structure;  its fully compositional treatment turned out to be prohibitively expensive. The performance advantage of FZ-int.\ compared to PRISM (FZ-all) is encouraging. The Patrol5 and Wholesale5 experiments also show the advantage of compositionality.

\begin{auxproof}
 Nice to explore symbolic algorithms, as an enemy and as a friend (can we improve $\ocf$ by symbolic representation?)
\end{auxproof}

\textbf{RQ3}. 
We find the absolute performance of $\ocf$ quite satisfactory. The Patrol and Wholesale benchmarks are huge models, with so many positions that fitting their explicit state representation in memory is already nontrivial.   $\ocf$, exploiting their succinct presentation by string diagrams, successfully model-checked them in realistic time (6--130 seconds with DI-high).

\textbf{RQ4}. 
The experiments suggest that string diagrams are a practical modeling formalism, allowing faster solutions of realistic benchmarks. It seems likely that the formalism is more suited for \emph{task compositionality} (where components are sub-\emph{tasks} and they are sequentially composed with possible fallbacks and loops) rather than \emph{system compositionality} (where components are sub-\emph{systems} and they are parallelly composed).

\textbf{RQ5}. 
It seems that  the number of locally optimal schedulers is an important factor: if there are many of them, then we have to record more in the intermediate solutions of the meager semantics. This number typically increases when more actions are available, as the comparison between Patrol and Wholesale.

\begin{auxproof}

 \subsection{Evaluation}
 We evaluate $\ocf$ with some research questions.
 \subsubsection{RQ1: Does compositionality improve the performance of probabilistic model checking?}
 \cref{tab:experiment_results_patrol} and~\cref{tab:experiment_results_wholesale} show that $\ocf$ solves the benchmarks faster as the number of occurrences of copies becomes larger. We observe that the compositionality of $\ocf$ successfully exploits the occurrence of copies in the benchmarks.
 \subsubsection{RQ2: How much do we benefit from the flexibility of our framework, which allows us to choose the level of compositionality we exploit?}
 \cref{tab:experiment_results_packets} shows that the hybrid $\ocf$ is the fastest approach in pure $\ocf$, hybrid $\ocf$ and PRISM. We observe that the flexibility of our framework is effective if we carefully design the level of compositionality.

 \subsubsection{RQ3: Overall, what is the absolute performance of our compositional probabilistic model checking?}

 \subsubsection{RQ4: What about practical applicability? Do we have many compositional models to which our method will apply?}

 Our benchmarks are realistic and we believe there are many others. That said, a modeling language that allows users to express their problems in a way compositionality is explicit and exploitable, is desired. It is our future work.

 \subsubsection{RQ5: On which (compositional) models does our method perform well?}
 One of the deciding factors is the number of locally optimal schedulers. This is because an arrow in the semantic category becomes larger in proportion to the number of optimal schedulers. We see that the number of optimal schedulers tends to increase as (1) the number of actions increases, and (2) the size of domains and codomains increases.
\end{auxproof}

\bibliographystyle{splncs04}
\bibliography{main}
\iffull
\appendix

\section{Omitted Definitions and Propositions}
\subsection{Definitions on $\roMDP$}
\begin{definition}[isomorphism of roMDPs]\label{def:isomorphismRoMDPs}
 Let 
 $\mdp{A} = \bigl(m,n,Q^{\mdp{A}}, 
 A,E^{\mdp{A}},P^{\mdp{A}},R^{\mdp{A}}\bigr)$ and 
 $\mdp{B} = \bigl(m,n,Q^{\mdp{B}}, 
 A,E^{\mdp{B}},P^{\mdp{B}},R^{\mdp{B}}\bigr)$ be  rightward open MDPs; we assume they have the same action set $A$ and the same arity $(m,n)$. 
 An \emph{isomorphism} from $\mdp{A}$ to $\mdp{B}$ is a bijection $\eta:Q^{\mdp{A}}\rightarrow Q^{\mdp{B}}$ that preserves the MDP structure, that is,  
  (i) for each $i
 $, $\eta(E^{\mdp{A}}(i)) = E^{\mdp{B}}(i)$,
  (ii) for each $(s, a ,t) 
 $, $P^{\mdp{A}}(s,a,t)=P^{\mdp{B}}(\eta(s), a,\eta(t))$, and 
  (iii) for each $s\in Q^{\mdp{A}}$, $R^{\mdp{A}}(s) =R^{\mdp{B}}(\eta(s))$. Here we abused notation and let $\eta$ also denote the identity function on exits.

 \end{definition}

 \begin{definition}[sum $\oplus$ of roMDPs]
 \label{def:sumRightwardOpenMDPs}
  Let $\mdp{A}\colon m\to n$ and $\mdp{B}\colon k\to l$ be rightward open MDPs with the same action set $A$. Their \emph{sum} $\mdp{A} \oplus \mdp{B}\colon m+k\to n+l$ is given by
  \begin{math}
      \mdp{A} \oplus \mdp{B} \defeq  
 \bigl(  m+k,  n+l,
      Q^{\mdp{A}}+Q^{\mdp{B}}, A, E^{\mdp{A}}+E^{\mdp{B}} , P^{\mdp{A}} + P^{\mdp{B}}, R^{\mdp{A}}+R^{\mdp{B}}\bigr)
  \end{math}. Here the function $E^{\mdp{A}}+E^{\mdp{B}}$ naturally combines the two functions by case distinction: for each $i\in \nset{m+k}$, 1)
 \begin{math}
 i \mapsto E^{\mdp{A}}(i)
 \end{math} if $i\le m$,  2)
 \begin{math}
 i \mapsto E^{\mdp{B}}(i-m)
 \end{math} if $m < i$ and $E^{\mdp{B}}(i-m)\in Q^{\mdp{B}}$, and 3) 
 \begin{math}
 i\mapsto E^{\mdp{B}}(i-m) + n
 \end{math} if $m < i$ and $E^{\mdp{B}}(i-m)\in \nset{l}$.
 The definitions of $P^{\mdp{A}} + P^{\mdp{B}}$ and $R^{\mdp{A}}+R^{\mdp{B}}$ are similar.
 \end{definition}

  \begin{definition}[identity, swap]\label{def:idSwapRoMDPs}
 Let $m, n$ be natural numbers. The \emph{identity} $\ID_{m}$ on $m$ (over the action set $A$) is given by \(\ID_{m}=(m, m,\allowbreak \emptyset, A, E, !, !)\), where $E(i) = i$ for each $i\in \nset{m}$. The \emph{swap} $\SYM_{m,  n}$ on $m$ and $n$ (\emph{over} A) is given by $\SYM_{m,  n}=(m+n,\, n+m,\,\emptyset, A, E, !, !)$, where (1) $E(i) = i+n$ if $i\in \nset{m}$, and (2) $E(i) = i-n$  if $i\in [m+1, m+n]$.
 \end{definition}

 \subsection{Twists between oMDPs and roMDPs}
  \begin{proposition}\label{prop:twist}
 Let $\nfr{m},\nfl{m},\nfr{n}, \nfl{n}$ be natural numbers, $S$ be the collection of all open MDPs from $(\nfr{m},\nfl{m})$ to $(\nfr{n},\nfl{n})$, and 
 $
 T
 $ be the collection of all roMDPs from 
 $\nfr{m}+\nfl{n}$ to
 $\nfr{n}+\nfl{m}$. Between these two collections, the following two mappings
 $\Phi \colon S\;\rightarrow \; T$ and $\Psi \colon T\;\rightarrow \; S$
 establish a bijective correspondence.
 \begin{equation}\label{eq:twist}
  \begin{aligned}
    \begin{tikzpicture}[
              innode/.style={draw, rectangle, minimum size=0.5cm},
              innodemini/.style={draw, rectangle, minimum size=0.5cm},
              interface/.style={inner sep=0},
              innodepos/.style={draw, circle, minimum size=0.2cm},
              ]
              \node[interface] (rdo1) at (0cm, 0.125cm) {};
              \node[innode, fill=white] (pos1) at (0.8cm, 0cm) {$\mdp{A}$};
              \node[interface] (ldo1) at (0cm, -0.125cm) {};
              \draw[->] (rdo1) to node[above] {\scalebox{0.5}{$\nfr{m}$}} ($(pos1.north west)!0.5!(pos1.west)$);
              \draw[<-] (ldo1) to node[above,yshift=-0.05cm] {\scalebox{0.5}{$\nfl{m}$}} ($(pos1.south west)!0.5!(pos1.west)$);
              \node[interface] (rcdo1) at (1.6cm, 0.125cm) {};
              \node[interface] (lcdo1) at (1.6cm, -0.125cm) {};
              \draw[<-] (rcdo1) to node[above] {\scalebox{0.5}{$\nfr{n}$}} ($(pos1.north east)!0.5!(pos1.east)$);
              \draw[->] (lcdo1) to node[above,yshift=-0.05cm] {\scalebox{0.5}{$\nfl{n}$}} ($(pos1.south east)!0.5!(pos1.east)$);
              \node[interface] (maps1) at (2cm, 0.1cm) {$\overset{\Phi}{\longmapsto}$};
              \fill[lightgray](2.3cm, 0.4cm)--(2.3cm, -0.6cm)--(4.1cm, -0.6cm)--(4.1cm, 0.4cm)--cycle;
              \node[interface] (rdo11) at (2.4cm, 0.125cm) {};
              \node[interface] (ldo11) at (2.4cm, -0.55cm) {};
              \node[interface] (rcdo11) at (4cm, 0.125cm) {};
              \node[interface] (lcdo11) at (4cm, -0.55cm) {};
              \node[innode, fill=white] (pos2) at (3.2cm, 0cm) {$\mdp{A}$};
              \draw[->] (rdo11) to node[above] {\scalebox{0.5}{$\nfr{m}$}} ($(pos2.north west)!0.5!(pos2.west)$);
              \draw[-] (2.95cm, -0.125cm) arc [radius=0.15, start angle = 90, end angle=270];
              \draw[->] ($(pos2.north east)!0.5!(pos2.east)$) to node[above] {\scalebox{0.5}{$\nfr{n}$}} (rcdo11);
              \draw[->] (2.95cm, -0.425cm) to node[at end, above, xshift=-0.2cm] {\scalebox{0.5}{$\nfl{m}$}} (lcdo11);
              \draw[->] (3.45cm, -0.425cm) arc [radius=0.15, start angle = 270, end angle=450];
              \draw[-] (3.45cm, -0.425cm) to node[at end, above, xshift=0.2cm] {\scalebox{0.5}{$\nfl{n}$}} (ldo11);
        \end{tikzpicture},\qquad
    \begin{tikzpicture}[
              innode/.style={draw, rectangle, minimum size=0.5cm},
              innodemini/.style={draw, rectangle, minimum size=0.5cm},
              interface/.style={inner sep=0},
              innodepos/.style={draw, circle, minimum size=0.2cm},
              ]
              \fill[lightgray](-0.1cm, 0.4cm)--(-0.1cm, -0.6cm)--(1.7cm, -0.6cm)--(1.7cm, 0.4cm)--cycle;
              \node[interface] (rdo11) at (0cm, 0.125cm) {};
              \node[interface] (ldo11) at (0cm, -0.55cm) {};
              \node[interface] (rcdo11) at (1.6cm, 0.125cm) {};
              \node[interface] (lcdo11) at (1.6cm, -0.55cm) {};
              \node[innode, fill=white] (pos2) at (0.8cm, 0cm) {$\mdp{A}$};
              \draw[->] (rdo11) to node[above] {\scalebox{0.5}{$\nfr{m}$}} ($(pos2.north west)!0.5!(pos2.west)$);
              \draw[<-] (0.55cm, -0.125cm) arc [radius=0.15, start angle = 90, end angle=270];
              \draw[->] ($(pos2.north east)!0.5!(pos2.east)$) to node[above] {\scalebox{0.5}{$\nfr{n}$}} (rcdo11);
              \draw[-] (0.45cm, -0.425cm) to node[at end, above, xshift=-0.2cm] {\scalebox{0.5}{$\nfl{n}$}} (lcdo11);
              \draw[-] (1.05cm, -0.425cm) arc [radius=0.15, start angle = 270, end angle=450];
              \draw[->] (1.05cm, -0.425cm) to node[at end, above, xshift=0.2cm] {\scalebox{0.5}{$\nfl{m}$}} (ldo11);
              \node[interface] (maps1) at (2cm, 0.1cm) {$\overset{\Psi}{\reflectbox{$\longmapsto$}}$};
              \node[interface] (rdo1) at (2.4cm, 0.125cm) {};
              \node[innode, fill=white] (pos1) at (3.2cm, 0cm) {$\mdp{A}$};
              \node[interface] (ldo1) at (2.4cm, -0.125cm) {};
              \draw[->] (rdo1) to node[above] {\scalebox{0.5}{$\nfr{m}$}} ($(pos1.north west)!0.5!(pos1.west)$);
              \draw[->] (ldo1) to node[above,yshift=-0.05cm] {\scalebox{0.5}{$\nfl{n}$}} ($(pos1.south west)!0.5!(pos1.west)$);
              \node[interface] (rcdo1) at (4cm, 0.125cm) {};
              \node[interface] (lcdo1) at (4cm, -0.125cm) {};
              \draw[<-] (rcdo1) to node[above] {\scalebox{0.5}{$\nfr{n}$}} ($(pos1.north east)!0.5!(pos1.east)$);
              \draw[<-] (lcdo1) to node[above,yshift=-0.05cm] {\scalebox{0.5}{$\nfl{m}$}} ($(pos1.south east)!0.5!(pos1.east)$);
        \end{tikzpicture}
  \end{aligned}
 \end{equation}

 \end{proposition}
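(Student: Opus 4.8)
The plan is to observe that, at the level of the underlying tuple $(Q,A,E,P,R)$, both $\Phi$ and $\Psi$ are the \emph{identity}: the entire content of the statement is that the two arity conventions index one and the same set of entrances and one and the same set of exits. Indeed, by \cref{def:openMDP} an open MDP in $S$ of type $(\nfr{m},\nfl{m})\to(\nfr{n},\nfl{n})$ has entrance set $\nset{\nfr{m}+\nfl{n}}$ and exit set $\nset{\nfr{n}+\nfl{m}}$, whereas by \cref{def:rightwardOpenMDP} a roMDP in $T$ of type $\nfr{m}+\nfl{n}\to\nfr{n}+\nfl{m}$ has left-arity $(\nfr{m}+\nfl{n},0)$ and right-arity $(\nfr{n}+\nfl{m},0)$, hence entrance set $\nset{(\nfr{m}+\nfl{n})+0}=\nset{\nfr{m}+\nfl{n}}$ and exit set $\nset{(\nfr{n}+\nfl{m})+0}=\nset{\nfr{n}+\nfl{m}}$. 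These index sets literally coincide, so the ``bending of wires'' in~\cref{eq:twist} is formally just a relabeling of the arity that leaves all transition data fixed.

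First I would fix the interleaving convention so that it matches~\cref{eq:twist}: in $S$ the entrances $\nset{\nfr{m}+\nfl{n}}$ list the $\nfr{m}$ rightward left-ends first and the $\nfl{n}$ leftward right-ends second, and dually the exits $\nset{\nfr{n}+\nfl{m}}$ list the $\nfr{n}$ rightward right-ends first and the $\nfl{m}$ leftward left-ends second. With this convention I define $\Phi(\mdp{A})\defeq(\nfr{m}+\nfl{n},\,\nfr{n}+\nfl{m},\,Q,A,E,P,R)$, carrying the ordered indexing verbatim onto the purely rightward roMDP, and I define $\Psi$ as the reverse relabeling back to $((\nfr{m},\nfl{m}),(\nfr{n},\nfl{n}))$.

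Next I would check well-definedness, namely that $\Phi(\mdp{A})$ is a genuine roMDP and $\Psi(\mdp{B})$ a genuine open MDP. This is immediate, since every typing constraint of \cref{def:openMDP} is stated purely in terms of the entrance set $\nset{\nfr{m}+\nfl{n}}$ and exit set $\nset{\nfr{n}+\nfl{m}}$: the entry function $E\colon\nset{\nfr{m}+\nfl{n}}\to Q+\nset{\nfr{n}+\nfl{m}}$, the transition function $P\colon Q\times A\times(Q+\nset{\nfr{n}+\nfl{m}})\to\Realp$ together with its subprobability condition, and the reward $R\colon Q\to\Realp$ all have exactly the same type before and after relabeling. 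The ``unique access to each exit'' condition is expressed solely through the exit function $\exits$, which is determined by $E$ and $P$ alone; since these are unchanged, the condition transfers verbatim.

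Finally I would verify mutual inverseness: both composites $\Psi\circ\Phi$ and $\Phi\circ\Psi$ leave $(Q,A,E,P,R)$ untouched and merely apply a relabeling followed by its inverse, so each equals the identity on $S$ (resp.\ $T$), establishing the claimed bijection. The only point demanding care---and the sole ``obstacle,'' which is really just bookkeeping---is that the ordering convention chosen for $\Phi$ be precisely the one inverted by $\Psi$, so that the interleaving of the rightward ends with the bent leftward ends round-trips correctly; once a single convention is fixed consistently in both directions, mutual inverseness is automatic.
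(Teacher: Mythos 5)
Your proposal is correct and matches the paper's argument: the paper simply records this proof as ``Straightforward,'' and what you spell out---that the entrance set $\nset{\nfr{m}+\nfl{n}}$, the exit set $\nset{\nfr{n}+\nfl{m}}$, and hence the types of $E$, $P$, $R$ and the ``unique access to each exit'' condition coincide verbatim under the two arity conventions, so $\Phi$ and $\Psi$ are mutually inverse relabelings fixing $(Q,A,E,P,R)$---is precisely the routine verification the paper leaves implicit. Your attention to fixing one interleaving convention consistently in both directions is the right (and only) point of care.
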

 \begin{proof}
  Straightforward.
 \vspace*{-1.2em} 
 \qed
 \end{proof}
\subsection{Decomposition Equalities for Sum $\oplus$ of $\roMC$}
\label{prop:mc_parallel}
\begin{proposition}
    Let $\mc{C}:m_1\rightarrow n_1$ and $\mc{D}:m_2\rightarrow n_2$ be rightward open MCs. The following equalities of matrices hold:
    \begin{align*}
   &     \bigl[\,\Reacha{i}{j}{\mc{C}\oplus \mc{D}}\,\bigr]_{i\in\nset{m_1+m_2},j\in\nset{n_1+n_2}}  \;=\; 
   \begin{pmatrix}
 \bigl[\,\Reacha{i}{j}{\mc{C}}\,\bigr]_{i\in\nset{m_1},j\in\nset{n_1}}  &  \bigl[\,0\,\bigr]_{i\in\nset{m_1},j\in\nset{n_2}} \\
   \bigl[\,
    0
   \,\bigr]_{i\in\nset{m_2},j\in\nset{n_1}} & \bigl[\,
    \Reacha{i}{j}{\mc{D}} 
   \,\bigr]_{i\in\nset{m_2},j\in\nset{n_2}}\\
\end{pmatrix},
   \\
   &
   \bigl[\,\ETRa{i}{j}{\mc{C}\oplus \mc{D}}\,\bigr]_{i\in\nset{m_1+m_2},j\in\nset{n_1+n_2}} 
    \;=\;
    \begin{pmatrix}
 \bigl[\,\ETRa{i}{j}{\mc{C}}\,\bigr]_{i\in\nset{m_1},j\in\nset{n_1}}  &  \bigl[\,0\,\bigr]_{i\in\nset{m_1},j\in\nset{n_2}} \\
   \bigl[\,
    0
   \,\bigr]_{i\in\nset{m_2},j\in\nset{n_1}} & \bigl[\,
    \ETRa{i}{j}{\mc{D}} 
   \,\bigr]_{i\in\nset{m_2},j\in\nset{n_2}}\\
\end{pmatrix}.
    \end{align*}
\end{proposition}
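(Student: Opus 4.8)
The plan is to reduce everything to the fact that the sum $\oplus$ (\cref{def:sumRightwardOpenMDPs}) forms the \emph{disjoint} union of the two components, so that $\mc{C}$ and $\mc{D}$ never communicate inside $\mc{C}\oplus\mc{D}$. Concretely, I would first record the structural lemma that in $\mc{C}\oplus\mc{D}$ every position of $Q^{\mc{C}}$ has positive transition probability only to positions of $Q^{\mc{C}}$ and to exits in $\nset{n_1}$, and dually every position of $Q^{\mc{D}}$ transitions only into $Q^{\mc{D}}$ and exits in $[n_1+1, n_1+n_2]$. This is immediate from the definitions of $P^{\mc{C}}+P^{\mc{D}}$ and $E^{\mc{C}}+E^{\mc{D}}$, including their index-shift conventions (the entrance shift $i\mapsto i-m_1$ and the exit shift $j\mapsto j+n_1$ for the $\mc{D}$-part).

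Next, I would use this lemma to analyze path sets (\cref{def:path}). A path in $\mc{C}\oplus\mc{D}$ from an entrance $i\in\nset{m_1}$ is forced to remain within $Q^{\mc{C}}$ and to terminate at an exit in $\nset{n_1}$; symmetrically for entrances in $[m_1+1,m_1+m_2]$. Hence the path set $\Pathstra{\mc{C}\oplus\mc{D}}{}{i}{j}$ is empty whenever $i$ and $j$ lie in different components. Since both $\Reacha{i}{j}{\mc{C}\oplus\mc{D}}$ and $\ETRa{i}{j}{\mc{C}\oplus\mc{D}}$ are defined as sums over this path set (\cref{def:openMDPReachProbExpCumRew}), and a sum over the empty set is $0$, the off-diagonal blocks of both matrices vanish, yielding the zero blocks in the statement.

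Finally, for the two diagonal blocks I would exhibit the evident bijection between $\Pathstra{\mc{C}\oplus\mc{D}}{}{i}{j}$, for $i\in\nset{m_1}$ and $j\in\nset{n_1}$, and $\Pathstra{\mc{C}}{}{i}{j}$, induced by the inclusion $Q^{\mc{C}}\hookrightarrow Q^{\mc{C}}+Q^{\mc{D}}$ together with the exit relabelling. Because $P^{\mc{C}}+P^{\mc{D}}$ and $R^{\mc{C}}+R^{\mc{D}}$ restrict on the $\mc{C}$-part to $P^{\mc{C}}$ and $R^{\mc{C}}$, this bijection preserves both the path probability $\ProPathc{\mc{C}\oplus\mc{D}}(\mypath{i}{j})$, a product of transition probabilities, and the path reward $\PReward{\mc{C}\oplus\mc{D}}(\mypath{i}{j})$, a sum of position rewards (\cref{def:probRewPath}). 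Summing these equalities over all paths gives $\Reacha{i}{j}{\mc{C}\oplus\mc{D}}=\Reacha{i}{j}{\mc{C}}$ and $\ETRa{i}{j}{\mc{C}\oplus\mc{D}}=\ETRa{i}{j}{\mc{C}}$, and the $\mc{D}$-block is symmetric; assembling the four blocks produces the two claimed matrix identities. The only delicate point is the consistent tracking of the index shifts between the two components, but this is routine bookkeeping rather than a genuine obstacle, since the absence of cross-component transitions collapses the whole argument to unwinding the definitions.
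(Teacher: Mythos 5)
Your proof is correct. The paper actually states this proposition without any proof, treating it as immediate from the definitions, and your argument---no cross-component transitions in $\mc{C}\oplus\mc{D}$, hence empty path sets for the off-diagonal blocks, plus a probability- and reward-preserving bijection of path sets for the diagonal blocks---is exactly the routine unwinding of the definitions (including the $i\mapsto i-m_1$, $j\mapsto j+n_1$ index shifts) that the paper implicitly relies on; it is also consistent with the path-decomposition style the paper uses when proving the analogous equalities for $\seqcomp$.
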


\subsection{Complete Definitions of $\semCatr$}
\label{apnd:completeDefOfSemCatr}

\begin{definition}[sum $\oplus$ in $\semCatr$]
\label{def:sum_in_sr}
Let $F:m\rightarrow n$, $G:k\rightarrow l$ be arrows in $\semCatr$.
The \emph{sum} $F\oplus G$ of $F,G$ is given by $F\oplus G\defeq \{ f\oplus g\mid f\in F,\  g\in G\}$ where $f\oplus g$ is the sum of $f, g$ in $\semCatrmc$.
    
\end{definition}

\begin{definition}[identity and swap]
\label{def:identity_and_swap} 
Let $m$ be a natural number. The \emph{identity} on $m$ in $\semCatr$ is given by the singleton set $\{ \id_m \}$ where $\id_m$ is the identity on $m$ in $\semCatrmc$. Let $m, n$ be natural numbers. The \emph{swap} on $m, n$ in $\semCatr$ is given by $\{\swap{m}{n}\}$ where $\swap{m}{n}$ is the swap on $m$ and $n$ in $\semCatrmc$.
\end{definition}

\section{Proof of~\cref{thm:roMDPsTSMCEqAx}}
\label{sec:proof_roMDPsTSMCEqAx}

We shall first prove the well-definedness of~\cref{def:trace_tmdp}, i.e., that the trace $\trace{l}{m}{n}{}(\mdp{A})$ is indeed an roMDP. The following  lemma is used in the proof.

\begin{lemma}
\label{lem:unique_precedent}
In the setting of \cref{def:trace_tmdp},
let $\mdp{A}:l+m\rightarrow l+n$, $i\in \nset{l}$, and $s, s'\in [l+1, l+n] + Q^{\mdp{A}}$. If $i\in \precedent(s)$ and $i\in \precedent(s')$, then $s = s'$.
\end{lemma}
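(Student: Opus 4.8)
The plan is to exploit the fact that the entry function $E^{\mdp{A}}$ is a genuine single-valued function, so that the sequence generated in the definition of $\precedent$ is completely determined by its starting point. Concretely, for any $i\in\nset{l}$ there is at most one sequence $i=i_0,i_1,i_2,\dotsc$ satisfying the recurrence $i_{j+1}=E^{\mdp{A}}(i_j)$, and this recurrence remains applicable precisely as long as $i_j\in[1,l]$ (the looped open ends, which are the only ones fed back into $\mdp{A}$). Thus the ``first exit from the loop'' is a deterministic event, and the lemma is essentially the assertion that the map sending each looped entrance to its unique exit destination is well-defined.

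Making this precise, suppose $i\in\precedent(s)$ and $i\in\precedent(s')$, witnessed by sequences $i_0=i,\dotsc,i_k=s$ and $i'_0=i,\dotsc,i'_{k'}=s'$ as in the definition of $\precedent$; without loss of generality assume $k\le k'$. First I would show by induction on $j$ that $i_j=i'_j$ for all $0\le j\le k$: the base case is $i_0=i=i'_0$, and the inductive step follows from $i_{j+1}=E^{\mdp{A}}(i_j)=E^{\mdp{A}}(i'_j)=i'_{j+1}$, using the induction hypothesis together with the determinism of $E^{\mdp{A}}$.

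It then remains to argue $k=k'$, which immediately yields $s=i_k=i'_{k'}=s'$. Here I would proceed by contradiction: if $k<k'$, then $k\in\{0,\dotsc,k'-1\}$, so the defining condition of $\precedent(s')$ forces $i'_k\in[1,l]$. But by the induction above $i'_k=i_k=s\in[l+1,l+n]+Q^{\mdp{A}}$, a set disjoint from $[1,l]$, a contradiction. Hence $k=k'$ and therefore $s=s'$.

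I do not expect any serious obstacle: the argument is just determinism of $E^{\mdp{A}}$ plus the uniqueness of the first-exit time, and the only point requiring care is the bookkeeping of when the recurrence is defined and which index set each $i_j$ inhabits (looped entrance, real exit, or position). This lemma is exactly what guarantees that the two clauses defining the transition probabilities $P$ in \cref{def:trace_tmdp} never double-count a single looped transition across distinct destinations, and hence underpins the well-definedness of $\trmdp{l}{m}{n}(\mdp{A})$ as an roMDP.
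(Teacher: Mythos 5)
Your proof is correct and is exactly the intended argument: the paper disposes of this lemma with the single line ``By definition,'' and your writeup---determinism of $E^{\mdp{A}}$ forces the two witnessing sequences to coincide by induction, and the first-exit condition then forces $k=k'$ since $[l+1,l+n]+Q^{\mdp{A}}$ is disjoint from $[1,l]$---is precisely the routine unfolding of the definition of $\precedent$ that the paper leaves implicit. The one bookkeeping point you rely on, disjointness of $Q^{\mdp{A}}$ from $\nset{l}$, is covered by the paper's standing assumption in \cref{def:trace_tmdp}, so there is no gap.
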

\begin{proof}
    By definition.
\end{proof}

\begin{proof}[of well-definedness of~\cref{def:trace_tmdp}]
Let $\mdp{A}:l+m\rightarrow l+n$.
For $s\in Q^{\mdp{A}}$ and $a\in A$, the following equalities hold by~\cref{lem:unique_precedent}:
\begin{align*}
    \sum_{s'\in Q^{\mdp{A}}+\nset{n}}P^{\mdp{A}}(s, a, s') &= \big(\sum_{s'\in Q^{\mdp{A}}}P^{\mdp{A}}(s, a, s') + \sum_{i\in \precedent(s')}P^{\mdp{A}}(s, a, i)\big)\\
    &\qquad+\big(\sum_{i\in\nset{n}}P^{\mdp{A}}(s, a, i+l) + \sum_{j\in \precedent(i+l)}P^{\mdp{A}}(s, a, j)\big)\\
    &= \sum_{s'\in Q^{\mdp{A}}}P^{\mdp{A}}(s, a, s')+\sum_{i\in\nset{n}}P^{\mdp{A}}(s, a, i+l) + \sum_{i\in\nset{l}}P^{\mdp{A}}(s, a, i).
\end{align*}
The last value is clearly either $0$ or $1$. 

The ``unique access to each exit'' condition in \cref{def:openMDP} can also be easily proved. \qed
\end{proof}

Towards the proof of \cref{thm:roMDPsTSMCEqAx}, 
 we prove (Naturality1) in the rest of the section.
The proofs of the other equational axioms of the trace operator are either
simpler or similar.

\begin{proof}[of (Naturality1) from \cref{fig:TSMCAxioms}]
  Assume $\mdp{A} \colon l + m \rightarrow  l + n$ and $\mdp{B}
  \colon m' \to m$.
  Firstly, we define a partial function $\Estar{l}{\mdp{A}} \colon \nset{l + m}
  \rightharpoonup Q^{\mdp{A}} + \nset{n}$ recursively by
  \[
    \Estar{l}{\mdp{A}}(i) =
    \begin{cases}
      E^{\mdp{A}}(i) & \text{if $E^{\mdp{A}}(i) \in Q^{\mdp{A}}$} \\
      E^{\mdp{A}}(i) - l & \text{if $E^{\mdp{A}}(i) \in \nset{l
        + n} \setminus \nset{l}$} \\
      \Estar{l}{\mdp{A}}(E^{\mdp{A}}(i)) & \text{if $E^{\mdp{A}}(i)
        \in \nset{l}$,}
    \end{cases}
  \]
  where $\Estar{l}{\mdp{A}}(i)$ is undefined if the recursion never
  reaches the base case.

  By 
the ``unique access to each exit'' condition in \cref{def:openMDP},
 $\Estar{l}{\mdp{A}}(i)$ is defined for all $i \in
  \nset{l + m} \setminus \nset{l}$.

  We note the following facts about $\Estar{l}{\mdp{A}}$.
  \begin{itemize}
    \item 
      For all $i \in \nset{m}$,
      \begin{equation}
        E^{\trace{l}{m}{n}{}(\mdp{A})}(i) = \Estar{l}{\mdp{A}}(i+l)
        \rlap{.}
        \label{eq:proof_trace_nat_X:star_vs_E}
      \end{equation}
    \item 
      For all $i \in \nset{l + m}$ and $q \in Q^{\mdp{A}} +
      \nset{n}$,
      \begin{equation}
          \Estar{l}{\mdp{A}}(i) = q \quad\Longleftrightarrow\quad
          E^{\mdp{A}}(i) = q \lor \exists j \in \nset{l}.
          (E^{\mdp{A}}(i) = j \land \Estar{l}{\mdp{A}}(j) = q)
          \rlap{.}
        \label{eq:proof_trace_nat_X:star_vs_sum}
      \end{equation}
  \end{itemize}

 Towards the proof of (Naturality1), we now establish a connection between $\Estar{l}{(\ID_l\oplus 
  \mdp{B}); \mdp{A}}$ and $\Estar{l}{\mdp{A}}$.

  Firstly, when we take $i \in \nset{l}$, we have
  \begin{align*}
    E^{(\ID_l \oplus \mdp{B}); \mdp{A}}(i)
    &=
    \begin{cases}
      E^{\ID_l \oplus \mdp{B}}(i) & \text{if $E^{\ID_l \oplus \mdp{B}}(i) \in Q^{\ID_l \oplus \mdp{B}}$} \\
      E^{\mdp{A}}(E^{\ID_l \oplus \mdp{B}}(i)) & \text{if $E^{\ID_l \oplus \mdp{B}}(i) \in \nset{l + m}$}
    \end{cases} \\
    &=
    \begin{cases}
      E^{\ID_l}(i) & \text{if $E^{\ID_l}(i) \in Q^{\ID_l}$} \\
      E^{\mdp{A}}(E^{\ID_l}(i)) & \text{if $E^{\ID_l}(i) \in \nset{l}$}
    \end{cases} \\
    &= E^{\mdp{A}}(i) \rlap{,}
  \end{align*}
  where $E^{\ID_l \oplus \mdp{B}}(i) = E^{\ID_l}(i)$ because $i \in
  \nset{l}$.

  Similarly, when we take $i \in \nset{l + m'} \setminus \nset{l}$
  such that $E^{\mdp{B}}(i-l) \in \nset{m}$, we have
  \begin{align*}
    E^{(\ID_l \oplus \mdp{B}); \mdp{A}}(i)
    &=
    \begin{cases}
      E^{\ID_l \oplus\mdp{B}}(i) & \text{if $E^{\ID_l \oplus \mdp{B}}(i) \in Q^{\ID_l \oplus \mdp{B}}$} \\
      E^{\mdp{A}}(E^{\ID_l \oplus \mdp{B}}(i)) & \text{if $E^{\ID_l \oplus \mdp{B}}(i) \in \nset{l + m}$}
    \end{cases} \\
    &=
    \begin{cases}
      E^{\mdp{B}}(i-l)+l & \text{if $E^{\mdp{B}}(i-l)+l \in Q^{\mdp{B}}$} \\
      E^{\mdp{A}}(E^{\mdp{B}}(i-l)+l) & \text{if $E^{\ID_l}(i-l)+l \in \nset{l + m}$}
    \end{cases} \\
    &= E^{\mdp{A}}(E^{\mdp{B}}(i-l)+l) \rlap{,}
  \end{align*}
  since the first case contradicts $E^{\mdp{B}}(i-l) \in \nset{m}$.

  Combining the last two facts (and a simple computation for the last
  one), we obtain the following.
  \begin{itemize}
    \item For all $i \in \nset{l}$, $\Estar{l}{(\ID_l \oplus \mdp{B});
      \mdp{A}}(i) = \Estar{l}{\mdp{A}}(i)$ (by which we mean that, if
      one side is defined, then so is the other, and they are equal).
    \item For all $i \in \nset{l + m'} \setminus \nset{l}$ such
      that $E^{\mdp{B}}(i-l) \in \nset{m}$, $\Estar{l}{(\ID_l \oplus
      \mdp{B}); \mdp{A}}(i) = \Estar{l}{\mdp{A}}(E^{\mdp{B}}(i-l)+l)$.
    \item For all $i \in \nset{l + m'} \setminus \nset{l}$ such
    that $E^{\mdp{B}}(i-l) \in Q^{\mdp{B}}$, $\Estar{l}{(\ID_l \oplus
    \mdp{B}); \mdp{A}}(i) = E^{\mdp{B}}(i-l)$.
  \end{itemize}

  By~\eqref{eq:proof_trace_nat_X:star_vs_E} we have, for all $i \in
  \nset{m'}$,
  \begin{align*}
    E^{\trace{l}{m'}{n}{}((\ID_l \oplus \mdp{B}) ; \mdp{A})}(i)
    &= \Estar{l}{(\ID_l \oplus \mdp{B}) ; \mdp{A}}(i+l) \\
    &=
    \begin{cases}
      \Estar{l}{\mdp{A}}(E^{\mdp{B}}(i)+l)
      & \text{if $E^{\mdp{B}}(i) \in \nset{m}$} \\
      E^{\mdp{B}}(i)
      & \text{if $E^{\mdp{B}}(i) \in Q^{\mdp{B}}$}
    \end{cases} \\
    &=
    \begin{cases}
      E^{\trace{l}{m}{n}{}(\mdp{A})}(E^{\mdp{B}}(i))
      & \text{if $E^{\mdp{B}}(i) \in \nset{m}$} \\
      E^{\mdp{B}}(i)
      & \text{if $E^{\mdp{B}}(i) \in Q^{\mdp{B}}$}
    \end{cases} \\
    &= E^{\mdp{B} ; \trace{l}{m'}{n}{}(\mdp{A})}(i)
  \end{align*}
  as desired.

  For $P^{\trace{l}{m'}{n}{}((\ID_l \oplus \mdp{B}); \mdp{A})}$, we use
  the following equivalent definition of
  $P^{\trace{l}{m}{n}{}(\mdp{A})}$:
  \[
    P^{\trace{l}{m}{n}{}(\mdp{A})}(s,a,s') =
      P^{\mdp{A}}(s,a,s'_{l}) + \sum_{i \in \nset{l}} P^{\mdp{A}}(s,a,i)
      \cdot \kron{\Estar{l}{\mdp{A}}(i)}{s'_l}
  \]
  where $s'_l$ is $s'$ if $s' \in Q^{\mdp{A}}$ and $s'+l$ otherwise
  (i.e.\ if $s' \in \nset{n}$).
  In particular, we have
  that
  \begin{align*}
    P^{\trace{l}{m'}{n}{}((\ID_l \oplus \mdp{B}); \mdp{A})}(s,a,s')
    &= P^{(\ID_l \oplus \mdp{B}); \mdp{A}}(s,a,s'_l) \\
    &\phantom{={}} + \sum_{i \in \nset{l}} P^{(\ID_l \oplus \mdp{B});
      \mdp{A}}(s,a,i) \cdot \kron{\Estar{l}{(\ID_l \oplus \mdp{B});
      \mdp{A}}(i)}{s'_l} \\
    &= P^{(\ID_l \oplus \mdp{B}); \mdp{A}}(s,a,s'_l) \\
    &\phantom{={}} + \sum_{i \in \nset{l}} P^{(\ID_l \oplus \mdp{B});
      \mdp{A}}(s,a,i) \cdot \kron{\Estar{l}{\mdp{A}}(i)}{s'_l}.
  \end{align*}

  Moreover, we can compute:
  \begin{align*}
    P^{\mdp{B}; \trace{l}{m}{n}{}(\mdp{A})}(s,a,s')
    &=
    \begin{cases}
      P^{\mdp{B}}(s,a,s') & \text{if $s,s' \in Q^{\mdp{B}}$} \\
      \sum_{i \in \nset{m}} P^{\mdp{B}}(s,a,i) \cdot
        \kron{E^{\trace{l}{m}{n}{}(\mdp{A})}(i)}{s'_l} & \text{if $s
        \in Q^{\mdp{B}}$ and $s' \in Q^{\mdp{A}} + \nset{n}$} \\
      0 & \text{if $s \in Q^{\mdp{A}}$ and $s' \in Q^{\mdp{B}}$} \\
      P^{\trace{l}{m}{n}{}(\mdp{A})}(s,a,s') & \text{if $s \in
        Q^{\mdp{A}}$ and $s' \in Q^{\mdp{A}} + \nset{n}$}
    \end{cases} \\
    &=
    \begin{cases}
      P^{\mdp{B}}(s,a,s') & \text{if $s,s' \in Q^{\mdp{B}}$} \\
      \sum_{i \in \nset{m}} P^{\mdp{B}}(s,a,i) \cdot
        \kron{\Estar{l}{\mdp{A}}(i+l)}{s'_l} & \text{if $s
        \in Q^{\mdp{B}}$ and $s' \in Q^{\mdp{A}} + \nset{n}$} \\
      0 & \text{if $s \in Q^{\mdp{A}}$ and $s' \in Q^{\mdp{B}}$} \\
      P^{\mdp{A}}(s,a,s'_l) + \sum_{i \in \nset{l}} P^{\mdp{A}}(s,a,i)
        \cdot \kron{\Estar{l}{\mdp{A}}(i)}{s'_l} & \text{if $s \in
        Q^{\mdp{A}}$ and $s' \in Q^{\mdp{A}} + \nset{n}$}
    \end{cases}
  \end{align*}

 We distinguish all possible cases.
  \begin{itemize}
    \item If $s,s' \in Q^{\mdp{B}}$, then in particular $s'_l = s'$,
      and
      \begin{align*}
        P^{\trace{l}{m'}{n}{}((\ID_l \oplus \mdp{B}); \mdp{A})}(s,a,s')
        &= P^{\mdp{B}}(s,a,s') + \sum_{i \in \nset{l}}
          P^{(\ID_l \oplus \mdp{B}); \mdp{A}}(s,a,i) \cdot
          \kron{\Estar{l}{\mdp{A}}(i)}{s'} \\
        &= P^{\mdp{B}}(s,a,s') \rlap{,}
      \end{align*}
      since $s' \notin Q^{\mdp{A}} + \nset{n}$, which is the codomain
      of $\Estar{l}{\mdp{A}}$.
    \item If $s \in Q^{\mdp{B}}$ and $s' \in Q^{\mdp{A}} + \nset{n}$,
      then
      \begin{align*}
        P^{\trace{l}{m'}{n}{}((\ID_l \otimes \mdp{B}); \mdp{A})}(s,a,s')
        &= \left( \sum_{j \in \nset{l + m}} P^{(\ID_l \oplus
          \mdp{B})}(s,a,j) \cdot \kron{E^{\mdp{A}}(j)}{s'_l} \right)
          \\
        &\phantom{={}} + \sum_{i \in \nset{l}} \left( \sum_{j \in
          \nset{l + m}} P^{(\ID_l \oplus \mdp{B})}(s,a,j) \cdot
          \kron{E^{\mdp{A}}(j)}{i} \right) \cdot
          \kron{\Estar{l}{\mdp{A}}(i)}{s'_l} \\
        &= \sum_{j \in \nset{m}} P^{\mdp{B}}(s,a,j) \left(
          \kron{E^{\mdp{A}}(j+l)}{s'_l} + \sum_{i \in \nset{l}}
          \kron{E^{\mdp{A}}(j+l)}{i} \cdot
          \kron{\Estar{l}{\mdp{A}}(i)}{s'_l} \right) \\
        &= \sum_{j \in \nset{m}} P^{\mdp{B}}(s,a,j) \cdot
          \kron{\Estar{l}{\mdp{A}}(j+l)}{s'_l} \rlap{,}
      \end{align*}
      where indices $j \in \nset{l}$ are discarded in the second step
      since $s \in Q^{\mdp{B}}$. The last equality follows from~\eqref{eq:proof_trace_nat_X:star_vs_sum}.
    \item If $s \in Q^{\mdp{A}}$ and $s' \in Q^{\mdp{B}}$, then
      \[
        P^{\trace{l}{m'}{n}{}((\ID_l\oplus \mdp{B}); \mdp{A})}(s,a,s')
        = 0 + \sum_{i \in \nset{l}} P^{(\ID_l \oplus \mdp{B});
        \mdp{A}}(s,a,i) \cdot \kron{\Estar{l}{\mdp{A}}(i)}{s'_l} = 0
        \rlap{,}
      \]
      since $s'_l = s' \notin Q^{\mdp{A}} + \nset{n}$, which is the
      codomain of $\Estar{l}{\mdp{A}}$.
    \item If $s \in Q^{\mdp{A}}$ and $s' \in Q^{\mdp{A}} + \nset{n}$,
      then
      \begin{align*}
        P^{\trace{l}{m'}{n}{}((\ID_l \oplus \mdp{B}); \mdp{A})}(s,a,s')
        &= P^{\mdp{A}}(s,a,s'_l) + \sum_{i \in \nset{l}} P^{\mdp{A}}(s,a,i)
          \cdot \kron{\Estar{l}{\mdp{A}}(i)}{s'_l}
      \end{align*}
      as desired.
  \end{itemize}
  Therefore $P^{\trace{l}{m'}{n}{}((\ID_l \oplus \mdp{B}); \mdp{A})} =
  P^{\mdp{B}; \trace{l}{m}{n}{}(\mdp{A})}$, which concludes the proof.
\end{proof}

\section{Proof for~\cref{prop:MCERweq} }
\label{sec:proof_MCERweq}
\begin{proof}
We prove the first decomposition equality for  sequential composition. For each entrance $i\in \nset{m}$ and exit $j\in \nset{n}$, the expected reward $\ETRa{i}{j}{\mc{C} \seqcomp \mc{D}}$ is given by 
\begin{equation}
\label{eq:seq_etrmc}
    \ETRa{i}{j}{\mc{C} \seqcomp \mc{D}}\defeq \sum_{\mypath{i}{j}\in  \Pathstra{\mc{C} \seqcomp \mc{D}}{\tau^{\mc{C}\seqcomp \mc{D}}}{i}{j}}\ProPathc{\mc{C}\seqcomp\mc{D}}(\mypath{i}{j})\cdot \PReward{\mc{C} \seqcomp \mc{D}}(\mypath{i}{j}).
\end{equation}
By the definition of the sequential composition $\mc{C} \seqcomp \mc{D}$, the right hand side of~\cref{eq:seq_etrmc} is equal to the following:
\begin{equation}
\label{eq:expand_seq_etrmc}
      \sum_{k\in \nset{l}}\sum_{\mypath{i}{k}\in  \Pathstra{\mc{C}}{\tau^{\mc{C}\seqcomp \mc{D}}}{i}{k}}\sum_{\mypath{k}{j}\in  \Pathstra{\mc{D}}{\tau^{\mc{C}\seqcomp \mc{D}}}{k}{j}} \ProPathc{\mc{C}}(\mypath{i}{k})\cdot \ProPathc{\mc{D}}(\mypath{k}{j})\cdot\big(\PReward{\mc{C}}(\mypath{i}{k}) + \PReward{\mc{D}}(\mypath{k}{j})  \big).
\end{equation}
By changing the order of the summation, we can see that~\cref{eq:expand_seq_etrmc} is equal to 
\begin{equation*}
\sum_{k\in \nset{l}} \Reacha{i}{k}{\mc{C}}\cdot \ETRa{k}{j}{\mc{D}} +  \ETRa{i}{k}{\mc{C}}\cdot \Reacha{k}{j}{\mc{D}}.
\end{equation*}

The latter decomposition equality for the trace operator can be easily obtained by proving that it is equivalent to the linear equations in~\cref{prop:linear_equations}. This is straightforward. 
\qed
\end{proof}

\section{Monads, Girard's Execution Formula, and Change of Base}
\label{sec:monads_execution_formula_change_of_base}
We 
introduce some basics of 
 \emph{monads} (a categorical notion~\cite{MacLane2} whose use in computer science is advocated e.g.\ in~\cite{Moggi91}), Girard's \emph{execution formula}~\cite{girard1989geometry,haghverdiS06,hoshino12}, and the \emph{change of base} construction~\cite{Eilenberg65,cruttwell2008normed}. We also discuss the relationship between these abstract machineries and our current compositional model checking framework.

In this section, for a category $\mathbb{C}$ and its objects $X, Y$, we write $\mathbb{C}(X, Y)$ for the set of arrows from $X$ to $Y$. 
\subsection{Monads}
\label{sec:monad}
The notion of monad is widely used not only in semantical studies but also in real-world programming languages such as Haskell. In the following definition, we use its equivalent presentation as a \emph{Kleisli triple}~\cite{Moggi91}. 

Note that, in the following, we let $\seqcomp$ denote composition of arrows in a category: given arrows $f\colon X\to Y$ and $g\colon Y\to Z$, the composite arrow of \emph{$f$ and then $g$} is denoted by $f\seqcomp g\colon X\to Z$. In the categorical literature, such composition is often denoted by $g\circ f$ (\emph{$g$ after $f$}; note the order). 

\begin{definition}[Kleisli triple]
A \emph{Kleisli triple} on the category $\mathbb{C}$ is a triple $(T, \eta,(\place)^{\star})$ consists of the following data: 
\begin{itemize}
    \item an assignment $T$, to each object $X$, of an object $T(X)$,
    \item a \emph{unit} $\eta_X: X\rightarrow T(X)$ for each object $X$,
    \item a \emph{Kleisli extension operator} $(\place)^{\star}$ that, given an arrow $f:X\rightarrow T(Y)$, returns an arrow  $f^{\star}:T(X)\rightarrow T(Y)$.
\end{itemize}
 The triple $(T, \eta,(\place)^{\star})$ is further required to satisfy the following conditions: for any objects $X, Y, Z$ and arrows $f: X\rightarrow T(Y), g:Y\rightarrow T(Z)$, 
\begin{itemize}
    \item $(\eta_X)^{\star} = \id_{T(X)}$,
    \item $\eta_X \seqcomp f^{\star}= f$,
    \item $f^{\star}\seqcomp g^{\star}  = (f\seqcomp g^{\star})^{\star}$.
\end{itemize}
    
\end{definition}

The \emph{Kleisli category} $\kleisli{\Sets}{T}$ for a monad $T$ accommodates \emph{$T$-effectful computation}~\cite{Moggi91}: an arrow $X\to Y$ in $\kleisli{\Sets}{T}$  is an  $X\to TY$ in $\mathbb{C}$; in common examples where $\mathbb{C}=\Sets$ (the category of sets and functions), such an arrow is  thought of as a function from $X$ to $Y$ \emph{with $T$-effect}.

\begin{definition}[Kleisli category]
Let $(T, \eta, (\place)^{\star})$ be a Kleisli triple on a category $\mathbb{C}$. The \emph{Kleisli category} is the category $\kleisli{\mathbb{C}}{T}$ whose objects are that of $\mathbb{C}$, and whose set $\kleisli{\mathbb{C}}{T}(X, Y)$  of arrows for arbitrary objects $X, Y$ is given by $\kleisli{\mathbb{C}}{T}(X, Y)\defeq \mathbb{C}(X, T(Y))$.  For each object $X$, the \emph{identity} $\id_X$ in  $\kleisli{\mathbb{C}}{T}$ is given by $\id_X \defeq \eta_X$, and for any arrows $f:X\rightarrow T(Y)$ and $g:Y\rightarrow T(Z)$ in $\mathbb{C}$ (identified with arrows $f\colon X\rightarrow Y$ and $g\colon Y\rightarrow Z$ in $\kleisli{\mathbb{C}}{T}$), their \emph{sequential composition} $f\seqcomp g$ in  $\kleisli{\mathbb{C}}{T}$ is given by $f\seqcomp g \defeq f\seqcomp g^{\star}$, where the latter $\seqcomp$ is in $\mathbb{C}$.
\end{definition}

We introduce 
the \emph{probabilistic reward monad}. It encapsulates the notion of effect given by  \emph{reachability probabilities} and \emph{expected rewards}.

\begin{definition}[the probabilistic reward monad]\label{def:prReMonad}
The \emph{probabilistic reward monad (PR monad)}  $(\mcMnd, \eta,  (\place)^{\star})$ on $\Sets$ consists of the following data:
 \begin{itemize}
 \item an assignment $\mcMnd$, to each set $X$, of the set $\mcMnd X \defeq \{(p_x, r_x)_{x\in X}\mid p_x,r_x \in\Real_{\ge 0}, \sum_{x}p_{x}\le 1, p_x=0\Rightarrow r_x=0\}$;
 \item the \emph{unit} function $\eta_{X}\colon X\to \mcMnd X$ given by $\eta_{X}(x) \defeq  (p_{x'},r_{x'})_{x'\in X}$, $p_{x'}=1$ if $x'=x$ and $p_{x'}=0$ otherwise, and $r_{x'}=0$ for each $x'$; and
 \item the \emph{Kleisli extension} operator $(\place)^{\star}$ that, given a function $f\colon X\to \mcMnd Y$, returns the function $f^{\star}\colon \mcMnd X\to \mcMnd Y$ defined as follows. Let $f(x)=(p^{f(x)}_{y},r^{f(x)}_{y})_{y\in Y}\in \mcMnd Y$ for each $x\in X$, and $t=(p^{t}_{x}, r^{t}_{x})_{x\in X}\in \mcMnd X$. Then $f^{\star}(t) \defeq (p^{f^{\star}(t)}_{y},r^{f^{\star}(t)}_{y})_{y\in Y}$, where 
\begin{equation}\label{eq:defprReMonad}
 \begin{array}{l}
  p^{f^{\star}(t)}_{y}\;\defeq \;  \sum_{x\in X} p^{t}_{x} \cdot
 p^{f(x)}_{y} 
,
\quad
  r^{f^{\star}(t)}_{y}\defeq\sum_{x\in X} 
  \bigl(\,
  p^{t}_{x}\cdot  r^{f(x)}_{y}
    + r^{t}_{x}\cdot p^{f(x)}_{y}\,\bigr).
 \end{array}
\end{equation}
\end{itemize}
\begin{auxproof}
 The definition  is in fact a  combination of the subdistribution monad (see e.g.~\cite{hasuo2015generic}) and the monoid $(\mathbb{R}_{\ge 0}, +, 0)$. The definitions in \cref{eq:defprReMonad} can be seen as an analogue of the first equalities of \cref{prop:MCeq,prop:MCERweq}, too.
\end{auxproof}
\begin{auxproof}
 Abstract construction of the monad?
\end{auxproof}
\end{definition}

\begin{proposition}
\begin{enumerate}
 \item 
  For the PR monad $\mcMnd$ in \cref{def:prReMonad}, consider the restriction $\rest{\kleisli{\Sets}{\mcMnd }}{\Nat}$ of the Kleisli category $\kleisli{\Sets}{\mcMnd }$ to  natural numbers as objects (i.e.\ the full subcategory for those objects). Then it is a TSMC, with respect to sum $\oplus$ and a naturally defined trace operator (using e.g.~\cite{hoshino12}, see~\cref{subsec:execution_formula}). 
 \item \label{item:propsemCatrmcAsKleisli}
  The restriction $\rest{\kleisli{\Sets}{\mcMnd }}{\Nat}$ has the same objects  as $\semCatrmc$ (\cref{def:semCatrmc}); their arrows are in a canonical bijective correspondence, too.
\qed
\end{enumerate}
\end{proposition}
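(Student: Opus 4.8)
The plan is to treat the two claims separately, disposing of the bijection of hom-sets by unfolding definitions and reserving the genuine work for the traced monoidal structure, which I would obtain from the theory of \emph{strong unique decomposition categories} of \cite{hoshino12}.

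For the second claim I would simply note that both categories have the natural numbers as objects, and that an arrow $m\to n$ in $\rest{\kleisli{\Sets}{\mcMnd}}{\Nat}$ is by definition a function $f\colon\nset{m}\to\mcMnd(\nset{n})$. Writing $f(i)=(p_{i,j},r_{i,j})_{j\in\nset{n}}$, the requirement $f(i)\in\mcMnd(\nset{n})$ unpacks, via \cref{def:prReMonad}, to exactly the subnormality condition $\sum_{j}p_{i,j}\le 1$ and the realizability condition $p_{i,j}=0\Rightarrow r_{i,j}=0$ of \cref{def:semCatrmc}. Hence ``currying'' $f\leftrightarrow(p_{i,j},r_{i,j})_{i,j}$ is the desired bijection, and a direct comparison of \cref{eq:defprReMonad} with \cref{def:seqCompSemCatrmc} shows that Kleisli composition coincides with the sequential composition of $\semCatrmc$, so the correspondence transports structure; this part is immediate.

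For the first claim I would first record the symmetric monoidal structure: finite coproducts lift from $\Sets$ to any Kleisli category, and since $\Nat$ is closed under $+$ (the coproduct of finite sets), the restriction is symmetric monoidal with $\oplus=+$ and unit the zero object $0=\emptyset$ (which is both initial and terminal in $\kleisli{\Sets}{\mcMnd}$, since $\mcMnd(\emptyset)$ is a singleton). The substance is the trace. I would exhibit $\rest{\kleisli{\Sets}{\mcMnd}}{\Nat}$ as a strong unique decomposition category: the coproduct injections, together with the matching partial projections definable from the zero morphisms and partial sums, furnish quasi-injections and quasi-projections with $\iota_k\seqcomp\pi_k=\id$ and $\iota_k\seqcomp\pi_{k'}=0$ for $k\neq k'$; and each hom-set carries a $\Sigma$-monoid structure in which a countable family of parallel arrows is declared summable exactly when its pointwise probability sums stay $\le 1$ and its pointwise reward sums remain finite, the sum then being taken coordinatewise. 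The $\Sigma$-monoid axioms, the bilinearity of partial sums with respect to composition, and the matrix-decomposition property (every arrow out of a coproduct is the sum of its components) all follow routinely from \cref{eq:defprReMonad}. Hoshino's theorem then supplies a uniform trace given by the execution formula $\tr(f)=f_{0}+\sum_{d}A\cdot B^{d}\cdot C$ satisfying every TSMC axiom of \cref{fig:TSMCAxioms}; by construction this trace reproduces \cref{def:semCatrmcAsTSMC} and mirrors the decomposition equalities \cref{prop:MCeq,prop:MCERweq}.

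The main obstacle is showing that the category is \emph{strong}, i.e.\ that every execution-formula family $(A\cdot B^{d}\cdot C)_{d}$ is actually summable, so that the trace is total rather than partial. The probability component is harmless: $[A\cdot B^{d}\cdot C]_{i,j}$ is the probability of entering the traced region, looping exactly $d$ times, and exiting to $j$, so its sum over $d$ is a number $\le 1$. The delicate point is finiteness of the reward component. Writing $P=[p_{k,k'}]_{k,k'\in\nset{l}}$ for the substochastic loop matrix, I would split on its spectral radius: if $\rho(P)<1$, then $\sum_{d}P^{d}$ and the weighted series $\sum_{d}d\,P^{d-1}$ coming from the off-diagonal reward block of $B$ both converge; if $\rho(P)=1$, a Perron--Frobenius / absorbing-chain argument isolates a closed recurrent class on which the rows of $P$ sum to $1$, whence subnormality forces the exit probabilities out of that class to vanish and realizability forces the corresponding exit rewards to vanish too, so exactly the terms that would diverge are multiplied by zero. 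This interplay between subnormality and realizability is what guarantees convergence for arbitrary abstract arrows—not only those realized by genuine Markov chains—and is the technical heart of the argument.
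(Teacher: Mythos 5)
Your proposal is correct, and its skeleton coincides with the paper's: part~2 by unfolding $\mcMnd(\nset{n})$ against \cref{def:semCatrmc} and matching \cref{eq:defprReMonad} with \cref{def:seqCompSemCatrmc}, and part~1 by exhibiting $\rest{\kleisli{\Sets}{\mcMnd}}{\Nat}$ as a strong unique decomposition category and invoking Hoshino's result (\cref{prop:hoshino}) to get the trace from the execution formula, exactly as in \cref{subsec:execution_formula}. Where you genuinely diverge is at the one step the paper treats most lightly: the totality of the sum in \eqref{eq:categoricalExecFormula}. The paper disposes of it in a sentence, saying reachability probabilities are bounded by $1$ and ``expected rewards do not diverge under our current setting (in particular \cref{def:openMDPReachProbExpCumRew})''---an appeal that implicitly leans on arrows being realized by actual roMCs, whereas arrows of $\semCatrmc$ are arbitrary pairs subject only to subnormality and realizability. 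Your spectral-radius argument supplies the intrinsic proof that the paper's remark elides, and you correctly locate realizability as the load-bearing hypothesis: indeed, dropping it admits a genuine counterexample (a loop state with $p_{1,1}=1$, exit probability $0$ but positive exit reward makes $A\cdot B^{d}\cdot C$ constant in $d$, so the series diverges). One small tightening is still needed in your $\rho(P)=1$ case: after observing that every term whose path meets a closed recurrent class vanishes (zero exit probability, hence zero exit reward by realizability), you should add that the residual substochastic matrix on the complement of the closed classes has spectral radius $<1$, while the reward block of $B^{d}$ grows only linearly in $d$, so the surviving series converges by geometric domination; ``exactly the terms that would diverge are multiplied by zero'' asserts this but does not prove it. With that sentence added, your argument is a more self-contained and more general justification of the summability than the paper's, at the cost of a page of Perron--Frobenius bookkeeping the paper chose to skip.
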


\subsection{Girard's Execution Formula and Its Categorical Ramifications}
\label{subsec:execution_formula}
We review some basic theory~\cite{girard1989geometry,haghverdiS06,hoshino12} of Girard's execution formula, its categorical formulation, and strong unique decomposition categories. An important implication of this theory is that the semantic category  $\semCatrmc$ is a TSMC (cf.\ \cref{def:semCatrmcAsTSMC}); this is shown by establishing that  $\semCatrmc$ is a strong  unique decomposition category.

We write $e \kli e'$ for the \emph{Kleene inequality}: i.e.,  $e \kli e'$ holds  if
whenever \(e\) is defined, so is \(e'\) and \(e\) equals \(e'\).
We write \(e \kle e'\) if \(e \kli e'\) and \(e \klg e'\).
Note that \(e = e'\) means that both \(e\) and \(e'\) are always defined and they are the same. 
We write \(\fary{X}\) (\(\iary{X}\) resp.) for the sets of indexed families \((x_i)_{i \in I}\) of elements in \(X\) whose indexing sets \(I\) are finite (infinite resp.) subsets of \(\Nat\).
A \emph{countable partition} of $I\subseteq \Nat$ is
a family ${(I_j)}_{j\in J}$ of pairwise disjoint subsets of \(\Nat\) indexed by $J\subseteq \Nat$ such that \(\cup_{j \in J} I_j = I\).

\begin{definition}[$\Sigma$-monoid]
A \emph{$\Sigma$-monoid} is a non-empty set $X$ with a partial map $\Sigma: \cary{X} \pto X$ satisfying the following conditions:
\begin{itemize}
    \item If I is a singleton \(\{n\}\), then $\Sigma (x_i)_{i \in I} \kle x_n$,
    \item If ${(I_j)}_{j\in J}$ is a countable partition of $I\subseteq \Nat$,
then for every $(x_i)_{i\in I} \in \cary{X}$, $\Sigma (x_i)_{i\in I} \kle \Sigma \big(\Sigma (x_{i})_{i\in I_j}\big)_{j\in J}$.
\end{itemize}
\end{definition}

Note that the former condition above implies that the left-hand-side \(\Sigma (x_i)_{i \in I}\) is always defined since so is the right-hand-side \(x_n\).
A family \((x_i)_{i \in I}\) is called \emph{summable} if \(\Sigma (x_i)_{i \in I}\) is defined.
By the latter condition above, any subfamily of a summable family is summable.
Especially, the empty family \((x_i)_{i \in \emptyset}\) is summable, and we write \(0 \defeq \Sigma (x_i)_{i \in \emptyset}\), which we call the \emph{zero element}.
We also note that the summation is ``commutative'' in the sense that \(\Sigma (x_i)_{i \in I} \kle \Sigma (x_{\theta(j)})_{j \in J}\) for any \(J \subseteq \Nat\) and bijection \(\theta : J \to I\).
Hence, for any countable family \((x_i)_{i \in I}\) on \(X\) where \(I\) is an arbitrary countable set (and is not necessarily a subset of \(\Nat\)), we can define \(\Sigma (x_i)_{i \in I}\) by \(\Sigma (x_i)_{i \in I} \kle \Sigma (x_{\theta(j)})_{j \in J}\) where we choose \(J \subseteq \Nat\) and a bijection \(\theta: J \to I\); the definition is independent of the choice.

\begin{definition}[$\Sigma$-category]
A \emph{$\Sigma$-category} is a category $\mathbb{C}$ equipped with a $\Sigma$-monoid structure on the homset $\mathbb{C}(X, Y)$ for each objects $X$ and $Y$, and it is further required to  satisfy the following condition.
\begin{itemize}
    \item For any $(f_i:X\rightarrow Y)_{i\in I}$, $(g_j:Y\rightarrow Z)_{j\in J}$, $
\Sigma(f_i)_{i\in I}\seqcomp\Sigma(g_j)_{j\in J}
\kli
\Sigma (f_i\seqcomp g_j)_{(i, j)\in I\times J} 
$ holds.
\end{itemize}
\end{definition}

It is easy to prove that the category $\semCatrmc$ is a $\Sigma$-category by defining the sum of arrows in a pointwise manner: the sum is defined if and only if all the summations converge (which are necessarily absolute convergences) and also the resulting function satisfies the condition for arrows in $\semCatrmc$ (in \cref{def:semCatrmc}).

\begin{definition}[zero arrow]
    Let $\mathbb{C}$ be a $\Sigma$-category. For any objects $X, Y$, the \emph{zero arrow} $\mathbf{0}_{X, Y}$ in $\mathbb{C}(X, Y)$ is the zero element of the $\Sigma$-monoid \(\mathbb{C}(X, Y)\).
\end{definition}

The zero arrow $\mathbf{0}_{m, n}\in \semCatrmc(m, n)$ is given by $\mathbf{0}_{m, n}(i, j) = \big((0)_{i, j}, (0)_{i, j}\big)$. 

By a \emph{symmetric monoidal $\Sigma$-category}, we mean a $\Sigma$-category equipped with a symmetric monoidal structure on its underlying category. (We do not require that the symmetric monoidal structure is compatible with the $\Sigma$-monoid structure on homsets.)
\begin{definition}[strong unique decomposition category]
A \emph{strong unique decomposition category} is a symmetric monoidal $\Sigma$-category  $(\mathbb{C},\otimes, I)$ such that 
\begin{itemize}
    \item the identity on the monoidal unit $I$ is equal to the zero arrow $\mathbf{0}_{I, I}$, and
    \item for any objects $X, Y$, $\id_X \otimes \mathbf{0}_{Y, Y} + \mathbf{0}_{X, X} \otimes \id_Y = \id_{X\otimes Y}$ holds.
\end{itemize}
\end{definition}

The $\Sigma$-category $\semCatrmc$ is a (strict) symmetric monoidal category $(\semCatrmc, 0, \oplus)$, and it is easy to prove that $\semCatrmc$  is a strong unique decomposition category. 

Any strong unique decomposition category \(\mathbb{C}\) has the following ``decomposition'' structure:
For \(X, Y \in \mathbb{C}\), we have \emph{quasi projections}
\[
 \rho^{1}_{X,Y} \defeq 
 X \otimes Y \xrightarrow{\id_{X} \otimes \mathbf{0}_{Y,I}}
 X \otimes I \xrightarrow{\cong} X
\qquad
 \rho^{2}_{X,Y} \defeq 
 X \otimes Y \xrightarrow{\mathbf{0}_{X,I} \otimes \id_{Y}}
 I \otimes Y \xrightarrow{\cong} Y ,
\]
and \emph{quasi injections}
\[
 \iota^{1}_{X,Y} \defeq 
 X \xrightarrow{\cong}
 X \otimes I \xrightarrow{\id_{X} \otimes \mathbf{0}_{I,Y}}
 X \otimes Y
\qquad
 \iota^{2}_{X,Y} \defeq 
 Y \xrightarrow{\cong}
 I \otimes Y \xrightarrow{\mathbf{0}_{I,X} \otimes \id_{Y}}
 X \otimes Y
 .
\]
Then, for any \(f : X_1 \otimes X_2 \to Y_1 \otimes Y_2\), we have
\(f_{i,j} : X_i \to Y_j\) (\(i,j = 1,2\)) by
\[
 f_{X_i,Y_j} \defeq \iota^{i}_{X_1,X_2} \seqcomp f \seqcomp \rho^{j}_{Y_1,Y_2}.
\]
We write \(f_{i,j}\) also as \(f_{X_i,Y_j}\) if there is no ambiguity.

Now, by the following result, we can conclude that $\semCatrmc$ is a TSMC. 
The formula~\eqref{eq:categoricalExecFormula} below comes from Girard's \emph{execution formula}~\cite{girard1989geometry}. 

\begin{proposition}[{\cite[Corollary 5.4]{hoshino12}}]\label{prop:hoshino}
Let $\mathbb{C}$ be a strong unique decomposition category. If the sum
\begin{equation}\label{eq:categoricalExecFormula}
 \trace{X}{Y}{Z}{}(f) \defeq f_{X,Y} + \Sigma\bigl(\,f_{X,Z}\seqcomp (f_{Z, Z})^{d}\seqcomp f_{Z, Y} \,\bigr)_{d\in \Nat} 
\end{equation}
is defined for any objects $X, Y, Z$ and any arrow $f:Z\otimes X\rightarrow Z\otimes Y$, then $\tr$ is a trace operator on $\mathbb{C}$.
\qed 
\end{proposition}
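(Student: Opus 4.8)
The plan is to verify directly that the execution formula \eqref{eq:categoricalExecFormula} obeys each of the traced symmetric monoidal axioms of \cref{fig:TSMCAxioms} (Vanishing1, Vanishing2, Superposing, Yanking, Naturality1, Naturality2, Dinaturality), working throughout in the ``matrix calculus'' that the strong unique decomposition structure provides. The key observation is that, via the quasi injections $\iota^{i}$ and quasi projections $\rho^{j}$, every arrow $f\colon A_1\otimes A_2\to B_1\otimes B_2$ is recovered from its components $f_{A_i,B_j}=\iota^{i}\seqcomp f\seqcomp\rho^{j}$ by $f=\Sigma_{i,j}\,\rho^{i}_{A_1,A_2}\seqcomp f_{A_i,B_j}\seqcomp\iota^{j}_{B_1,B_2}$, while $\id_{A_1\otimes A_2}=(\rho^{1}\seqcomp\iota^{1})+(\rho^{2}\seqcomp\iota^{2})$. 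Combined with the $\Sigma$-category axiom (composition distributes over summable families) and the defining identity $\id_{X}\otimes\mathbf{0}+\mathbf{0}\otimes\id_{Y}=\id_{X\otimes Y}$, this lets me treat arrows between tensors as matrices whose composition is matrix multiplication. Under this reading the execution formula is the Neumann series $\tr^{Z}(f)=f_{X,Y}+f_{X,Z}\seqcomp(f_{Z,Z})^{*}\seqcomp f_{Z,Y}$, where $(f_{Z,Z})^{*}\defeq\Sigma_{d\in\Nat}(f_{Z,Z})^{d}$, and the hypothesis guarantees that every such star (hence every instance of the formula) is defined.

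I would dispatch the structural axioms first, as each reduces to pushing $\seqcomp$ or $\otimes$ through the summation $\Sigma$. Vanishing1 is immediate: with $Z=I$ every feedback component factors through the monoidal unit and is therefore the zero arrow (since $\id_{I}=\mathbf{0}_{I,I}$ forces $\mathbb{C}(I,I)=\{\mathbf{0}\}$), so only $f_{X,Y}$ survives. Yanking follows by direct computation on $\swap{U}{U}$: its feedback component is the zero arrow, so $(f_{U,U})^{*}$ contributes only its $d=0$ term $\id_{U}$, and the formula collapses to $\id_{U}$. Naturality1 and Naturality2 (left/right tightening) amount to pre- or post-composing the whole series with an arrow that leaves the loop components untouched, and so follow from one application of distributivity of $\seqcomp$ over $\Sigma$. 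Superposing is analogous, using compatibility of $\otimes$ with $\Sigma$ together with the biproduct-style decomposition of $\id_{X\otimes Y}$.

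Dinaturality (sliding a morphism around the loop) reduces, after extracting components, to the Kleene-star identity $a\seqcomp(b\seqcomp a)^{*}=(a\seqcomp b)^{*}\seqcomp a$; I would prove this by distributing composition through $\Sigma_{d}$ and matching the $d$-th terms via the elementary identity $a\seqcomp(b\seqcomp a)^{d}=(a\seqcomp b)^{d}\seqcomp a$ (induction on $d$, associativity of $\seqcomp$), concluding by uniqueness of sums of equal families.

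The main obstacle is Vanishing2, which asserts that the Neumann series over a combined feedback object $Z\otimes W$ equals the iterated series obtained by tracing the two factors successively. Here I would partition each ``path'' counted by the combined $\Sigma_{d}$ according to its excursions through $W$ between successive visits to $Z$, thereby turning the single family indexed by $d\in\Nat$ into a doubly indexed family; the equality then follows from the partition (associativity) axiom of $\Sigma$-monoids, which is exactly what licenses this regrouping of an infinite sum. The genuine care needed throughout, and especially here, is summability bookkeeping: every rearrangement must be applied to a family that is already summable. This is where the global hypothesis (that \eqref{eq:categoricalExecFormula} is defined for all $f$) is used, together with the fact that any subfamily of a summable family is summable, so that each sub-sum produced by a partition is itself legitimate.
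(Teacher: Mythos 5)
Your proposal is correct in outline, but be aware that the paper itself offers no proof of this proposition: it is imported wholesale from~\cite{hoshino12}, where Corollary~5.4 is obtained as a consequence of a \emph{representation theorem} for strong unique decomposition categories (embedding them into a concrete category of $\Sigma$-monoids, where the execution formula can be analyzed), generalizing Haghverdi's earlier result for unique decomposition categories. Your blind attempt instead follows the classical \emph{direct} route, essentially Haghverdi's original argument: derive the matrix calculus $f=\Sigma_{i,j}\,\rho^{i}\seqcomp f_{i,j}\seqcomp\iota^{j}$ from the strong axiom $\id_{X}\otimes\mathbf{0}+\mathbf{0}\otimes\id_{Y}=\id_{X\otimes Y}$ together with distributivity of $\seqcomp$ over $\Sigma$, then verify the axioms of \cref{fig:TSMCAxioms} entrywise, with dinaturality reduced to the term-by-term identity $a\seqcomp(b\seqcomp a)^{d}=(a\seqcomp b)^{d}\seqcomp a$ and Vanishing2 handled by the partition axiom of $\Sigma$-monoids. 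This is a legitimate and more self-contained alternative; what the representation-theoretic proof buys is that the delicate summability bookkeeping (which you rightly identify as the crux, especially in Vanishing2) is transported once and for all to a concrete model, whereas your route must invoke the global definedness hypothesis, closure of summability under subfamilies, and the partition axiom at each regrouping.

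One repair is needed. For Superposing you invoke ``compatibility of $\otimes$ with $\Sigma$,'' but the definition of symmetric monoidal $\Sigma$-category used here \emph{explicitly does not require} the monoidal structure to be compatible with the sums on homsets, so this is not available as an axiom. The fix is to stay entrywise: two arrows with equal components coincide (by the reconstruction identity above); entries of a $\Sigma$-sum are the sums of entries, and entries of a composite are matrix products (insert $\id=\rho^{1}\seqcomp\iota^{1}+\rho^{2}\seqcomp\iota^{2}$ in the middle and distribute) --- all of which use only composition-distributivity. The off-diagonal entries of $g\oplus h$ then vanish because they factor through the unit $I$, and any arrow factoring through $I$ is $\mathbf{0}$ since $\id_{I}=\mathbf{0}_{I,I}$ and zero arrows are absorbing (the empty-family instance of the $\Sigma$-category axiom). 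Note that this same absorption fact is what your Vanishing1 argument tacitly uses to kill the $d=0$ term $f_{X,I}\seqcomp f_{I,Y}$, so it is worth stating once explicitly. With these local repairs, your sketch goes through.
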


The arrows $f_{X,Y}, f_{X, Z}, f_{Z, Z}, f_{Z, Y}$ have the following concrete descriptions in our semantic category $\semCatrmc$. 
For an arrow $f:l+m\rightarrow l+n
$, let $f(i, j) \defeq (p_{i, j}, r_{i, j})$ for each $i\in \nset{l+m}$ and $j\in \nset{l+n}$, then $f_{m, n}:m\rightarrow n$ is given by $f_{m, n}(i', j') = (p_{l+i',l+j'}, r_{l+i', l+j'})$ for each $i'\in \nset{m}$ and $j'\in \nset{n}$. The others are described similarly. 

The sum on the right-hand-side of~\eqref{eq:categoricalExecFormula} is  always defined in the semantic category $\semCatrmc$, as expected. This is because, otherwise, we would have either some reachability probability or expected reward diverge to infinity; but reachability probabilities are easily seen to be bounded by $1$;  and expected rewards do not diverge under our current setting (in particular \cref{def:openMDPReachProbExpCumRew}). Therefore, by \cref{prop:hoshino}, we conclude that the category $\semCatrmc$ has the trace operator.

\subsection{The \emph{Change of Base} Construction}
\label{subsec:change_of_base}
The 
 general theory of change of base is built on the theory of 
\emph{enriched categories}~\cite{kelly1982basic} (see e.g.~\cite{cruttwell2008normed}); the latter is out of the scope of the paper. Instead, we exhibit an instance of the general theory, restricting to a particular functor (namely the powerset functor $\pMnd\colon \Sets\to\Sets$).

\begin{definition}[the powerset functor $\pMnd$]
The \emph{powerset functor} $\pMnd$ is an endofunctor on the category $\Sets$ of sets, mapping a set $X$ to the set $\pMnd(X)$ of its subsets, and a function $f:X\rightarrow Y$ to the function $\pMnd(f): \pMnd(X) \rightarrow \pMnd(Y)$, where $\pMnd(f)(S) \defeq \{ f(s) \mid s\in S\}$ for each $S\in \pMnd(X)$. 
\end{definition}

The powerset functor satisfies the following property, which makes it a \emph{lax monoidal} functor. 

\begin{proposition}[$\pMnd$ is lax monoidal]
    For any sets $X, Y$, let $\mu_{X, Y}:\pMnd(X)\times \pMnd(Y) \rightarrow \pMnd(X\times Y)$ be the function given by $\mu_{X, Y}(S, T) \defeq \{(s, t)\mid s\in S,\, t\in T\}$.
    In addition, we define $\epsilon:\mathbf{1}\rightarrow \pMnd(\mathbf{1})$ by $\epsilon(\star) \defeq \{\star\}$, where $\mathbf{1}$ is the singleton set $\{\star\}$. 
    Then, the following diagrams commute.
    \begin{description}
        \item[The associative law] For any sets $X, Y, Z$,
       \begin{equation*}
        \begin{tikzcd}
        (\pMnd(X)\times \pMnd(Y))\times \pMnd(Z) \arrow[r, "\cong"] \arrow[d, "\mu_{X, Y}\times id_{\pMnd(Z)}"]&  \pMnd(X)\times (\pMnd(Y)\times \pMnd(Z)) \arrow[d, "\id_{\pMnd(X)}\times \mu_{Y, Z}"]\\
        \pMnd(X\times Y) \times \pMnd(Z) \arrow[d, "\mu_{X\times Y, Z}"]& \pMnd(X)\times \pMnd(Y\times Z) \arrow[d, "\mu_{X, Y\times Z}"]\\
        \pMnd((X\times Y)\times Z)\arrow[r, "\pMnd(\cong)"] & \pMnd(X\times (Y\times Z))
        \end{tikzcd}             
       \end{equation*} 
       \item[The unit law] For each set $X$, 
       \begin{equation*}
        \begin{tikzcd}
        \mathbf{1}\times\pMnd(X)\arrow[r, "\epsilon\times \id_{\pMnd(X)}"] \arrow[d, "\cong"]&  \pMnd(\mathbf{1})\times\pMnd(X)\arrow[d,"\mu_{\mathbf{1}, X}"]\\
        \pMnd(X) & \pMnd(\mathbf{1}\times X) \arrow[l, "\pMnd(\cong)"]
        \end{tikzcd}             
       \end{equation*}
       \begin{equation*}
        \begin{tikzcd}
        \pMnd(X)\times \mathbf{1}\arrow[r, "\id_{\pMnd(X)}\times \epsilon"] \arrow[d, "\cong"]&  \pMnd(X)\times \pMnd(\mathbf{1})\arrow[d,"\mu_{X,\mathbf{1}}"]\\
        \pMnd(X) & \pMnd(X\times \mathbf{1}) \arrow[l, "\pMnd(\cong)"]
        \end{tikzcd}             
       \end{equation*}

   Here $\cong$ represent canonical bijections.
    \end{description}
    \qed 
\end{proposition}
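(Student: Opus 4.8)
The plan is to verify both coherence diagrams by direct element-chasing, exploiting that every arrow in sight is an explicitly given function between sets and that $\pMnd$ sends a function to its direct-image map. The one fact that drives everything is that $\mu_{X,Y}$ produces the \emph{Cartesian product of subsets}, $\mu_{X,Y}(S,T) = S \times T$, together with the observation that $\pMnd$ applied to a bijection $h$ acts as $\pMnd(h)(V) = \{h(v) \mid v \in V\}$. Since the canonical associator $\cong\colon (X \times Y) \times Z \to X \times (Y \times Z)$ sends $((x,y),z) \mapsto (x,(y,z))$, its direct image carries $(S \times T) \times U$ exactly to $S \times (T \times U)$; this is the key computation underlying the whole proof.

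First I would handle the associative law. I fix $S \in \pMnd(X)$, $T \in \pMnd(Y)$, $U \in \pMnd(Z)$ and push the element $((S,T),U)$ along both legs of the square. The left-then-bottom leg yields $\pMnd(\cong)\big(\mu_{X \times Y, Z}(\mu_{X,Y}(S,T), U)\big) = \pMnd(\cong)\big((S \times T) \times U\big)$, which by the key computation above equals $S \times (T \times U)$. The top-then-right leg yields $\mu_{X, Y \times Z}\big(S, \mu_{Y,Z}(T,U)\big) = S \times (T \times U)$ after transporting across the top associator, which merely reassociates the triple $(S,T,U)$. The two results coincide, so the square commutes.

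Next I would treat the two unit laws, which are even shorter. For the left unitor, I chase $(\star, S) \in \mathbf{1} \times \pMnd(X)$: along the right-then-bottom leg one obtains $\pMnd(\cong)\big(\mu_{\mathbf{1}, X}(\epsilon(\star), S)\big) = \pMnd(\cong)(\{\star\} \times S) = S$, using $\epsilon(\star) = \{\star\}$ and that the set-level left unitor strips the $\star$; along the left leg the isomorphism $\mathbf{1} \times \pMnd(X) \cong \pMnd(X)$ sends $(\star, S)$ directly to $S$. The right unitor is entirely symmetric, chasing $(S, \star)$ instead.

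I expect no genuine obstacle here: the statement is a routine coherence check, and the only bookkeeping to watch is keeping the canonical isomorphisms straight and confirming that $\pMnd$ of an isomorphism is the elementwise direct image that respects the product structure. Conceptually the result is an instance of the general fact that the powerset monad is commutative, hence lax symmetric monoidal; but since the rest of the paper favors concrete definitions, the element-chase sketched above is the most transparent route.
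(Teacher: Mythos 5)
Your element-chase is correct: the only real content is that $\mu$ forms Cartesian products of subsets and that $\pMnd$ of the canonical bijections acts as elementwise direct image, so both legs of each diagram send $((S,T),U)$ (resp.\ $(\star,S)$, $(S,\star)$) to the same subset. The paper itself offers no proof---it closes the proposition with \qed as a routine verification---and your computation is exactly the routine check being implicitly invoked, so there is nothing to compare beyond noting that your conceptual aside (commutativity of the powerset monad) is a valid more abstract route the paper likewise does not take.
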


The following describes the change of the base construction~\cite{Eilenberg65}, restricting to  the powerset functor $\pMnd$. 
\begin{definition}[change of base]
\label{def:change_of_base}
Let $\mathbb{C}$ be a locally small (i.e.\ $\Sets$-enriched) category. The \emph{change of base} $\pMnd_{\star}(\mathbb{C})$ of $\mathbb{C}$ by $\pMnd$ is the category whose object is that of $\mathbb{C}$, and whose arrow $F:X\rightarrow Y$ is a set of arrows of $\mathbb{C}$, i.e., $F\in \pMnd(\mathbb{C}(X, Y))$. For each object $X$, the identity $\id_{X}\colon X\to X$ in $\pMnd_{\star}(\mathbb{C})$ is given by $\{\id_X\}$, where $\id_X\colon X\to X$ is the identity in $\mathbb{C}$. For any arrows $F:X\rightarrow Y$ and $G:Y\rightarrow Z$, their (sequential) composition $F\seqcomp G$ in $\pMnd_{\star}(\mathbb{C})$ is given by $\{f\seqcomp g\mid f\in F,\, g\in G\}$, where $f\seqcomp g$ is the sequential composition in $\mathbb{C}$.
\end{definition}

Let $\mathbb{C}$ be a locally small category, and $X, Y, Z\in \mathbb{C}$ be objects. 
Then 
the identity $\id_X$  can be identified with a (set-theoretic) function $j_X: \mathbf{1}\rightarrow \mathbb{C}(X, X)$, via $j_X(\star) \defeq \id_X $. Similarly,  the sequential composition $\seqcomp_{X, Y, Z}$ of arrows of suitable types can be identified with a function  $\seqcomp_{X, Y, Z}:\mathbb{C}(X, Y)\times \mathbb{C}(Y, Z)\rightarrow \mathbb{C}(X, Z)$. 
Then, the change of base construction by the powerset functor $\pMnd$---the definitions of $\id$ and $\seqcomp$, in particular---can be described using the lax monoidal structure $\mu$ of $\pMnd$, as illustrated in the following diagrams.
\begin{equation*}
    \begin{tikzcd}
        \pMnd(\mathbb{C}(X, Y))\times \pMnd(\mathbb{C}(Y, Z))\arrow[d, "\pMnd(\mu_{\mathbb{C}(X, Y), \mathbb{C}(Y, Z)})",swap] \arrow[rd, "\seqcomp^{\pMnd_{\star}(\mathbb{C})}"]\\
        \pMnd(\mathbb{C}(X, Y)\times\mathbb{C}(Y, Z)) \arrow[r, "\pMnd(\seqcomp^{\mathbb{C}})"]& \pMnd(\mathbb{C}(X, Z))
    \end{tikzcd}
\end{equation*}

\begin{equation*}
    \begin{tikzcd}
        \mathbf{1}\arrow[d, "\epsilon",swap] \arrow[rd, "j^{\pMnd_{\star}(\mathbb{C})}_X"]\\
        \pMnd(\mathbf{1}) \arrow[r, "\pMnd(j^{\mathbb{C}}_X)"]& \pMnd(\mathbb{C}(X, X))
    \end{tikzcd}
\end{equation*}

Some properties of $\pMnd$ as a lax monoidal functor are used for proving the following proposition. 
\begin{proposition}
    The unit and sequential composition operators defined in~\cref{def:change_of_base} satisfy the equational axioms of categories. That is, the unit and associative laws hold. \qed
\end{proposition}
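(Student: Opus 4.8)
The plan is to verify the two defining axioms of a category for $\pMnd_{\star}(\mathbb{C})$, namely the unit laws $\{\id_X\}\seqcomp F = F = F\seqcomp\{\id_Y\}$ for every arrow $F\colon X\to Y$, and associativity $(F\seqcomp G)\seqcomp H = F\seqcomp(G\seqcomp H)$. Following the framing of~\cref{def:change_of_base}, I would carry out both verifications abstractly, exhibiting the unit $j^{\pMnd_{\star}(\mathbb{C})}_X$ and the composition $\seqcomp^{\pMnd_{\star}(\mathbb{C})}$ as the composites displayed in the two diagrams preceding the statement, and then reducing each category axiom to (i) the corresponding axiom in $\mathbb{C}$, expressed as an equation between the functions $j^{\mathbb{C}}_X$ and $\seqcomp^{\mathbb{C}}$, and (ii) the coherence laws of $\pMnd$ as a lax monoidal functor, i.e.\ the associative and unit law diagrams for $\mu$ and $\epsilon$.

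For the unit laws, the key input is the unit-law diagram for $(\mu,\epsilon)$ together with functoriality of $\pMnd$. Concretely, the composite defining $\{\id_X\}\seqcomp F$ factors as $\epsilon\times\id$ followed by $\mu$ and then $\pMnd(\seqcomp^{\mathbb{C}})$; feeding in the unit-law square for $\mu$ rewrites this as $\pMnd$ applied to the function $(\star,f)\mapsto \id_X\seqcomp f$, and the left unit law of $\mathbb{C}$ (that $j^{\mathbb{C}}_X$ is a two-sided unit for $\seqcomp^{\mathbb{C}}$) collapses this to $\pMnd$ of the projection, which is exactly the identity on $F$. The right unit law is entirely symmetric.

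For associativity, I would chase the diagram built from the associative-law square for $\mu$ and from $\pMnd$ applied to the associativity equation of $\seqcomp^{\mathbb{C}}$. Both $(F\seqcomp G)\seqcomp H$ and $F\seqcomp(G\seqcomp H)$ unfold into composites of $\mu$'s (in two different bracketings) followed by a single application of $\pMnd(\seqcomp^{\mathbb{C}}\times\id)$ or $\pMnd(\id\times\seqcomp^{\mathbb{C}})$ and a final $\pMnd(\seqcomp^{\mathbb{C}})$; the associative-law diagram for $\mu$ reconciles the two bracketings of $\mu$, while functoriality of $\pMnd$ transports the associativity of $\seqcomp^{\mathbb{C}}$ through. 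I expect this diagram chase to be the main obstacle: the bookkeeping of the associators and of the interleaving between the lax structure maps and $\pMnd(\seqcomp^{\mathbb{C}})$ is where all the real content sits, even though each individual square commutes for a stated reason.

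Finally, I note that these same two statements admit a one-line elementwise proof from the concrete description of composition: $\{\id_X\}\seqcomp F = \{\,\id_X\seqcomp f \mid f\in F\,\} = F$ by the unit law in $\mathbb{C}$, and $(F\seqcomp G)\seqcomp H = \{\,(f\seqcomp g)\seqcomp h\,\} = \{\,f\seqcomp(g\seqcomp h)\,\} = F\seqcomp(G\seqcomp H)$ by associativity in $\mathbb{C}$. This routine verification already suffices; the diagrammatic argument above is preferable only in that it makes transparent which properties of $\pMnd$ are used, and hence generalizes verbatim to any lax monoidal functor.
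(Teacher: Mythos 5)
Your proposal is correct and matches the paper's intended argument: the paper omits the details, stating only that ``some properties of $\pMnd$ as a lax monoidal functor are used,'' which is precisely the diagram chase via the associative and unit laws for $(\mu,\epsilon)$ together with functoriality of $\pMnd$ that you carry out. Your closing elementwise verification is also valid and, for this concrete powerset instance, is the quickest way to see the result, while your abstract version correctly identifies why the statement generalizes to any lax monoidal functor.
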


\else
\fi

\end{document}